\newtheorem{theorem}{Theorem}
\newtheorem{definition}[theorem]{Definition}
\newtheorem{fact}[theorem]{Fact}
\newtheorem{lemma}[theorem]{Lemma}
\newtheorem{proposition}[theorem]{Proposition}
\newtheorem{remark}[theorem]{Remark}
\newcommand{\eps}{\varepsilon}
\newcommand{\sv}{\mathbf{s}}
\newcommand{\rv}{\mathbf{r}}
\newcommand{\xv}{\mathbf{x}}
\newcommand{\fv}{\mathbf{f}}
\newcommand{\gv}{\mathbf{g}}
\newcommand{\av}{\mathbf{a}}
\newcommand{\bv}{\mathbf{b}}
\newcommand{\tv}{\mathbf{t}}
\newcommand{\lamv}{{\boldsymbol \lambda}}
\newcommand{\gamv}{{\boldsymbol \gamma}}
\newcommand{\sigv}{{\boldsymbol \sigma}}
\newcommand{\nc}{\newcommand}
\newcommand{\dio}[1]{#1}
\nc{\Rank}{\operatorname{Rank}}
\nc{\smfrac}[2]{\mbox{$\frac{#1}{#2}$}}
\nc{\tr}{\operatorname{Tr}}
\nc{\Tr}{\operatorname{Tr}}
\nc{\ox}{\otimes}
\nc{\dg}{\dagger}
\nc{\dn}{\downarrow}
\nc{\cA}{{\cal A}}
\nc{\cB}{{\cal B}}
\nc{\cC}{{\cal C}}
\nc{\cD}{{\cal D}}
\nc{\cE}{{\cal E}}
\nc{\cF}{{\cal F}}
\nc{\cG}{{\cal G}}
\nc{\cH}{{\cal H}}
\nc{\cI}{{\cal I}}
\nc{\cJ}{{\cal J}}
\nc{\cK}{{\cal K}}
\nc{\cL}{{\cal L}}
\nc{\cM}{{\cal M}}
\nc{\cN}{{\cal N}}
\nc{\cO}{{\cal O}}
\nc{\cP}{{\cal P}}
\nc{\cQ}{{\cal Q}}
\nc{\cR}{{\cal R}}
\nc{\cS}{{\cal S}}
\nc{\cT}{{\cal T}}
\nc{\cX}{{\cal X}}
\nc{\cY}{{\cal Y}}
\nc{\cZ}{{\cal Z}}
\nc{\csupp}{{\operatorname{csupp}}}
\nc{\qsupp}{{\operatorname{qsupp}}}
\nc{\var}{{\operatorname{var}}}
\nc{\rar}{\rightarrow}
\nc{\lrar}{\longrightarrow}
\nc{\polylog}{{\operatorname{polylog}}}
\nc{\wt}{{\operatorname{wt}}}
\nc{\avg}[1]{{\left\langle {#1} \right\rangle}}
\def\a{\alpha}
\def\b{\beta}
\def\g{\gamma}
\def\d{\delta}
\nc{\RR}{{{\mathbb R}}}
\nc{\CC}{{{\mathbb C}}}
\nc{\FF}{{{\mathbb F}}}
\nc{\NN}{{{\mathbb N}}}
\nc{\ZZ}{{{\mathbb Z}}}
\nc{\PP}{{{\mathbb P}}}
\nc{\QQ}{{{\mathbb Q}}}
\nc{\UU}{{{\mathbb U}}}
\nc{\EE}{{{\mathbb E}}}
\nc{\id}{{\operatorname{\tt id}}}
\newcommand{\ket}[1]{|#1\rangle}
\newcommand{\bra}[1]{\langle#1|}
\nc{\ketbra}[2]{|#1\rangle\!\langle#2|}
\nc{\proj}[1]{| #1\rangle\!\langle #1 |}
\newcommand{\COMMEQ}{\mathtt{Comm}}
\newcommand{\BIEQ}{\mathtt{B.I.}}
\newcommand{\CORREQ}{\mathtt{Corr}}
\newcommand{\NEQ}{\mathtt{Nash}}
\newcommand{\QUEQ}{\mathtt{Quantum}}
\newcommand{\SW}{\mathtt{SW}}
\newcommand{\SPO}{\mathtt{SPO}}
\title{Belief-Invariant and Quantum Equilibria \\ in Games of Incomplete Information}
\author{Vincenzo Auletta\thanks{Dipartimento di Ingegneria dell'Informazione ed Elettrica e Matematica applicata (DIEM), Universit\`a degli Studi di Salerno, Italy. Email: {\tt auletta@unisa.it}}, Diodato Ferraioli\thanks{Dipartimento di Ingegneria dell'Informazione ed Elettrica e Matematica applicata (DIEM), Universit\`a degli Studi di Salerno, Italy. Email: {\tt dferraioli@unisa.it}}, Ashutosh Rai\thanks{Institute of Physics, Slovak Academy of Sciences, Bratislava, Slovakia. Email: {\tt ashutosh.rai@savba.sk}},\\ Giannicola Scarpa\thanks{
Escuela T\'ecnica Superior de Ingenier\'ia de Sistemas Inform\'aticos (ETSISI), Universidad Polit\'ecnica de Madrid, Spain. Email: {\tt g.scarpa@upm.es}}, Andreas Winter\thanks{ICREA and Departament de F\'{\i}sica: Grup d'Informaci\'{o} Qu\`{a}ntica, Universitat Aut\`{o}noma de Barcelona, Spain. Email: {\tt andreas.winter@uab.cat}}}
\date{\today}
\begin{document}

\raggedbottom

\maketitle

\begin{abstract}

Drawing on ideas from game theory and quantum physics, we investigate
nonlocal correlations from the point of view of equilibria in games of
incomplete information. These equilibria can be classified in decreasing
power as general \emph{communication equilibria},
\emph{belief-invariant equilibria}
and \textit{correlated equilibria},
all of which contain the familiar \emph{Nash equilibria}.

The notion of belief-invariant equilibrium appeared in game theory in the 90s. However, the class of \emph{non-signalling correlations}
associated to belief-invariance arose naturally already in the 80s
in the foundations of quantum mechanics.

In the present work, we explain and unify these two origins of the idea
and study the above classes of equilibria, together with \emph{quantum correlated
equilibria}, using tools from quantum information but the language of (algorithmic) game theory.
We present a general framework of belief-invariant communication equilibria,
which contains correlated equilibria and quantum correlated equilibria as
special cases.
Our framework also contains the theory of Bell inequalities and their violations due to
non-locality, which is a question of intense interest in the foundations of
quantum mechanics, and it was indeed the original motivation for the aforementioned
studies. Moreover, in our framework we can also model quantum games where players have conflicting interests,
a recent developing topic in physics.

We then use our framework to show new results related to the
\textit{social welfare} of equilibria. Namely,
we exhibit a game where belief-invariance is socially better than any
correlated equilibrium, and a game where all non-belief-invariant communication
equilibria have a suboptimal social welfare.
We also show that optimal social welfare can in certain cases be achieved by quantum mechanical correlations, which do not need an informed mediator to be implemented, and go beyond the classical  ``sunspot" or shared randomness approach.


\textbf{Keywords:} Bayesian Games; Belief Invariant Equilibria; Quantum Correlated Equilibria; Social Welfare; Privacy and Cryptography
\end{abstract}

\newpage

\section{Introduction}
\label{sec:intro}
The notion of equilibrium of a strategic game and the mathematical formulation of rational
behaviour in situations of conflict are among the most fruitful ideas
in the history of economics, and duly became a cornerstone of any modern discussion on the subject.

The topic was initiated by the classic
treatment of von Neumann and Morgenstern~\cite{vNeumann-Morgenstern},
where it was realized that in the realm of mixed
strategies there is always a minimax equilibrium for zero-sum games. Another milestone was the definition of
\emph{Nash equilibrium}~\cite{Nash} and Nash's proof that in mixed strategies
there always exists one. These pioneering results were followed by a multitude of further investigations
into other concepts of equilibrium and their properties, including the
question of how the players, knowing the game, can compute an equilibrium \cite{aumann1974subjectivity,Forges93,lemke1964equilibrium,daskalakis2009complexity,Chen:2009}.
Motivated, among other things, by the realization that Nash equilibria
sometimes can be ``bad'' both individually and collectively for the
players, a major direction in game theory is the question of how to induce
players, or help them, toward a more beneficial equilibrium \cite{fudenberg1991game,opac-b1123911,NisaRougTardVazi07}. One important idea in this line of research
is that giving the players an \emph{advice}, in the form of a random variable,
can change the landscape of equilibria. This
generalizes the concept of Nash equilibrium to \emph{correlated equilibria} \cite{aumann1974subjectivity}.

The present paper is about advice in the setting of \textit{games of incomplete information}.
As it turns out, this is a subject of considerable complexity,
since the kind of correlation that can serve as advice to the players
can be far more general than in the case of complete information.
In games of incomplete information, or \emph{Bayesian games},
each player has a \emph{type} which is not perfectly known to, but only estimated by, the other players.
Depending on what the game models, a type can be many things. For example, it can represent a characteristic of the player (strong, weak, rich, poor, etc.) or a secret objective of the player (interest in one particular outcome).
For this class of games, a relevant solution concept is
the \emph{communication equilibrium}~\cite{Forges82}. Here, the players privately communicate their type
to a mediator, who implements a correlation and gives each player advice for a convenient action.
It is reasonable to assume that players are comfortable with revealing their private information
to a trusted mediator if this gives them an advantage. However, there are situations where
it might be crucial for players never to reveal any private information to the other
players (e.g., trade secrets).
In game theory, this concept has been noted before when discussing correlations.
In \cite{Forges93,Forges06}  it is called  ``conditional independence property'' or ``belief-invariance'',
while in \cite{Lehrer09} it is called ``non-communicating garbling'' and in \cite{Liu} is called ``individually uninformativeness''. In all cases above,
however, this property is used to make the analysis of the equilibria more convenient,
and is not highlighted as interesting in its own right.

From a completely different angle, belief-invariance has been a topic of research
in physics (motivated by questions in the foundations of quantum mechanics \cite{PR, Tsilerson, Barrett, masanes})
and theoretical computer science (motivated by multi-prover interactive
proof systems~\cite{KRR} and parallel repetition of games~\cite{buhrman2014parallel,FriedmanRV16,lancien2015parallel}),
under the name of \textit{non-signalling correlations}. In these investigations, belief-invariance is not used for making the analysis more convenient, but it is relevant in itself: it describes the largest class of correlations that obey relativistic causality.

\paragraph{Our contribution.}
In this work, we take the viewpoint currently adopted in physics and theoretical computer science literature, and we apply it to game theory. 
That is, we give a general picture of \emph{non-locality} as a resource:
this allows us to re-define the concepts of correlated, communication and \emph{belief-invariant equilibria} within a unified framework.
We remark that a similar approach has been recently taken in \cite{bergemann2016bayes}: as we show next, our framework turns out to be more ``flexible'', since it can be also adapted to include other classes of equilibria, such as \emph{quantum correlated equilibria}. The non-locality resource also suggests new reseach directions that may be of extreme interest for the community of algorithmic game theory: from the evaluation of complexity of the equilibria concepts discussed in this work, to the assessment of the performance of these equilibria, in a ``Price of Anarchy'' fashion. Interestingly, it turns out that the equilibria concepts discussed here may also be of interest and applications in other fields, as cryptography and privacy (see Appendix~\ref{sec:app}).

Specifically, in this work we first create the above mentioned framework: we give detailed definitions and rigorously prove some useful facts. The intention here is to provide a reference for future work, a unified review of the existing concepts with a focus on coherent and complete presentation that facilitates comparison. As mentioned above, the treatment of belief-invariance has a different viewpoint with respect to previous literature,
and this novel approach is essential in the rest of the paper.
In Appendix~\ref{apx:discuss} we also discuss here the complexity of equilibria defined in this framework and the relationship of them with cryptography and privacy.

Next, we put the framework to use by studying how the non-locality resources affect the performance of games.
In particular, we focus on \emph{social welfare} as a measure of such performances.
Our analysis highlights that belief-invariant equilibria can be socially better than both correlated equilibria and non-belief-invariant communication equilibria.
This sheds new lights on the concept of belief-invariant equilibria:
they are not just a technical tool for the computation of equilibria or a more private subclass of communication equilibria.
They are an equilibrium concept of social relevance
that call for a further and deeper investigation of its properties. 

Then, we extend our framework by considering a class of probability distributions based on \emph{quantum mechanical effects}. We include in the framework a subclass of the belief-invariant correlations that gives rise to the interesting concept of ``fully private correlations'':
correlations that reveal the type of a player neither to the other players nor to the mediator, and still can achieve equilibrium outcomes that seem to be impossible without an informed mediator.  This part of the paper shows the flexibility of our framework: it can accomodate all kinds of correlations, and in the future might be useful to study equilibria with other desired properties.
The extension is discussed separately, because we need to introduce mathematical tools from quantum mechanics. The use of such tools is essential for the complete and correct modelling of the quantum equilibria, which fit in our framework in a straightforward but at first counter-intuitive way, as discussed in Section \ref{sec:qframework}.
We also  continue our study of social welfare, by proving that there are games for which the quantum equilibria have better social welfare
than the ones that may be achieved without quantum effects.
The use of quantum mechanical correlations in game theory has been studied in many forms and flavours,
as can be seen from the survey \cite{Guo08}, the references therein, and the more recent works \cite{Brunner13,Pappa15,PhysRevA.96.042340}.
However, to the best of our knowledge, this work is the first to unify communication, belief-invariant, quantum and correlated equilibria.

Finally, we conclude by describing a few research directions that we believe may be of interest for the theoretical computer science and the game theory community.

As described above, in this work we bring together the strands of thought coming from the different backgrounds: game theory, physics, and theoretical computer science.
The interdisciplinarity leaves us with the problem of choosing the language in which to formulate our
results, as we might end up not reaching anyone from either side of the
discipline divide. We have chosen to use a language familiar to (algorithmic) game theorists,
but hopefully not too far from the one used in physics.
We decided to use the language of (algorithmic) game theory when discussing
relevant examples from physics \cite{CHSH, Greenberger90, Brunner13, Pappa15}.
We introduce some physical language only when we talk about quantum equilibria, in a stand-alone section.
However, even in this case, we give an introduction to some of the basic mathematical notions and notations of quantum mechanics,
trying to be accessible and consistent with the previous sections.
We hope this effort will make physics literature more accessible to game theorists and computer scientists and stimulate further research.

\paragraph{Organization of the paper.}
In Section \ref{sec:prel} we define the objects we are
working on: first games of incomplete information and then various classes of correlations.
In Section \ref{sec:equilibria} we introduce our framework. We redefine within this framework
the classes of communication, correlated, and belief-invariance equilibria.
In Section \ref{sec:examples}, we discuss how the different kinds of equilibria affect the social welfare,
and we show that in some cases belief-invariant equilibria reach a better social outcome.
No quantum physics is involved in the sections above; only in Section \ref{sec:quantum}
we broaden the discussion to include the quantum case, and we present it in
a way that makes it fit into the framework we have developed up to that point.
Finally, we will discuss some open problems in the conclusions section.
Even if our motivations  in this work are principally theoretical and conceptual, we are convinced that the concepts discussed in this paper
can also have practical relevance: indeed, in Appendix~\ref{sec:app},
we will discuss some possible applications.

\section{Preliminaries}
\label{sec:prel}
In this section we define the basic concepts we need to discuss our classes of
equilibria. First we will define games of incomplete information and their strategies.
Later, we will define the notion of correlation and the classes of
probability distributions we need.

\subsection{Games with incomplete information}
\label{sec:games_def}
In this section we briefly introduce our notation for some basic concepts in Game Theory.

A \emph{game with incomplete information} $G$ is defined by the following objects:
\begin{itemize}
\item A finite set of \emph{players} $N$, of size $n$, usually $N=[n]$;
\item A finite set of \emph{type profiles} $T := T_1 \times \cdots \times T_n$;
\item A finite set of \emph{action profiles} $A := A_1 \times \cdots \times A_n$;
\item A prior probability distribution $P(\tv)$ on the type profiles $\tv\in T$;
\item For each player $i \in N$, a \emph{payoff function} $v_i \colon T \times A \rightarrow \mathbb{R}$.
\end{itemize}

In a game of incomplete information, the behaviour of the players is modelled as follows. A \emph{strategy} $g_i$ for the player $i$ is a map
from the information that $i$ knows to an action $a_i \in A_i$.
In the absence of any correlation or external advice, players can apply pure strategies or mixed ones.
A \emph{pure strategy} for player $i$ is a map $g_i \colon T_i \rightarrow A_i$,
meaning that players select an action based only on their type.
A \emph{mixed strategy} for player $i$ is a probability distribution
over pure ones, i.e. the function $g_i \colon T_i \rightarrow A_i$
becomes a random variable. If we want to make its distribution explicit,
we introduce the independent local random variables $\lambda_i$ with probability $\Lambda_i(\lambda_i)$
and we set $g_i = g_{i,\lambda_i} = g_i(\cdot,\lambda_i)$.
This random function describes a conditional probability distribution on
$A_i$ given $T_i$, denoted by slight abuse of notation as $g_i(a_i \mid t_i)$.

The game goes as follows. The types $\tv=(t_1,\ldots,t_n)$ are sampled according to $P$.
Each player $i$ learns his type $t_i$\footnote{\dio{We assume w.l.o.g. that each type $t_i \in T_i$ has non-zero probability to be observed, i.e., there is $\tv_{-i}$ such that $P(t_i, \tv_{-i}) > 0$.}}, uses his strategy $g_i$ to select an action $a_i \in A_i$,
and is awarded according to his payoff function $v_i$
(which in general depends also on the other players' actions and types).
Hence, the expected utility of player $i$ \dio{who observed type $t_i$} is:
\begin{equation}\begin{split}
  \label{eq:payoff}
  \avg{v_{i\dio{,t_i}}(\gv)} &= \EE_{\dio{\tv_{-i} \mid t_i},\gv} v_i\bigl(\tv,g_1(t_1),\ldots, g_n(t_n)\bigr)
             = \sum_{\dio{\tv_{-i}},\av} P(\dio{\tv_{-i} \mid t_i}) v_i(\tv,\av) \prod_{i=1}^n g_i(a_i \mid t_i),
\end{split}\end{equation}
where $\gv = (g_1, \ldots, g_n)$ and $\av = (a_1, \ldots, a_n)$.
Observe that if $P$ is a point mass on a fixed type
$t_0$, $P(t_0)=1$, we recover the usual games of complete information.

A game is called \emph{full coordination game} if all the payoff functions are equal,
i.e., all players are interested in the same outcome.
On the other hand, we talk about
a \emph{game of conflicting interests} if there exists a type profile $\tv$, action profiles $\av\neq \bv$, and players $i,j$ so that $\av$ maximizes the utility of $i$ given $\tv$, and $\bv$ maximizes the utility of $j$ given $\tv$. In other words, the players can be interested in different outcomes.

A \emph{solution} for a game is a family of strategies $\gv = (g_1, \ldots, g_n)$, one for each player.
A solution is then said to be an \emph{equilibrium} (more precisely a
\emph{Nash equilibrium})\footnote{\dio{This is also known in literature as interim equilibrium, in order to distinguish it from ex-ante equilibrium, where players compute their expected utility before knowing their types.}} if no player has an incentive to change the adopted strategy.
In the basic, uncorrelated, case, this means that
\begin{equation}
  \avg{v_{i\dio{,t_i}}(\gv)} =    \EE_{\dio{\tv_{-i} \mid t_i},\gv} v_i\bigl( \tv,\gv_{-i}(\tv), g_i(t_i) \bigr)
            \geq \EE_{\dio{\tv_{-i} \mid t_i},(\gv_{-i},\chi_i)} v_i\bigl( \tv,\gv_{-i}(\tv_{-i}), \chi_i(t_i) \bigr) = \avg{v_{i\dio{,t_i}}(\gv_{-i},\chi_i)},
\end{equation}
for all $i \in N$, $t_i \in T_i$ and $\chi_i \in A_i^{T_i}$.
This can be expressed more concisely as saying that for all $i$, $t_i$ and $a_i$,
\begin{equation}
  \sum_{\tv_{-i},\lamv} P(\dio{\tv_{-i} \mid t_i}) \Lambda(\lamv) v_i(\tv,\gv_{-i}(\tv_{-i},\lamv_{-i})g_i(t_i,\lambda_i))
    \geq \sum_{\tv_{-i},\lamv} P(\dio{\tv_{-i} \mid t_i}) \Lambda(\lamv) v_i(\tv,\gv_{-i}(\tv_{-i},\lamv_{-i})a_i),
\end{equation}
where $\lamv = (\lambda_1, \ldots, \lambda_n)$, and, by independence of $\lambda_i$,
$\Lambda(\lamv) = \prod_{i=1}^n \Lambda_i(\lambda_i)$.

By Nash's theorem~\cite{Nash}, every game of incomplete information
has an equilibrium. In fact,
usually -- except in the simplest situations -- \dio{for each type profile $\tv$} we have to
expect several Nash equilibria $\gv$ to exist, with different payoff profiles
$(\avg{v_{i\dio{,t_i}}(\gv)}:i\in N)$.

\dio{The expected welfare $\SW_i(\gv)$ of player $i$ in a solution $\gv$ is
$$
 \SW_i(\gv) = \EE_{t_i}\avg{v_{i,t_i}(\gv)} := \avg{v_{i}(\gv)}.
$$}
The expected \emph{social welfare} $\SW(\gv)$ of a solution $\gv$
is the sum of the expected payoffs of all players,
$\SW(\gv) = \sum_i \dio{\SW_i(\gv)}$, and often used as a measure
of the quality of an equilibrium.
More generally, we may consider some other function $v(\tv,\av)$
of the types and actions
(e.g., the max of the expected payoffs of all players is the measure that is commonly adopted in the job scheduling setting \cite{ArcherT01}),
and look at the \emph{social payoff} $\SPO$ defined as follows:
\[
  \SPO(\gv) = \avg{v(\gv)} = \EE_{\tv,\gv} v\bigl( \tv,\gv(\tv) \bigr).
\]

\paragraph{Example: {\tt CHSH} game \cite{CHSH}.}
\label{sec:CHSH}
We now give an example of game of incomplete information, that we will use as our running example.
The game is a classic example from quantum information,
that later was also used in game theory (cf.~\cite{Forges06,Lehrer09,Liu}).
The game is called {\tt CHSH} after the authors of \cite{CHSH}.

It is a two-player game, with respective types and actions of players $(t_1, t_2)$ and $(a_1,a_2)$.
Both players' types and actions are single bits, taking values from the set $\{0,1\}$.
The distribution $P$ on the types $(t_1,t_2)$ is uniform, i.e. probability
$\frac{1}{4}$ is assigned to each of the four possibilities.
The game is a full coordination game, i.e., the payoff functions
$v_1,v_2$ are equal and the players want to achieve a common goal.
The payoffs are as in Figure \ref{fig:CHSH}.
\begin{figure}[htb]
\centering
\begin{game}{2}{2}[$t_1\cdot t_2=0$]
    & 0    & 1 \\
0   &1,1   &0,0\\
1   &0,0   &1,1
\end{game}
\qquad \qquad
\begin{game}{2}{2}[$t_1 \cdot t_2=1$]
    & 0    & 1 \\
0   &0,0    &1,1\\
1   &1,1    &0,0
\end{game}
\caption{The {\tt CHSH} game}
\label{fig:CHSH}
\end{figure}
In other words, the two players prefer to correlate if at least one
of them has the first type (which happens with probability $\frac{3}{4}$),
and to anti-correlate if they both have the second type (with probability $\frac{1}{4}$).

A simple pure strategy for each player in the {\tt CHSH} game is the constant
function mapping to $0$. Since the distribution $P$ is uniform, such joint action
by players gives an expected
\dio{utility of $1$ if observed type is 0, and $\frac{1}{2}$ otherwise}.
It is not hard to see that
this cannot be improved with other pure or mixed, or correlated strategies.
Later we will see that, in presence of external mediating devices, other equilibria exist
that reach the optimal expected utility of $1$\dio{, regardless of the observed type}.

\subsection{Correlations: joint conditional probability distributions}
Looking at eq.~(\ref{eq:payoff}), we see that for a well-defined expected
payoff, we only need a joint distribution of $\tv=(t_1,\ldots,t_n)$ and
$\av=(a_1,\ldots,a_n)$. In fact, since the marginal distribution $P(\tv)$ of
the types is fixed, we only require a conditional distribution of $\av$ given $\tv$.

This motivates us to consider, as a resource in gameplay,
a general \emph{correlation}, i.e.~a joint conditional probability distribution
\[
  Q(s_1, \ldots, s_n \mid r_1, \ldots, r_n)
\]
where $r_i$ and $s_i$ are defined as \emph{inputs} and \emph{outputs} for player $i$.
Here, we keep our discussion about correlations as general as possible, and we do not specify the meaning of inputs and outputs. Next section will provide more details for the setting of games of incomplete information.
We may abbreviate the notation as $Q(\sv \mid \rv)$
for tuples $\rv = (r_1, \ldots, r_n) \in R = \bigtimes_i R_i$
and $\sv = (s_1, \ldots, s_n) \in S = \bigtimes_i S_i$, where
$R_i$ and $S_i$ are the input and output alphabets of player
$i$, respectively.

Note that we do not assume any restriction on these correlations,
apart from the obvious requirements of probability distributions:
\[
  \sum_\sv Q(\sv\mid\rv) = 1 \quad \forall \rv.
\]
Given a set $R$ of inputs and a set $S$ of outputs we denote as $\mathtt{ALL}(S\mid R)$,
the set of all possible correlations on these sets.

By imposing additional restrictions we can single out
other meaningful subclasses of correlations,
that we will use later to define different kinds of equilibrium.

\paragraph{Belief-invariant (aka non-signalling) correlations.}
A joint conditional probability distribution $Q$ is \emph{belief-invariant}
(also called \emph{non-signalling})
if the distribution of the output variable $s_i$ given $r_i$ does not give any information about $r_j$, with $j\neq i$.
This class is easily seen to be strictly contained in the general class of correlations.
Indeed, the belief-invariant condition is clearly violated in a correlation where
$s_i$ could be just equal to $r_j$, i.e.,  $\Pr\{s_i = r_j\} = 1$.

The names belief-invariant and non-signalling can be understood in the following way.
Suppose we have $n$ parties, with the $i$-th party having access only to $r_i, s_i$.
Then the observation of $r_i, s_i$ does not reveal anything more about the other parties' $r_j$ variables than $r_i$ alone.
Therefore if the parties had a estimation (belief) of what could be the others' variables,
this is not changed by the observation of the outputs of the correlation $Q$.

Formally, for a set $I \subset N$, let $R_I = \bigtimes_{i\in I} R_i$ and $S_I = \bigtimes_{i\in I} S_i$.
Then, we say that a correlation $Q(\sv \mid \rv)$ is belief-invariant \cite{KRR} for all
subsets $I \subset N$ and $J = N\setminus I$,
\begin{equation} \label{eq:NS}
 \sum_{\sv_J \in S_J} Q(\sv_I,\sv_J\mid \rv_I,\rv_J) = \sum_{\sv_J \in S_J}  Q(\sv_I,\sv_J \mid \rv_I,\rv'_J) \
 \forall \sv_I \in S_I, \ \rv_I \in R_I, \ \rv_J,\rv'_J \in R_J.
\end{equation}
Given a set $R$ of inputs and a set $S$ of outputs we denote as $\mathtt{BINV}(S \mid R)$,
the set of all belief-invariant correlations on these sets.

We remark that this class of correlations has various equivalent definitions (see, for example, \cite{masanes}),
but we prefer the one given above since it makes it clear that any subset of parties,
even when getting together, cannot learn anything more about the other subset's input
variables than what they would know from the joint distribution of their $\{R_i\}_{i\in I}$ alone.

\paragraph{Local correlations.}
A joint conditional probability distribution $Q$ is called \textit{local}
if it can be simulated locally by each party $i$, by observing (their part of)
a random variable $\gamv=(\gamma_1, \ldots, \gamma_n)$ (with distribution $V(\gamv)$)
independent of $\rv$, and doing local operations depending only on $r_i$ and $\gamma_i$.
More formally, a correlation $Q(\sv \mid \rv)$ is local if there exists a random variable $\gamv$
and distributions $L_i(s_i \mid r_i \gamma_i)$ such that
\begin{equation}
  \label{eq:LOC}
  Q(\sv \mid \rv) = \sum_{\gamv} V(\gamv) L_1(s_1 \mid r_1 \gamma_1) \cdots L_n(s_n \mid r_n \gamma_n).
\end{equation}
Any local distribution is also belief-invariant, because the condition \eqref{eq:NS} is respected.
However, the opposite is not true, meaning that the inclusion is strict.
An example of non-local belief-invariant distribution is given below in \eqref{eq:chsh_ns_strategy}.

\medskip
As above, given a set $R$ of inputs and a set $S$ of outputs we denote as $\mathtt{LOC}(S\mid R)$,
the set of all local correlations on these sets.
We remark that the sets $\mathtt{ALL}(S\mid R)$, $\mathtt{BINV}(S\mid R)$ and $\mathtt{LOC}(S\mid R)$ are all closed convex sets.

\section{Equilibria with communication and correlation resources}
\label{sec:equilibria}
Consider a game $G=(N,T,A,P,\{v_i\})$ as defined in Section \ref{sec:games_def}.
A solution with communication for $G$ studies the behaviour of players
who have access to a \emph{correlation device} that depends on inputs
communicated by them during the game.
The most common operational interpretation of this setting is that
a \emph{trusted mediator}, who has private communication channels with all the players,
collects from each player $i$ the input $r_i$, samples $\sv$ according to $Q(\sv\mid\rv)$
and sends to each $i$ the output $s_i$.

Formally, we add to the strategies of the players the use of a \textit{correlation} $Q(\sv \mid \rv)$,
where $\rv = (r_1, \ldots, r_n) \in R$ is a tuple of \textit{inputs} and
$\sv = (s_1, \ldots, s_n) \in S$ is a tuple of \textit{outputs}.
In this setting, a \textit{pure strategy} for each player $i$ is a pair of functions,
$f_i \colon T_i \rightarrow R_i$ and
$g_i \colon T_i \times S_i \rightarrow A_i$; and a mixed strategy is a pair of jointly distributed random functions
$(f_i,g_i) \in R_i^{T_i} \times A_i^{T_i \times S_i}$. As done above, the latter can be given explicitly by specifying
a local random variable $\lambda_i$ for each player $i$
with distribution $\Lambda_i(\lambda_i)$, and letting $f_i = f_{i,\lambda_i} = f_i(\cdot,\lambda_i)$,
$g_i = g_{i,\lambda_i} = g_i(\cdot,\lambda_i)$.
The joint distribution of $\lamv = (\lambda_1,\ldots,\lambda_n)$ is a product distribution,
$\Lambda(\lamv) = \prod_{i=1}^n \Lambda_i(\lambda_i)$,
reflecting the fact that the $n$ pairs $\{(f_i,g_i)\colon i=1,\ldots,n\}$ are independent.

The game now goes as follows.
The types $\tv = (t_1, \ldots, t_n)$ are sampled according to $P$.
Each player~$i$ learns his type $t_i$,
and sends the input $r_i = f_i(t_i)$ to the correlation device.
He then gets the correlation output $s_i$ and plays the action $a_i = g_i(t_i, s_i)$.
This makes all of $\tv$, $\rv$, $\fv$, $\gv$ and $\av$ jointly distributed
random variables.
The expected payoff of player $i$ \dio{who observed type $t_i$} is:
\[\begin{split}
  & \avg{v_{i\dio{,t_i}}(\gv)} = \EE_{\dio{\tv_{-i} \mid t_i},\fv,\gv,Q} v_i\bigl(\tv,g_1(t_1,s_1),\ldots,g_n(t_n,s_n)\bigr) \\
            & \quad = \sum_{\dio{\tv_{-i}},\sv,\lamv} P(\dio{\tv_{-i} \mid t_i}) \Lambda(\lamv) Q\bigl(\sv\mid f_1(t_1,\lambda_1),\ldots,f_n(t_n,\lambda_n)\bigr)
                                            v_i\bigl(\tv,g_1(t_1,s_1,\lambda_1),\ldots,g_n(t_n,s_n,\lambda_n)\bigr).
\end{split}\]

We now give the definitions of some classes of equilibria, introduced by Forges \cite{Forges82},
that are meaningful in this setting with communication:
communication equilibria, belief-invariant communication equilibria and correlated equilibria.
A new class, namely quantum equilibria, will be defined later in Section \ref{sec:quantum}.
Communication and correlated equilibria were explicitly discussed in previous
work, notably \cite{aumann1974subjectivity,Forges82,Forges93,Forges06}.
The intermediate class of belief-invariant (and the quantum variant of Section \ref{sec:quantum})
was previously discussed only indirectly in some works, for example \cite{Forges93,Lehrer09,Pappa15}.

\subsection{Communication equilibrium}
\label{sec:bayesiancomm}
The most general class we consider here is the class of \emph{communication equilibria}.
Here, we assume that the correlation device can implement a correlation $Q$ that is \emph{unrestricted}.
We will obtain later two meaningful subclasses by restricting the class of available correlations.

In order to formally define a solution for a game we need to describe
not only the correlation $Q$ implemented by the correlation device,
but also the strategies, i.e., the functions $\{f_i\}$ and $\{g_i\}$,
and the private randomness used by the players.
To this aim, given an $n$-tuple $\xv = (x_1, \ldots, x_n)$, we use the
standard abbreviation $\xv_{-i}$ to denote the $(n-1)$-tuple
$(x_1, \ldots, x_{i-1}, x_{i+1}, \ldots,x_n)$, i.e., $\xv$ with the
$i$-th entry removed.
Similarly, if $\fv = (f_1,\ldots,f_n)$ is a family of functions in which each
$f_i$ is a function of an argument $x_i$, we denote with
$\fv_{-i} = (f_1,\ldots,f_{i-1},f_{i+1},\ldots,f_n)$ the family with the
$i$-th member removed, and by $\fv_{-i}(\xv_{-i})$ the tuple of values
$f_{1}(x_{1}),\ldots, f_{i-1}(x_{i-1}),f_{i+1}(x_{i+1}),\ldots, f_n(x_n)$.

\begin{definition}[Communication equilibrium]
\label{def:communication}
A solution $(\fv,\gv,Q)$ is a communication equilibrium of a game
$G$ if for each player $i$, \dio{each type $t_i \in T_i$,} and for all random functions $\varphi_i \colon T_i \rightarrow R_i$ and $\chi_i \colon T_i \times S_i \rightarrow A_i$,
\begin{equation}
\label{eq:incentive}
\begin{split}
  \sum_{\dio{\tv_{-i}},\lamv,\sv}
    & P(\dio{\tv_{-i}\mid t_i}) \Lambda(\lamv) Q(\sv \mid f_i(t_i,\lambda_{i})\fv_{-i}(\tv_{-i},\lamv_{-i}))
         v_i(t,g_i(t_i,s_i,\lambda_{i})\gv_{-i}(\tv_{-i},\sv_{-i},\lamv_{-i}))  \\
    &\geq \sum_{\dio{\tv_{-i}},\lamv,\sv} P(\dio{\tv_i \mid t_i}) \Lambda(\lamv) Q(\sv \mid \varphi_i(t_i,\lambda_{i})\fv_{-i}(\tv_{-i},\lamv_{-i}))
                              v_i(t,\chi_i(t_i, s_i,\lambda_{i})\gv_{-i}(\tv_{-i},\sv_{-i},\lamv_{-i})) .
\end{split}
\end{equation}
\end{definition}

This definition captures the idea of having no incentive to deviate
unilaterally, but it may seem a formidable task to verify the conditions
it imposes.
Note that it is w.l.o.g. to assume that in the above definition
$\varphi_i$ and $\chi_i$ are deterministic, i.e., not depending on $\lambda_i$.
Hence, it easily follows that we do not have to
go over all functions $\varphi_i$, but it is sufficient to go over all its possible outputs $r_i$.
Hence, we have the following alternative definition.

\begin{definition}
  \label{def:communication-simpler}
  A solution $(\fv,\gv,Q)$ is a communication equilibrium of a game
  $G$ if for all $i$, $t_i$, $r_i$ and functions $\chi_i \colon T_i \times S_i \rightarrow A_i $,
  \begin{align*}
    & \sum_{\tv_{-i},\lamv,\sv} P(\dio{\tv_{-i} \mid t_i}) \Lambda(\lamv) Q(\sv\mid\fv(\tv,\lamv) v_i(\tv,\gv(\tv,\sv,\lamv))\\
    & \qquad \geq \sum_{\tv_{-i},\lamv,\sv} P(\dio{\tv_{-i} \mid t_i}) \Lambda(\lamv)
                              Q(\sv \mid r_i \fv_{-i}(\tv_{-i},\lamv_{-i}))
                              v_i(\tv, \chi_i(t_i,s_i) \gv_{-i}(\tv_{-i},\sv_{-i},\lamv_{-i})).
  \end{align*}
\end{definition}

\paragraph{The canonical form.}
Definition \ref{def:communication} is useful for understanding the subclasses of equilibria we define later.
However, it is possible to express the communication equilibria in a simpler
\emph{canonical form},
where players communicate their types (not a function of the type) to the correlation device,
and the latter returns the actions they have to take (not only an information from which players compute their action).
The intuition is that the mediator, who implements the correlation
$Q(\sv\mid\rv)$ also takes care of the computation of the functions $r_i=f_i(t_i)$
and $a_i=g_i(t_i,s_i)$.
Starting from a general communication solution $(\fv,\gv,Q)$,
we construct a new \emph{canonical solution} $(\id^{(f)},\id^{(g)},\widehat{Q})$, where $\id_i^{(f)}(t_i) = t_i$, $\id_i^{(g)}(t_i, s_i) = s_i$ for each $i \in N$, $t_i \in T_i$ and $s_i \in S_i$, and the correlation $\widehat{Q}(\av \mid \tv)$ works as follows:
\begin{equation}
  \label{eq:canonical}
  \widehat{Q}(\av\mid\tv)
    = \sum_\lamv \Lambda(\lamv) \sum_{\sv \colon \gv(\tv,\sv,\lamv)=\av} Q(\sv \mid \fv(\tv,\lamv)).
\end{equation}
This is often called the \emph{revelation principle}. 
In what follows, we will simplify the notation by setting $(\id^{(f)},\id^{(g)},\widehat{Q})$ as $(\id,\id,\widehat{Q})$.
It is clear that the expected payoffs for this solution
when the players truthfully reveal their type and take the suggested action
(that is, when their strategy corresponds to identity functions $\id$ both for inputs and outputs)
are the same as those of the original solution\dio{, i.e., for each $i$ and each $t_i \in T_i$}
\[
  \avg{v_{i\dio{,t_i}}(\gv)} = \sum_{\dio{\tv_{-i}},\av} P(\dio{\tv_{-i} \mid t_i}) \widehat{Q}(\av\mid\tv) v_i(\tv,\av).
\]
Furthermore, we have the following important proposition.
\begin{proposition}
  \label{prop:canonical-comm-eq}
  If $(\fv,\gv,Q)$ is a communication equilibrium, then its associated canonical
  solution $(\id,\id,\widehat{Q})$ is also a communication equilibrium with exactly the same outcome. In words, no player has an incentive to communicate a false
  type, or to take an action different from the one suggested.
\end{proposition}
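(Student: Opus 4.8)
The plan is to reduce the incentive constraints for the canonical solution $(\id,\id,\widehat{Q})$ to those already assumed for the original solution $(\fv,\gv,Q)$, exploiting the fact that, by its very definition \eqref{eq:canonical}, the correlation $\widehat{Q}$ has the honest strategies $\fv$ and $\gv$ ``baked in''. The equality of the honest payoffs was already recorded immediately after \eqref{eq:canonical}, so the only thing left to show is that no player can gain by deviating from $(\id,\id,\widehat{Q})$.

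Fix a player $i$. In the canonical form player $i$'s input alphabet is $T_i$ and output alphabet is $A_i$, and the honest strategy is the pair of identity maps; since (as noted after Definition~\ref{def:communication}) it is w.l.o.g.\ to consider deterministic deviations, an arbitrary deviation of $i$ is described by a possibly false type report $\varphi_i\colon T_i\to T_i$ together with an action relabelling $\chi_i\colon T_i\times A_i\to A_i$. Substituting \eqref{eq:canonical} for $\widehat{Q}(\av\mid\varphi_i(t_i),\tv_{-i})$ in the resulting expected payoff of $i$, and then collapsing the sum over $\av$ (for each $\tv,\lamv,\sv$ exactly one $\av$ survives, namely $a_i=g_i(\varphi_i(t_i),s_i,\lambda_i)$ and $\av_{-i}=\gv_{-i}(\tv_{-i},\sv_{-i},\lamv_{-i})$), turns this payoff into a sum over $\tv,\lamv,\sv$ weighted by $P(\tv)\Lambda(\lamv)\,Q\bigl(\sv\mid f_i(\varphi_i(t_i),\lambda_i),\fv_{-i}(\tv_{-i},\lamv_{-i})\bigr)$ and by $v_i\bigl(\tv,\chi_i(t_i,g_i(\varphi_i(t_i),s_i,\lambda_i)),\gv_{-i}(\tv_{-i},\sv_{-i},\lamv_{-i})\bigr)$.

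Next I would exhibit the matching deviation in the original game: player $i$ sends to the device the input $\varphi_i'(t_i,\lambda_i):=f_i(\varphi_i(t_i),\lambda_i)$ and, upon receiving the output $s_i$, plays the action $\chi_i'(t_i,s_i,\lambda_i):=\chi_i\bigl(t_i,g_i(\varphi_i(t_i),s_i,\lambda_i)\bigr)$, while every other player keeps the honest strategy $(f_j,g_j)$. Plugging $(\varphi_i',\chi_i')$ into the right-hand side of the incentive inequality \eqref{eq:incentive} for $(\fv,\gv,Q)$ reproduces exactly the expression displayed at the end of the previous paragraph, so the canonical deviation $(\varphi_i,\chi_i)$ and the original deviation $(\varphi_i',\chi_i')$ earn player $i$ the same expected payoff. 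Since $(\fv,\gv,Q)$ is a communication equilibrium and $(\varphi_i',\chi_i')$ is a legitimate (and, in general, $\lambda_i$-dependent) deviation in the sense of Definition~\ref{def:communication}, this common payoff is at most $\avg{v_i}$, which is also player $i$'s honest payoff under $(\id,\id,\widehat{Q})$. As $i$, $\varphi_i$, $\chi_i$ were arbitrary, $(\id,\id,\widehat{Q})$ is a communication equilibrium with the claimed payoffs.

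I expect the only real obstacle to be the bookkeeping in the two paragraphs above: one has to track carefully how the false report $\varphi_i(t_i)$ is ``absorbed'' into the first argument of both $f_i$ and $g_i$ inside \eqref{eq:canonical}, and one has to make sure that the private randomness $\lambda_i$ — which no longer appears in the explicit description of the canonical solution — is still available to a deviating player, so that $\varphi_i'$ and $\chi_i'$ are bona fide strategies. Once the correspondence $(\varphi_i,\chi_i)\mapsto(\varphi_i',\chi_i')$ is set up, the equality of payoffs is a one-line substitution into \eqref{eq:canonical} followed by collapsing the sum over $\av$.
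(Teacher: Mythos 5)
Your proposal is correct and follows essentially the same route as the paper: a deviation $(\varphi_i,\chi_i)$ from the canonical solution is translated into an equally profitable deviation from $(\fv,\gv,Q)$, which the equilibrium property of the original solution rules out. The paper's proof merely asserts this translation ("the same deviation would increase the expected payoff in $(\fv,\gv,Q)$"), whereas you spell it out via $\varphi_i'(t_i,\lambda_i)=f_i(\varphi_i(t_i),\lambda_i)$ and $\chi_i'(t_i,s_i,\lambda_i)=\chi_i\bigl(t_i,g_i(\varphi_i(t_i),s_i,\lambda_i)\bigr)$, which is exactly the bookkeeping the paper leaves implicit.
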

\begin{proof}
Since $(\fv,\gv,Q)$ is an equilibrium, no player has an incentive in communicating a false type or in taking an action different from the one that has been suggested.
If in $\widehat{Q}$ there is a player $i$ who can increase his expected payoff by deviating from the suggested action on a type $t_i$, then the same deviation would increase the expected payoff of player $i$ in $(\fv,\gv,Q)$ for $t_i$. This contradicts the assumption that $(\fv,\gv,Q)$ is an equilibrium.
It follows that $(\id,\id,\widehat{Q})$ is an equilibrium as well. Because it preserves the conditional distribution of actions given types of $(\fv,\gv,Q)$, it also preserves the outcome.
\end{proof}
Notice that there are (infinitely) many equilibria $(\fv,\gv,Q)$ that lead to
the same canonical solution $(\id, \id, \widehat{Q})$. In fact, \eqref{eq:canonical} and
Proposition~\ref{prop:canonical-comm-eq} imply an equivalence
relation on solutions. Also notice that each equivalence class of a communication
equilibrium contains exactly one canonical solution, which we call
the \emph{canonical representative} of the class.

Note also that, since the communicated information comprises only types and actions,
these are the only two deviations that a player can take.
Therefore, the above discussion implies that we can simplify the notion of communication equilibrium as follows.

\begin{definition}[Canonical communication equilibrium]
  \label{def:communication_canonical}
  A solution $(\id, \id, Q)$ is a canonical communication equilibrium
  if there exists an equilibrium $(\fv,\gv,Q')$ in its equivalence class,
  i.e.~if the former is the canonical solution associated to the latter.

  Equivalently, $(\id, \id, Q)$ is a
  canonical communication equilibrium
  if for all $i \in N$, $t_i, r_i \in T_i$ and $\chi_i \colon T_i \times A_i \rightarrow A_i$,
  \begin{equation}
    \label{eq:incentive2}
    \sum_{\tv_{-i},\av_{-i}} P(\dio{\tv_{-i} \mid t_i}) Q(\av\mid\tv) v_i(\tv,\av)
       \geq \sum_{\tv_{-i},\av_{-i}} P(\dio{\tv_{-i} \mid t_i}) Q(\av \mid (r_i, \tv_{-i})) v_i(\tv,\chi_i(t_i, a_i)\av_{-i}).
  \end{equation}
\end{definition}

The above observations imply that as far as communication equilibria are
concerned, we may without loss of generality restrict our attention to their
canonical representatives. Then, for given game $G$, we denote as $\COMMEQ(G)$
the set of canonical communication equilibria for $G$.
The next proposition shows that, for every game, this is a convex set.
\begin{proposition}
  \label{prop:comm-equilibria-convex}
  If $(\id, \id, Q_1)$ and $(\id, \id, Q_2)$ are canonical communication equilibria for the same
  game $G$, then so is $(\id, \id, Q)$, with $Q = p Q_1 + (1-p) Q_2$, for $0\leq p\leq 1$.
\end{proposition}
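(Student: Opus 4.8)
The plan is to use the inequality-based reformulation of canonical communication equilibrium given in Definition~\ref{def:communication_canonical}, namely \eqref{eq:incentive2}, and to exploit the fact that this inequality is affine (indeed linear) in the correlation. Concretely, for a fixed player $i$, type $t_i$, alternative report $r_i$, and action relabelling $\alpha_i \in A_i^{A_i}$, both sides of \eqref{eq:incentive2} have the form $\sum_{\tv_{-i},\av_{-i}} c(\tv,\av)\, Q(\av \mid \cdot)$ for coefficients $c$ depending only on $P$ and $v_i$, not on $Q$. Hence the set of correlations satisfying one such inequality is a half-space, and $\COMMEQ(G)$ is the intersection of these finitely many half-spaces with the (convex) set of valid conditional distributions.

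First I would check that $Q = pQ_1 + (1-p)Q_2$ is itself a legitimate correlation of the canonical shape: its inputs are type profiles, its outputs are action profiles, its entries are nonnegative, and $\sum_{\av} Q(\av\mid\tv) = p\sum_{\av}Q_1(\av\mid\tv) + (1-p)\sum_{\av}Q_2(\av\mid\tv) = p + (1-p) = 1$ for every $\tv$, so $(\id,\id,Q)$ is a well-defined candidate solution.

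Next I would verify \eqref{eq:incentive2} for $Q$. Fix $i$, $t_i$, $r_i$, $\alpha_i$. By hypothesis \eqref{eq:incentive2} holds with $Q_1$ in place of $Q$ and with $Q_2$ in place of $Q$. Multiplying the first inequality by $p\ge 0$ and the second by $1-p\ge 0$ and adding, linearity of the sums turns the left-hand side into
\[
\sum_{\tv_{-i},\av_{-i}} P(\tv)\bigl(pQ_1(\av\mid\tv)+(1-p)Q_2(\av\mid\tv)\bigr)v_i(\tv,\av) = \sum_{\tv_{-i},\av_{-i}} P(\tv)Q(\av\mid\tv)v_i(\tv,\av),
\]
and likewise the right-hand side becomes $\sum_{\tv_{-i},\av_{-i}} P(\tv)\, Q(\av\mid r_i\tv_{-i})\, v_i(\tv,\alpha_i(a_i)\av_{-i})$. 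Since $i$, $t_i$, $r_i$, $\alpha_i$ were arbitrary, $(\id,\id,Q)$ satisfies the equivalent characterization of Definition~\ref{def:communication_canonical}, hence is a canonical communication equilibrium.

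I do not expect a real obstacle here — this is the standard fact that an intersection of half-spaces with the probability simplex is convex. The only point requiring a little care is to argue via the inequality characterization \eqref{eq:incentive2} rather than via the existential definition (``there exists an equilibrium $(\fv,\gv,Q')$ in its equivalence class''): it is not immediately clear how to combine the two preimage solutions $(\fv,\gv,Q_1')$ and $(\fv,\gv,Q_2')$ directly into one witnessing $Q$, whereas working at the level of the canonical correlations side-steps that entirely.
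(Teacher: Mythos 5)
Your proof is correct, but it follows a genuinely different route from the paper's. You work entirely at the level of the canonical correlations, using the inequality characterization \eqref{eq:incentive2} of Definition~\ref{def:communication_canonical}: both sides of each constraint are linear in $Q$, so $\COMMEQ(G)$ is an intersection of half-spaces with the polytope of conditional distributions, and convex combinations of solutions of \eqref{eq:incentive2} again satisfy it. (Your reliance on \eqref{eq:incentive2} being necessary as well as sufficient is legitimate: the paper asserts the equivalence, and the necessity direction is supplied by Proposition~\ref{prop:canonical-comm-eq} applied to $(\id,\id,Q_j)$.) The paper instead argues operationally: it builds an explicit correlation $Q'$ on an enlarged output alphabet that publicly randomizes between $Q_1$ and $Q_2$ (flag bits $b_1=\cdots=b_n$ revealing which one was drawn), notes that a profitable deviation against $Q'$ would yield a profitable deviation against $Q_1$ or $Q_2$, and then invokes the existential definition via $Q=\widehat{Q'}$. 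Your argument is shorter, avoids the equivalence-class bookkeeping, and makes explicit that $\COMMEQ(G)$ is cut out by finitely many linear inequalities; the paper's construction has the advantage that the very same $Q'$ is reused verbatim to prove convexity of $\BIEQ(G)$ in Proposition~\ref{prop:beliefinv-equilibria-convex} (one only has to note $Q'$ is belief-invariant), and is mirrored again in the correlated case, whereas your route would there require the additional (easy) observation that the belief-invariance constraints \eqref{eq:NS} are themselves linear in $Q$.
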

\begin{proof}
Consider the following correlation, with $\av = (a_1,\ldots,a_n) \in A$,
$\bv = (b_1,\ldots,b_n) \in \{0,1\}^n$ and $\tv = (t_1,\ldots,t_n) \in T$:
\[
  Q'(\av,\bv\mid\tv) = \begin{cases}
                         p Q_1(\av\mid\tv)    & \text{ if } b_1=b_2=\ldots=b_n=0, \\
                         (1-p)Q_2(\av\mid\tv) & \text{ if } b_1=b_2=\ldots=b_n=1, \\
                         0                 & \text{ otherwise}.
                       \end{cases}
\]
That is, $Q'$ chooses with probability $p$ and $1-p$ to
provide $Q_1$ and $Q_2$, respectively, and informs the players
of its choice along with the recommended actions $\av$.
Thus it is clear from the fact that $(\id, \id, Q_1)$ and $(\id, \id, Q_2)$ are
equilibria, that $(\id, \id, Q')$ (where players do not output the $b_i$ variables) is an equilibrium, too.
The proof is concluded by observing that $Q = \widehat{Q'}$ is
the canonical representative of $Q'$.
\end{proof}

\paragraph{An example.}
Consider again the {\tt CHSH} game described above.
A communication equilibrium for the {\tt CHSH} game consists in players
revealing their type to the mediator, receiving information about which game are they playing
(i.e., information about the type of the other player),
and which action they are suggested to play in that game,
and following the advice of the mediator.
Formally, consider inputs $(r_1, r_2) \in T$ and outputs $(s_1, s_2)$ where $s_i \in A_i \times \{0,1\}$.
Moreover, consider a correlation $Q$ such that $Q(00, 00 \mid r_1 \cdot r_2 = 0) = 1$ and $Q(01, 11 \mid r_1 \cdot r_2 = 1) = 1$.
Then, the solution $(\id, s_1, Q)$, where $s_1$ is the function that returns the first bit of the advice $s$ received by the player,
is a communication equilibrium, since all players have expected payoff $1$ \dio{regardless of the observed type,} and no action can be taken in order to increase this payoff.

Note also that this equilibrium is not canonical, since the mediator is not just suggesting an action, but more complex advices are given.
Anyway, it is immediate to transform above equilibrium in a canonical one,
by requiring the mediator to return only the action players are suggested to take.

\subsection{Belief-invariant equilibrium}
We obtain the subclass of belief-invariant equilibria
by requiring that the correlation used in the equilibrium is in the
class of belief-invariant correlations.

\begin{definition}[Belief-invariant equilibrium]
\label{def:beliefinvariant}
A solution $(\fv,\gv,Q)$ is called \emph{belief-invariant} if
$Q$ is a belief-invariant correlation. If $(\fv,\gv,Q)$ is a communication
equilibrium, we call it a \emph{belief-invariant (communication) equilibrium}.

A solution $(\id, \id, Q)$ is a \emph{canonical belief-invariant equilibrium}
if there is a belief-invariant equilibrium $(\fv,\gv,Q')$ in its equivalence class, i.e.,
with $Q=\widehat{Q'}$.
\end{definition}

The relation between belief-invariant equilibria and their canonical
version is clarified in the following proposition, whose proof is evident and, hence, omitted.

\begin{proposition}
  \label{prop:canonical-eq-beliefinv}
  If $(\fv,\gv,Q)$ is a belief-invariant equilibrium and $\widehat{Q}$
  its canonical representative, then $(\id,\id,\widehat{Q})$ is also
  a belief-invariant equilibrium, with the same \dio{outcome} as the
  original equilibrium.
\end{proposition}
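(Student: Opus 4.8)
The plan is to split the statement into its two assertions --- ``communication equilibrium with the same payoffs'' and ``belief-invariant'' --- and dispatch the first one by appeal to Proposition~\ref{prop:canonical-comm-eq}. Since a belief-invariant equilibrium is by Definition~\ref{def:beliefinvariant} in particular a communication equilibrium, that proposition immediately gives that $(\id,\id,\widehat{Q})$ is a communication equilibrium and that its payoff profile $(\avg{v_i}:i\in N)$ coincides with that of $(\fv,\gv,Q)$. Hence the entire content of the proposition reduces to one claim: if $Q\in\mathtt{BINV}$ then the canonical correlation $\widehat{Q}$ defined in \eqref{eq:canonical} is again belief-invariant, i.e.\ lies in $\mathtt{BINV}(A\mid T)$.

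For that claim I would argue that the passage $Q\mapsto\widehat{Q}$ is, for each fixed value of the local randomness, a composition of a local pre-processing and a local post-processing, each of which manifestly preserves belief-invariance, followed by a convex mixture, which preserves it by convexity of $\mathtt{BINV}$. Concretely: fix $\lamv=(\lambda_1,\dots,\lambda_n)$ and set $Q_\lamv(\av\mid\tv):=\sum_{\sv:\gv(\tv,\sv,\lamv)=\av}Q(\sv\mid\fv(\tv,\lamv))$, so that $\widehat{Q}=\sum_\lamv\Lambda(\lamv)Q_\lamv$. First, $(\sv\mid\tv)\mapsto Q(\sv\mid f_1(t_1,\lambda_1),\dots,f_n(t_n,\lambda_n))$ is belief-invariant: in the marginal $\sum_{\sv_J}Q(\sv_I,\sv_J\mid\fv_I(\tv_I,\lamv_I),\fv_J(\tv_J,\lamv_J))$ the dependence on $\tv_J$ enters only through the $J$-inputs, which \eqref{eq:NS} for $Q$ forbids, while the dependence on $\tv_I$ enters only through the $I$-inputs. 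Second, applying the per-player post-processing $(t_i,s_i)\mapsto a_i=g_i(t_i,s_i,\lambda_i)$ preserves this: summing $Q_\lamv(\av_I,\av_J\mid\tv)$ over $\av_J$ simply removes the constraint on the $J$-coordinates and leaves $\sum_{\sv_I:\,g_i(t_i,s_i,\lambda_i)=a_i\ \forall i\in I}\bigl(\sum_{\sv_J}Q(\sv_I,\sv_J\mid\fv(\tv,\lamv))\bigr)$, whose inner sum is $\tv_J$-independent by the first step and whose index set depends only on $\tv_I,\av_I$. Thus $Q_\lamv\in\mathtt{BINV}(A\mid T)$ for every $\lamv$, and since $\widehat{Q}$ is a convex combination of the $Q_\lamv$ (here only $\sum_\lamv\Lambda(\lamv)=1$ is used, not the product structure of $\Lambda$), belief-invariance of $\widehat{Q}$ follows from convexity of $\mathtt{BINV}(A\mid T)$.

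There is no real obstacle here --- this is why the paper calls the proof evident --- and the only thing needing a moment's care is the bookkeeping in the post-processing step: one must remember that $g_i$ reads $t_i$ in addition to $s_i$, but since that extra dependence is still confined to player $i$ it does not propagate any signal across the $I\mid J$ cut, so \eqref{eq:NS} survives. Putting the two assertions together, $(\id,\id,\widehat{Q})$ is a belief-invariant communication equilibrium with the payoffs of $(\fv,\gv,Q)$, as claimed.
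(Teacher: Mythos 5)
Your proposal is correct and follows exactly the route the paper has in mind when it calls the proof evident: the equilibrium property and equality of payoffs come straight from Proposition~\ref{prop:canonical-comm-eq}, and the only thing to check is that $\widehat{Q}$ inherits belief-invariance, which your decomposition into per-player pre-processing $r_i=f_i(t_i,\lambda_i)$, per-player post-processing $a_i=g_i(t_i,s_i,\lambda_i)$, and a convex mixture over $\lamv$ (using convexity of $\mathtt{BINV}(A\mid T)$) establishes cleanly. Your observations that only $\sum_\lamv\Lambda(\lamv)=1$ is needed here and that the extra $t_i$-dependence of $g_i$ stays on player $i$'s side of the $I\mid J$ cut are exactly the right points of care.
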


In a caveat to the above proposition, we stress that the equivalence class of a
canonical belief-invariant equilibrium $(\id, \id, Q)$
may contain also solutions involving non-belief-invariant correlations.
Indeed, the correlating device could provide players with information about other players
even if this information is useless with respect to the choice of the strategy to play.
Below we discuss an example showing how this can occur.

Finally, observe that the canonical belief-invariant equilibria of a given
game $G$, denoted $\BIEQ(G)$, form
a convex set, like the canonical communication equilibria:

\begin{proposition}
  \label{prop:beliefinv-equilibria-convex}
  If $(\id, \id, Q_1)$ and $(\id, \id, Q_2)$ are canonical belief-invariant equilibria for the same
  game, then so is $(\id, \id, Q)$, with $Q = p Q_1 + (1-p) Q_2$ for $0\leq p\leq 1$.
\end{proposition}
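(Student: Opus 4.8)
The plan is to reduce everything to the canonical incentive inequality \eqref{eq:incentive2}, which is \emph{linear} in the correlation, and combine this with the fact (recorded in the excerpt) that $\mathtt{BINV}(S\mid R)$ is convex. First I would note that $Q = pQ_1+(1-p)Q_2$ is again belief-invariant, simply because $Q_1,Q_2\in\mathtt{BINV}(S\mid R)$ and this set is convex. Hence it suffices to show that $(\id,\id,Q)$ is a communication equilibrium: once we know this, $(\id,\id,Q)$ is a belief-invariant solution that is also a communication equilibrium, i.e.\ a belief-invariant equilibrium in the sense of Definition~\ref{def:beliefinvariant}, and since $\widehat{Q}=Q$ when $\fv=\gv=\id$, it is its own canonical representative, so $(\id,\id,Q)\in\BIEQ(G)$.

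Next I would observe that each of $(\id,\id,Q_1)$ and $(\id,\id,Q_2)$, being a canonical belief-invariant equilibrium, is in particular a canonical communication equilibrium: by Definition~\ref{def:beliefinvariant} there is a belief-invariant equilibrium $(\fv,\gv,Q_k')$ with $\widehat{Q_k'}=Q_k$, this is in particular a communication equilibrium, and Proposition~\ref{prop:canonical-comm-eq} together with Definition~\ref{def:communication_canonical} then give that $(\id,\id,Q_k)$ satisfies \eqref{eq:incentive2} for every player $i$, every $t_i$, every fake input $r_i$, and every relabelling $\alpha_i\in A_i^{A_i}$. Now fix such an $i,t_i,r_i,\alpha_i$. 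Both the left-hand side $\sum_{\tv_{-i},\av_{-i}}P(\tv)Q(\av\mid\tv)v_i(\tv,\av)$ and the right-hand side $\sum_{\tv_{-i},\av_{-i}}P(\tv)Q(\av\mid r_i\tv_{-i})v_i(\tv,\alpha_i(a_i)\av_{-i})$ of \eqref{eq:incentive2} are linear functionals of the correlation, so taking the convex combination of the two inequalities (the one for $Q_1$ and the one for $Q_2$) with weights $p$ and $1-p$ yields the same inequality for $Q$. Thus $(\id,\id,Q)$ satisfies \eqref{eq:incentive2}, hence is a canonical communication equilibrium, and by the reduction above it is a canonical belief-invariant equilibrium.

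Alternatively one could mimic the proof of Proposition~\ref{prop:comm-equilibria-convex} almost verbatim, lifting $Q$ to the correlation $Q'(\av,\bv\mid\tv)$ that first flips a $p$-coin to pick $Q_1$ or $Q_2$ and announces the outcome as the common string $\bv\in\{(0,\dots,0),(1,\dots,1)\}$. The only point that does not already appear in that earlier proof --- and really the only place where anything could go wrong --- is checking that $Q'$ is belief-invariant; this holds because the announced bit is identical for all players and statistically independent of the types, so marginalising any subset $J$ of players' outputs reduces, case by case on $\bv$, to marginalising in $Q_1$ or $Q_2$, where belief-invariance already holds. I do not expect a genuine obstacle: the substance of the statement is just linearity of the equilibrium constraints plus closedness of $\mathtt{BINV}(S\mid R)$ under mixtures.
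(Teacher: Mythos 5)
Your proposal is correct, and your \emph{alternative} paragraph is exactly the paper's proof: the paper disposes of this proposition in one line by noting that the correlation $Q'$ constructed in the proof of Proposition~\ref{prop:comm-equilibria-convex} (the mediator flips a $p$-coin, announces the outcome $\bv$ to everyone alongside the recommendation) is belief-invariant, which is precisely the check you carry out by marginalising case by case on $\bv$. Your \emph{primary} argument, however, takes a genuinely different route: instead of exhibiting a belief-invariant equilibrium in the equivalence class, you observe that the canonical incentive constraints \eqref{eq:incentive2} are linear in the correlation $Q$, so mixing the constraints for $Q_1$ and $Q_2$ yields them for $Q$, while convexity of $\mathtt{BINV}$ (together with Proposition~\ref{prop:canonical-eq-beliefinv}, which guarantees $Q_1,Q_2$ are themselves belief-invariant) gives $Q\in\mathtt{BINV}$; since $(\id,\id,Q)$ is its own canonical representative, it lies in $\BIEQ(G)$. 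This route is slightly more abstract but buys something the paper's argument does not make explicit: it exhibits $\BIEQ(G)$ as the intersection of the polytope $\mathtt{BINV}(A\mid T)$ with finitely many half-spaces, i.e.\ a set cut out by linear inequalities, which is the structural fact underlying the later remark that belief-invariant equilibria of full-coordination games can be found by linear programming. The paper's construction, by contrast, is more operational (a public coin selecting which equilibrium to implement) and recycles the earlier proof verbatim, which is why it can be stated in one sentence. Both arguments are complete; the only definitional hinge in yours is the ``equivalently'' clause of Definition~\ref{def:communication_canonical}, which the paper itself asserts, so no gap remains.
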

\begin{proof}
Just notice that $Q'$ as defined in the proof of
Proposition~\ref{prop:comm-equilibria-convex} is belief-invariant.
\end{proof}

\paragraph{An example.}
Consider again the {\tt CHSH} game described above.
Recall the canonical communication equilibrium $Q$ previously described,
i.e. $Q(00\mid t_1 \cdot t_2 = 0) = 1$ and $Q(01\mid t_1 \cdot t_2 = 1) = 1$.
It is easy to see that this equilibrium is not belief-invariant.
Indeed, whenever the second player receives advice $1$,
his belief about the type of the first player changes,
since he knows for sure that it is $1$.

Still, there is a canonical belief-invariant equilibrium.
Consider, indeed, the following correlation $Q'$:
$Q'(00\mid t_1 \cdot t_2 = 0) = Q'(11\mid t_1 \cdot t_2 = 0) = 1/2$
and $Q'(01\mid t_1 \cdot t_2 = 1) = Q'(10\mid t_1 \cdot t_2 = 1) = 1/2$.
Note that each player receives a payoff of $1$\dio{, regardless of the observed type,} and, thus, there is no way for them to improve their utility.
Anyway, each player receives advice $a$ with probability $1/2$ regardless of the other player's type.
Hence, this correlation does not allow players to gain any information about other players.

We note that there are multiple equilibria that belong to the equivalence class whose canonical representative is $Q'$.
In some of these equilibria the advice received by players can contain information that does not serve to the player
for computing the action, but still reveal information about the other player's type.
Consider for example the following correlation:
$Q(00, 00 \mid r_1 \cdot r_2 = 0) =
Q(10, 10 \mid r_1 \cdot r_2 = 0) = 1/2$ and
$Q(01, 11 \mid r_1 \cdot r_2 = 1) =
Q(11, 01 \mid r_1 \cdot r_2 = 1) = 1/2$,
in which the advice not only suggests the action that the player should take,
but also reveal which game the players are actually playing,
and thus, which is the type of the other player.

\subsection{Correlated equilibrium}
We obtain the subclass of correlated equilibria by
requiring that the correlation used at the equilibrium is essentially
a shared random variable.

\begin{definition}[Correlated equilibrium]
\label{def:correlated}
A solution $(\fv,\gv,Q)$ is called \emph{correlated}
if the output distribution of $Q$ is independent of the input:
$Q(\sv\mid\rv) = Q(\sv)$ for all $\rv$ and $\sv$. If it is a communication
equilibrium, we speak of a \emph{correlated (communication) equilibrium}.

Similarly to the belief-invariant case, we transfer the property
of being correlated to the canonical representative $\widehat{Q}$,
and we call a canonical correlation $Q$ a \emph{canonical correlated equilibrium}
if there is any correlated equilibrium $(\fv,\gv,Q')$ in its equivalence class,
i.e., $Q=\widehat{Q'}$.
\end{definition}

The set of canonical correlated equilibria of the game $G$
is denoted $\CORREQ(G)$.

Our definition is equivalent to the one of Forges (see \cite{Forges82}
and \cite[page 8]{Forges93}),
who describes the correlated equilibrium as a collection
$\bigl( Q(\sv), g_1, \ldots, g_n \bigr)$,
where $Q$ is a distribution of suggestions independent of the types,
and each $g_i \colon T_i\times S_i \rightarrow A_i$ is a function that
player $i$ uses to determine their action. Indeed, note that the
functions $f_i$ in a correlated solution serve no purpose, since
the input $r_i = f_i(t_i)$ to $Q$ is irrelevant, and only the sampled
output $s_i$ and its correlation with $\sv_{-i}$ matter. Hence,
from now on we will denote a correlated solution/equilibrium simply
as $(\gv,Q)$, with a probability distribution $Q$ on $S$.

This also allows us to exhibit a simplified equilibrium criterion.
\begin{proposition}
  \label{prop:equilibrium-corr-simplified}
  A tuple $(\gv,Q)$ is a correlated equilibrium if and only if
  for all $i \in N$, $t_i \in T_i$, $s_i \in S_i$, and $a_i \in A_i$,
  \[
    \sum_{\tv_{-i},\sv_{-i}} P(\dio{\tv_{-i} \mid t_i}) Q(\sv) v_i(\tv,\gv(\tv,\sv))
       \geq \sum_{\tv_{-i},\sv_{-i}} P(\dio{\tv_{-i} \mid t_i}) Q(\sv) v_i(\tv,a_i\gv_{-i}(\tv_{-i},\sv_{-i})).
  \]
\end{proposition}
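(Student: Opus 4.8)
The plan is to specialize the general incentive inequality~\eqref{eq:incentive} of Definition~\ref{def:communication} to a correlated solution and then carry out the standard ``pull the pointwise maximum out of the sum'' manipulation over deviation functions.

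First I would note that when $Q(\sv\mid\rv)=Q(\sv)$ is input-independent, the input-deviation $\varphi_i$ in~\eqref{eq:incentive} is vacuous: on both sides of that inequality the factor $Q(\sv\mid\cdot)$ equals $Q(\sv)$ no matter whether its argument is produced by $f_i$ or by $\varphi_i$. Hence $(\gv,Q)$ being a correlated equilibrium is equivalent to requiring, for every $i$ and every deviation function $\chi_i\colon T_i\times S_i\to A_i$ (deterministic without loss of generality by the remark following Definition~\ref{def:communication}, and absorbing any private randomness $\lamv$ into the harmless factor $\Lambda(\lamv)$, which plays no role below), that
\[
  \sum_{\tv,\sv}P(\tv)Q(\sv)\,v_i\bigl(\tv,\gv(\tv,\sv)\bigr)
    \ \geq\ \sum_{\tv,\sv}P(\tv)Q(\sv)\,v_i\bigl(\tv,\chi_i(t_i,s_i)\,\gv_{-i}(\tv_{-i},\sv_{-i})\bigr).
\]
Writing $F(t_i,s_i)=\sum_{\tv_{-i},\sv_{-i}}P(\tv)Q(\sv)\,v_i(\tv,\gv(\tv,\sv))$ and $H(t_i,s_i,a_i)=\sum_{\tv_{-i},\sv_{-i}}P(\tv)Q(\sv)\,v_i(\tv,a_i\,\gv_{-i}(\tv_{-i},\sv_{-i}))$, the honest payoff is $\avg{v_i}=\sum_{t_i,s_i}F(t_i,s_i)$, the displayed inequality reads $\sum_{t_i,s_i}F(t_i,s_i)\ge\sum_{t_i,s_i}H(t_i,s_i,\chi_i(t_i,s_i))$ for all $\chi_i$, and the criterion to be proved is exactly $F(t_i,s_i)\ge H(t_i,s_i,a_i)$ for all $t_i,s_i,a_i$.

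For the forward direction, given a triple $(t_i^*,s_i^*,a_i^*)$, I would test the equilibrium against the particular deviation $\chi_i$ that agrees with $g_i$ everywhere except at $(t_i^*,s_i^*)$, where it outputs $a_i^*$: every summand with $(t_i,s_i)\neq(t_i^*,s_i^*)$ is identical on the two sides and cancels, leaving precisely $F(t_i^*,s_i^*)\ge H(t_i^*,s_i^*,a_i^*)$. For the converse, given the pointwise inequalities, any deviation (input part vacuous, output part an arbitrary, possibly randomized $\chi_i$) yields expected payoff $\EE_{\chi_i}\sum_{t_i,s_i}H(t_i,s_i,\chi_i(t_i,s_i))\le\sum_{t_i,s_i}\max_{a_i}H(t_i,s_i,a_i)\le\sum_{t_i,s_i}F(t_i,s_i)=\avg{v_i}$, so $(\gv,Q)$ satisfies~\eqref{eq:incentive} (with the input part deleted), i.e.\ it is a communication equilibrium and hence a correlated equilibrium.

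There is no real obstacle beyond bookkeeping; the two points that deserve care are (i) recognizing that for input-independent $Q$ the input-side deviation $\varphi_i$ has no effect, so one only ranges over the output deviations $\chi_i$, and (ii) the routine interchange of the finite summation over $(t_i,s_i)$ with the maximization over the value of $\chi_i$ at each argument, which is valid precisely because $\chi_i$ may be chosen independently at each point. This reduction of $\chi_i$ to ``one output action $a_i$ per pair $(t_i,s_i)$'' parallels the simplifications already recorded in Proposition~\ref{prop:communication-simpler} and Definition~\ref{def:communication_canonical}.
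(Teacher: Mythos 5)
Your proof is correct, and it is exactly the routine argument the paper has in mind: the paper marks this proposition as evident and omits the proof, and your two steps (observing that for input-independent $Q$ the input deviation $\varphi_i$ is vacuous, then testing single-point output deviations and, conversely, bounding an arbitrary $\chi_i$ by the pointwise maximum) are the canonical way to fill it in, mirroring Proposition~\ref{prop:communication-simpler}. The only place needing a word of care is your dismissal of $\lamv$: in the forward direction the point-deviation must copy the possibly randomized $g_i$ off the chosen pair, which is legitimate because Definition~\ref{def:communication} allows randomized deviations (which may use $\lambda_i$), and in the converse the bound $\EE_{\lambda_i}H(t_i,s_i,\chi_i(t_i,s_i,\lambda_i))\leq\max_{a_i}H(t_i,s_i,a_i)$ uses that $\lambda_i$ is independent of $\lamv_{-i}$ --- both facts hold, so the argument stands as written.
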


\begin{remark}
\normalfont
If $(\gv,Q)$ is a correlated equilibrium, then
its canonical correlation $\widehat{Q}$ is local in the sense of
\eqref{eq:LOC}. Be aware, however, that we do not know if
$(\id,\id,\widehat{Q})$ is a correlated equilibrium,
because it may not have the required property that $\widehat{Q}(\av\mid \tv)$ is
independent of $\tv$ (in contrast to
Proposition~\ref{prop:canonical-eq-beliefinv} in the belief-invariant case).

Worse, even if $(\id,\id,Q)$ is a communication equilibrium such
that $Q$ is a local correlation, it is not clear whether this implies
that $\widehat{Q}$ is a canonical correlated equilibrium in the sense
of Definition~\ref{def:correlated}.
To show this, one would have to find a correlated equilibrium $(\gv,Q')$
such that $Q=\widehat{Q'}$ is its canonical representative.
The difficulty stems from the fact that while $Q$ can be simulated by
giving a suitable shared randomness $\Gamma = (\Gamma_1,\ldots,\Gamma_n)$
to cooperating players, see eq.~(\ref{eq:LOC}),
to competing players it might give an advantage over the others
having access to $\Gamma$ directly rather than only $Q$.
\end{remark}

To have a usable handle on correlated equilibria, we propose
the following definition, which allows us to identify the
correlated solutions to a game of incomplete information with the correlated
strategies of the associated game of complete information.

\begin{definition}
\label{def:correlated-standardform}
We say that a correlated solution (equilibrium) $(\gv,Q)$
is \emph{in standard form}
if for all $i$, the advice space equals the set of functions
from types to actions, $S_i = A_i^{T_i}$, and if the function
$g_i$ consists of evaluating the first argument (a function) on
the second argument:
\begin{align*}
  g_i : T_i \times A_i^{T_i} &\longrightarrow A_i, \\
              (t_i,\sigma_i) &\longmapsto \sigma_i(t_i).
\end{align*}
Clearly, such a solution is given entirely by the distribution
$Q$ on $A_1^{T_1} \times \cdots \times A_n^{T_n}$, which we will
thus use as a shorthand for a correlated solution in standard form.
\end{definition}
The following proposition shows that the correlated equilibria in games with
incomplete information, or more specifically their standard form,
are precisely the correlated equilibria in the associated game
of complete information that has the strategy space $S_i=A_i^{T_i}$
for player $i$. The proof is evident and, hence, omitted.

\begin{proposition}
  \label{prop:standardform-eq-correlated}
  If $(\gv,Q)$ is a correlated equilibrium, then we can obtain a
  correlated equilibrium $\widetilde{Q}$ in standard form that is
  in the same equivalence class, as follows:
  \[
    \widetilde{Q}(\sigv) := \Pr_Q\bigl\{ \forall i\ g_i(\cdot,s_i)=\sigma_i \bigr\}.
  \]
  Thus, a canonical equilibrium $Q$ is a correlated equilibrium if and
  only if there exists a correlated equilibrium in standard form
  in its equivalence class.
  \qed
\end{proposition}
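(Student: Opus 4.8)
The plan is to realise the standard form as the obvious pushforward of $Q$. Given a correlated solution $(\gv,Q)$ with advice alphabets $S_i$ and decoding maps $g_i\colon T_i\times S_i\to A_i$, I would associate to each advice letter $s_i$ the function $\sigma_i:=g_i(\cdot,s_i)\in A_i^{T_i}$, and let $\widetilde Q$ be the law of $\sigv=(\sigma_1,\dots,\sigma_n)$ induced by $\sv\sim Q$ — which is exactly the formula $\widetilde Q(\sigv)=\Pr_Q\{\forall i\ g_i(\cdot,s_i)=\sigma_i\}$ in the statement. The standard-form solution is then $\widetilde Q$ equipped with the evaluation decoders $(t_i,\sigma_i)\mapsto\sigma_i(t_i)$.

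First I would observe that honest play of the standard-form solution and of $(\gv,Q)$ yields the identical joint distribution of types and actions: for any realised advice $\sv$ the action of player $i$ equals $g_i(t_i,s_i)=\sigma_i(t_i)$. This immediately gives equality of all expected payoffs, and since by \eqref{eq:canonical} the canonical correlation $\widehat Q(\av\mid\tv)$ is precisely this conditional distribution of actions given types, it gives $\widehat{\widetilde Q}=\widehat Q$, so the two solutions lie in the same equivalence class.

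The substantive step is to show the standard-form solution is still an equilibrium. Using the standard-form analogue of Proposition~\ref{prop:equilibrium-corr-simplified}, it suffices to rule out, for each $i$, each $t_i$, each advice $\sigma_i\in A_i^{T_i}$ and each action $a_i$, a profitable deviation that plays $a_i$ after seeing $\sigma_i$. I would prove this by summing the original incentive constraints: Proposition~\ref{prop:equilibrium-corr-simplified} applied to $(\gv,Q)$ states, for every advice letter $s_i$,
\[
  \sum_{\tv_{-i},\sv_{-i}} P(\tv)\,Q(\sv)\bigl[v_i(\tv,\gv(\tv,\sv))-v_i(\tv,a_i\gv_{-i}(\tv_{-i},\sv_{-i}))\bigr]\ \geq\ 0 .
\]
Summing this over all $s_i$ with $g_i(\cdot,s_i)=\sigma_i$ and regrouping the letters $s_j$ ($j\neq i$) by the functions $\sigma_j=g_j(\cdot,s_j)$ collapses $\sum_{s_i:\,g_i(\cdot,s_i)=\sigma_i}Q(\sv)$ to $\widetilde Q(\sigma_i,\sigv_{-i})$ summed over $\sigv_{-i}$, and replaces $\gv(\tv,\sv)$ by $(\sigma_1(t_1),\dots,\sigma_n(t_n))$; the outcome is exactly the standard-form incentive constraint. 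I expect this marginalisation bookkeeping to be the only place demanding care.

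Finally, for the ``if and only if'' statement: ``if'' is immediate, since a standard-form correlated equilibrium is in particular a correlated equilibrium, so by Definition~\ref{def:correlated} any canonical $Q$ possessing one in its equivalence class is a canonical correlated equilibrium; and ``only if'' follows by taking the correlated equilibrium $(\gv,Q')$ with $Q=\widehat{Q'}$ guaranteed by Definition~\ref{def:correlated}, applying the first part to it, and noting that the resulting standard-form equilibrium $\widetilde{Q'}$ sits in the equivalence class of $Q$.
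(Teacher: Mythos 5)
Your proposal is correct. The paper actually omits the proof of this proposition as ``evident,'' and your argument --- pushing $Q$ forward along $s_i\mapsto g_i(\cdot,s_i)$, noting that honest play induces the same conditional distribution of actions given types (hence the same canonical representative), and obtaining each standard-form incentive constraint for $(\sigma_i,a_i)$ by summing the original constraints of Proposition~\ref{prop:equilibrium-corr-simplified} over the fibre $\{s_i : g_i(\cdot,s_i)=\sigma_i\}$ --- is precisely the natural filling-in of that omitted argument, with the marginalisation step and the ``if and only if'' handled correctly.
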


\begin{remark}
\normalfont
Note that we can also define Nash equilibria in this formalism.
Nash equilibria are precisely the correlated equilibria $(\gv,Q)$
with a product distribution $Q(\sv) = Q_1(s_1)\cdots Q_n(s_n)$.

Note that it is straightforward to see that if $(\gv,Q)$ is a Nash equilibrium,
then the canonical representative is also a product distribution, i.e.,
$\widehat{Q} = \widehat{Q}_1 \times \cdots \times \widehat{Q}_n$, and
in fact a Nash equilibrium.
Conversely, if $Q(\av\mid\tv)$
is a canonical communication equilibrium that factorizes,
i.e., $Q(\av\mid\tv) = Q_1(a_1\mid t_1) \cdots Q_n(a_n\mid t_n)$, then there is a
Nash equilibrium in its equivalence class with
the same payoffs. This is obtained by writing each of the local
transition probabilities $Q_i(a_i\mid t_i)$ as probabilistic mixtures
of functions in $A_i^{T_i}$.

Thus, in the spirit of previous definitions,
we can speak of a \emph{canonical Nash equilibrium} as a factorizing
canonical communication equilibrium $Q(\av\mid\tv)$.
We will denote the set of canonical Nash equilibria of a game $G$ as $\NEQ(G)$.
\end{remark}

As expected, and as it should be, the set of correlated equilibria is convex:
\begin{proposition}
  \label{prop:corr-equilibria-convex}
  If $(\gv^{(1)},Q_1)$ and $(\gv^{(2)},Q_2)$ are correlated equilibria
  for the same game, with $Q_j$ a distribution on $S^{(j)}$
  and $0\leq p \leq 1$, then so is $(\gv,Q)$ with
  \begin{align*}
    Q(\sv,\bv)   &:= \begin{cases}
                       p Q_1(\sv)     & \text{ if } b_1=\ldots=b_n=0,\ \sv \in S^{(1)}, \\
                       (1-p) Q_2(\sv) & \text{ if } b_1=\ldots=b_n=1,\ \sv \in S^{(2)}, \\
                       0              & \text{ otherwise};
                     \end{cases}    \\
    g_i(s_i,b_i) &:= g^{(b_i+1)}(s_i).
  \end{align*}
\end{proposition}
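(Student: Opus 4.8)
The construction is already spelled out in the statement, so the plan is just to check the two things it must satisfy: that $(\gv,Q)$ is a bona fide correlated solution, and that it meets the equilibrium criterion. For the first point one observes that $Q$, a nonnegative function on (a copy of) $\bigl(S^{(1)}\sqcup S^{(2)}\bigr)\times\{0,1\}^n$ with total mass $p+(1-p)=1$, is a probability distribution, and that as a distribution on advice strings it is by construction independent of the players' inputs, so the solution is correlated in the sense of Definition~\ref{def:correlated}. (If the advice sets $S^{(j)}$ were not disjoint one first replaces them by disjoint copies; in any case the bit $b_i$ already tags which branch a given advice came from.) Since, as remarked right after Definition~\ref{def:correlated}, the input functions $f_i$ play no role for correlated solutions, it suffices to verify the simplified equilibrium criterion of Proposition~\ref{prop:equilibrium-corr-simplified}.

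So I would fix a player $i$, a type $t_i$, an advice value $(s_i,b_i)$ for $i$, and a candidate deviation action $a_i$. The key observation --- the same one underlying the proof of Proposition~\ref{prop:comm-equilibria-convex} --- is that $Q\bigl((s_i,b_i),(\sv_{-i},\bv_{-i})\bigr)$ vanishes unless all the $b$-coordinates agree, so the coordinate $b_i$ that player~$i$ sees pins down which of $Q_1,Q_2$ was drawn. Consider the case $b_i=0$ (the case $b_i=1$ is identical with $(\gv^{(1)},Q_1)$, $(\gv^{(2)},Q_2)$ interchanged and the weight $p$ replaced by $1-p$). Then in both sums appearing in the criterion only the terms with $\bv_{-i}=(0,\ldots,0)$ survive, on those terms $Q=pQ_1(\sv)$, and $g_j\bigl(t_j,(s_j,0)\bigr)=g_j^{(1)}(t_j,s_j)$ for every $j$. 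Hence the left- and right-hand sides of the criterion for $(\gv,Q)$ at $\bigl(i,t_i,(s_i,0),a_i\bigr)$ equal, respectively, $p$ times the left- and right-hand sides of the criterion for $(\gv^{(1)},Q_1)$ at $(i,t_i,s_i,a_i)$.

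Since $(\gv^{(1)},Q_1)$ is a correlated equilibrium, Proposition~\ref{prop:equilibrium-corr-simplified} tells us its left-hand side dominates its right-hand side for this choice of parameters, and multiplying through by $p\ge 0$ preserves the inequality; the symmetric argument with $(\gv^{(2)},Q_2)$ and weight $1-p$ handles $b_i=1$. Therefore the criterion of Proposition~\ref{prop:equilibrium-corr-simplified} holds for $(\gv,Q)$ at every tuple $\bigl(i,t_i,(s_i,b_i),a_i\bigr)$, and by that proposition $(\gv,Q)$ is a correlated equilibrium, as claimed. There is no substantive obstacle: the only points needing a moment's care are making the two advice alphabets disjoint (handled by the redundant tag $b_i$) and noting that it is precisely the revelation of $b_i$ to each player that makes the deviation analysis decompose branch by branch --- exactly as in the convexity proof for communication equilibria.
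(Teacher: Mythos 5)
Your proof is correct and follows essentially the same route as the paper's: the paper's (much terser) argument rests on exactly the observation you make, that the bit $b_i$ tells each player which of $(\gv^{(1)},Q_1)$, $(\gv^{(2)},Q_2)$ was implemented, so any profitable deviation in $(\gv,Q)$ would yield one in $Q_1$ or $Q_2$. Your branch-by-branch verification via Proposition~\ref{prop:equilibrium-corr-simplified} is just a careful formalization of that same insight.
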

\begin{proof}
The proof follows from the insight that $(\gv,Q)$ gives each player the information which
of the two solutions $(\gv^{(1)},Q_1)$ and $(\gv^{(2)},Q_2)$
was implemented, so any benefit from deviating from the advice
for player $i$ would imply an advantage for the player
in one of $(\gv^{(1)},Q_1)$ or $(\gv^{(2)},Q_2)$, contradicting
the assumption that they are equilibria.
\end{proof}

\paragraph{An example.}
Consider again the {\tt CHSH} game defined above.
It is easy to check that the belief-invariant equilibrium described above is not correlated.
Indeed, a shared random variable is not sufficient to understand which game players are actually playing
and to understand if they need to coordinate or to anti-coordinate.

An example of correlated equilibrium for this game is $(\id, Q)$ with $Q(00)= Q(11) = 1/2$.
Here, the action of each player does not depend on the other player's type.
Anyway, it is still convenient for the player to follow the suggestion, since they will receive payoff $1$ \dio{with probability $1$ if their observed type is 0, and
with probability $1/2$ otherwise,} and there is no alternative action that allows them to receive this payoff with larger probability.
Observe that this equilibrium is not a Nash equilibrium, since it requires shared randomness and cannot be factorized.
A Nash equilibrium with the same payoffs is $(\id, Q)$ with $Q(00)= 1$.

\subsection{General properties of the equilibrium classes} \label{sec:properties}
The main goal of this work is to highlight the relations between the
different concepts of equilibria we have introduced so far.
In this section we will start this analysis by first discussing a general inclusion between the equilibrium classes, and then by giving examples of games for which these classes coincide.
Later, in Section~\ref{sec:examples}, we will continue our analysis by comparing the performance of
these equilibria with respect to social welfare maximization.
Further properties and applications can be found in the Appendix.

\medskip

\paragraph{Inclusions} We have observed that there are correlated equilibria that are not Nash,
belief-invariant equilibria that are not correlated,
and communication equilibria that are not belief-invariant.
Since communication, belief-invariant and correlated
equilibria have convexity properties, we can arrange the canonical versions of these equilibria
into nested sets within the set of canonical correlations $Q(\av\mid\tv)$:
\begin{equation}
  \label{eq:inclusions}
  \NEQ(G) \subset \operatorname{conv}\bigl(\NEQ(G)\bigr)
                \subset \CORREQ(G)
                \subset \BIEQ(G)
                \subset \COMMEQ(G),
\end{equation}
where $\operatorname{conv}\bigl(\NEQ(G)\bigr)$ denotes the convex hull of the set of Nash equilibria for $G$.

Actually, this inclusion can be easily derived from the similar structure existing between the
different classes of correlation resources behind these equilibrium concepts.

Indeed, by our previous observations, $\CORREQ(G)$ is always a convex
subset of the local correlations $\mathtt{LOC}(A\mid T)$, $\BIEQ(G)$
is a convex subset of the non-signalling correlations $\mathtt{BINV}(A\mid T)$,
and all are contained in the set of all correlations $\mathtt{ALL}(A\mid T)$.
The inclusion structure between these classes is known (see, e.g., \cite{masanes}) for full-coordination games\footnote{In quantum physics and computer science, these games are known as non-local games.
While they may seem uninteresting because of the lack of competition, they are a wonderful way of reasoning about the classes of correlations.
The objective of a non-local game, in our parlance here, is to find
an optimal equilibrium $\gv$ with respect to a certain payoff function
$v(\tv,\av)$ common to all players, which simply boils down to
optimizing $\dio{\frac{\SW(\gv)}{|N|}} = \sum_{\tv,\av} P(\tv)Q(\av\mid \tv) v(\tv,\av)$
over all canonical solutions $Q$ from a given class.}.

\paragraph{Situations in which communication or correlation are useless.}
If our game $G$ is really one of complete information in the sense that
the types of the players are deterministically prescribed, i.e.,
\dio{$T_i = \{t^*_i\}$ for each player $i$ and thus $P(\tv^*) = 1$},
then any communication equilibrium
is equivalent to a correlated equilibrium.
More precisely, for the canonical form $Q$ of the communication
equilibrium $\gv$, $Q^* := Q(\cdot\mid\tv^*)$ as a probability distribution on
$A = A_1 \times \cdots \times A_n$ is a correlated equilibrium,
\dio{such that each player has an expected payoff}
\begin{align*}
  \avg{v_{i\dio{,t_i^*}}(\gv)} & = \sum_{\dio{\tv_{-i}},\av} P(\dio{\tv_{-i} \mid t^*_i}) Q(\av\mid(\tv_{-i},t_i^*)) v_i((\tv_{-i},t_i^*),\av)\\
            & = \sum_{\av} Q(\av\mid\tv^*) v_i(\tv^*,\av)
            = \sum_{\av} Q^*(\av) v_i(\tv^*,\av).
\end{align*}
This reproduces the result
of Zhang~\cite{Zhang12}, that quantum (and indeed \emph{any})
correlation doesn't change the landscape of equilibria in games
of complete information beyond correlated equilibria.

\medskip
{Secondly, we can identify a simple class of games where the presence of
no additional correlation changes the set of equilibrium payoffs;
we call them (two-player) \emph{\dio{symmetric} constant sum games \dio{with uniform prior}}. They are
characterized by the property that \dio{$|T_1| = |T_2| = \Theta$ and for every $t_1, t_2, a_1, a_2$}
\[
  v_1(t_1 t_2, a_1 a_2) + v_2(t_1 t_2, a_1 a_2) = s(t_1 t_2),
\]
i.e., the sum of the two players' individual payoffs depends
only on the type. \dio{Moreover, this game priors are uniform, i.e., $P(t_1, t_2) = \frac{1}{\Theta^2}$ for every $t_1, t_2$. Note that this implies that $P(t_1 \mid t_2) = P(t_2 \mid t_1) = \frac{1}{\Theta}$ for every $t_1, t_2$.}
This implies that for every $t_1, t_2$, regardless of the solution $\gv$
employed,
\begin{align*}
  \avg{v_{\dio{1,t_1}}(\gv)} + \avg{v_{\dio{2,t_2}}(\gv)} & = \dio{\sum_{t'_2 \in T_2,\av} P(t'_2 \mid t_1) Q(\av \mid (t_1,t'_2)) v_1((t_1,t'_2),\av)}\\
  &\dio{\qquad \qquad + \sum_{t'_1 \in T_1,\av} P(t'_1 \mid t_2) Q(\av \mid (t'_1,t_2)) v_2((t'_1,t_2),\av)}\\
  & = \dio{\frac{1}{\Theta} \sum_{\tv,\av} Q(\av \mid \tv) [v_1(\tv,\av) + v_2(\tv,\av)]}\\
  & = \dio{\frac{1}{\Theta} \sum_{\tv} s(\tv)} =: \avg{s}.\footnotemark
\end{align*}
\footnotetext{\dio{Note that $\avg{s} \neq \EE_{\tv}[s]$ (specifically, $\avg{s} = \Theta\EE_{\tv}[s]$). However, we slight abuse notation for sake of readability.}}
Thus, the game is an instance of a zero-sum game according to the
theory of von Neumann and Morgenstern~\cite{vNeumann-Morgenstern}
(we can make the sum of payoffs explicitly equal to zero by subtracting
$s(t_1t_2)$ from $v_1(t_1t_2,a_1a_2)$, but we refrain from doing so
not to overload notation).

This means that there exists a (mixed) strategy $Q^{\rm vNM}_i(a_i\mid t_i)$
for player $i$ \dio{who observed type $t_i$}, which guarantees him a payoff $v^{\rm vNM}_{i,\dio{t_i}}$ regardless
of what the other player does.
\dio{Let $\avg{v^{\rm vNM}_{i,t_i}} = \EE_{\tv_{-i} \mid t_i,(Q^{\rm vNM}_1,Q^{\rm vNM}_2)}[v_i(t_i,\tv_{-i},a_1,a_2)] = \Theta v^{\rm vNM}_{i,t_i}$. Then we have that}
$\dio{\avg{v^{\rm vNM}_{1,t_1}}}+\dio{\avg{v^{\rm vNM}_{2,t_2}}} = \avg{s}$.
In particular, if we consider any communication equilibrium $\gv$ of the
game, then the first player's payoff cannot increase if the second player
were to use his von-Neumann-Morgenstern strategy (ignoring the advice)
that guarantees him a payoff of at least $v^{\rm vNM}_{2,\dio{t_2}}$. Thus, for every $t_1, t_2$
\[
  \avg{v_{1,\dio{t_1}}(\gv)} \leq \avg{s} - \dio{\avg{v^{\rm vNM}_{2,t_2}}} = \dio{\avg{v^{\rm vNM}_{1,t_1}}},
\]
and symmetrically
\[
  \avg{v_{2\dio{,t_2}}(\gv)} \leq \avg{s} - \dio{\avg{v^{\rm vNM}_{1,t_1}}} = \dio{\avg{v^{\rm vNM}_{2,t_2}}}.
\]
This generalizes a result of~\cite{Brunner13}: in the section titled ``a game where none of the payoff functions is a 
Bell inequality'' the authors discuss a game without quantum advantage which is \dio{symmetric} constant-sum \dio{and has uniform prior}.}

\section{Impact of correlation on social welfare}
\label{sec:examples}
In this section we show that no-signalling correlation can have a positive impact on the social welfare of a game.
Specifically, there are games in which a belief-invariant equilibrium can achieve a social welfare that is better than every correlated equilibria.
The {\tt CHSH} game discussed above gives us a clear example of this fact:
indeed, we showed there is a belief-invariant equilibrium that achieves an expected social welfare of $2$,
whereas it is not hard to see that no correlated equilibrium is better that the one described above,
whose social welfare is $3/2$.

However, the {\tt CHSH} game is a two-player full coordination game, and one can wonder whether such a result
holds even if we consider games with conflict of interests and/or with more than two player.
Pappa et al.~\cite{Pappa15} give a partial answer to this question, by showing that
the above result holds for a two-player conflict-of-interest variant of the {\tt CHSH} game.
Below, we report this result for completeness.
Moreover, we extend their result by presenting a $n$-player game with conflict of interests in which
a belief-invariant equilibrium exists that is better than any correlated equilibrium.
Interestingly, our game is a variant of the {\tt GHZ} game,
a game motivated from quantum mechanics \cite{Greenberger90}.

Since the class of belief-invariant equilibria strictly contains the class of correlated equilibria,
it may be expected that the former contains equilibria that are better than the ones in the latter class.
It is instead surprising that a correlated equilibrium
can perform better
than any other communication equilibrium.
However, we next show that this may be the case.
In other words, we prove that locality
is
not only
a desirable requirement,
but
it is
sometimes necessary in order to achieve high social welfare. Note that a general form of Pappa et al.\ game has been studied in \cite{roy2016quantum}, where the
authors have shown a quantum advantage in the context of social welfare.

\subsection{Belief-invariant equilibria can outperform correlated equilibria}
\subsubsection{Two-player games with conflict of interests}
A modified version of {\tt CHSH} has been used in \cite{Pappa15} to obtain a
two-player game with conflict of interests in which there is a belief-invariant equilibrium
that achieves a better expected social welfare than any correlated equilibrium.
We report it here for completeness.

In this game, the players are still interested in coordinating or anti-coordinating as in {\tt CHSH},
but now each player prefers a specific outcome, as follows:
\renewcommand{\gamestretch}{1.5}
\begin{figure}[htb]
\centering
\begin{game}{2}{2}[$t_1\cdot t_2=0$]
    & 0    & 1 \\
0   &1,$\frac{1}{2}$   &0,0\\
1   &0,0   &$\frac{1}{2}$,1
\end{game}
\qquad \qquad
\begin{game}{2}{2}[$t_1 \cdot t_2=1$]
    & 0    & 1 \\
0   &0,0    &$\frac{3}{4}$,$\frac{3}{4}$\\
1   &$\frac{3}{4}$,$\frac{3}{4}$    &0,0
\end{game}
\caption{The game of Pappa et al. \cite{Pappa15}}
\label{fig:CHSH2}
\end{figure}
\renewcommand{\gamestretch}{1}

%

The pure strategies $(0,0)$ and $(1,1)$ lead to two equilibria with unfair expected payoffs, in a battle-of-sexes flavour.\footnote{Battle of Sexes is a classic game theory example. It is used in many textbooks, for example \cite{fudenberg1991game,NisaRougTardVazi07}.}
No player has incentive to deviate from constant actions $(0,0)$, but the fist player has 
\dio{expected payoff $1$ if the observed type is $0$, and $\frac{1}{2}$ otherwise, while the other player has expected payoff $\frac{1}{2}$ if the observed type is $0$, and $\frac{1}{4}$ otherwise.}
For the second equilibrium, constant $(1,1)$, we have the same unfairness, this time in favor of the second player.

The situation can be improved with the notions of communication equilibria we discussed in Section \ref{sec:equilibria}.
With a correlated equilibrium we have a solution similar to battle-of-sexes: with one bit shared randomness one can select either the first or the second pure equilibrium uniformly. This makes the situation fair, with an expected payoff for each player of \dio{$\frac{3}{4}$ if the observed type is $0$, and $\frac{3}{8}$ otherwise (hence the expected welafare of each player is $\frac{9}{16}$, and the expected social welfare is $\frac{9}{8}$)}.
There is also an unfair correlated equilibria where the two players get \dio{expected welfare} of $\frac{11}{16}$ and $\frac{7}{16}$, respectively.
However, the following belief-invariant correlation guarantees to both players a fair and optimal expected \dio{welfare} of $\frac{3}{4}$:
\begin{equation}
\label{eq:chsh_ns_strategy}
\begin{split}
\text{ If $ t_1\cdot t_2=0$  then }  \quad Q(0,0\mid t) = Q(1,1\mid t) = \frac{1}{2}, \\
\text{ if $ t_1\cdot t_2=1$  then } \quad  Q(0,1\mid t) = Q(1,0\mid t) = \frac{1}{2}.
\end{split}
\end{equation}
This is belief-invariant because the marginal of each player is a uniformly random bit whatever the other player's type is. Notice that this correlation solves perfectly the common objective of {\tt CHSH}, i.e., coordinating if $ t_1\cdot t_2=0$ and anti-coordinating otherwise. Also, in the case $ t_1\cdot t_2=0$ it behaves like a correlated equilibrium in the battle-of-sexes game, by selecting one of the two pure strategies $(1,1)$ or $(0,0)$ uniformly.

As said above, this correlation is used in \cite[page 335]{Forges06}, and it is well-known in the physics community as the PR-box \cite{PR}.
It is also known that this belief-invariant correlation cannot be implemented as a local one \cite{Tsilerson}.

\subsubsection{$n$-player games with conflict of interests}
\label{sec:ghz}
We now introduce a game based on an example in physics known as \emph{the GHZ state} for three parties \cite{Greenberger90}. (The result can be generalized to $n$ parties, we chose $n=3$ for simplicity.)

We have three players, each one with two possible types (which we label type 0 and 1) and two possible actions (action 0 and 1). The possible type triples $(t_1, t_2,t_3)$ are taken from the set $\{ (0,0,1), (0,1,0), (1,0,0), (1,1,1) \} $ with probability $$p(0,0,1) = p(0,1,0) = p(1,0,0) = \frac{1}{6}, \qquad p(1,1,1) = \frac{1}{2}.$$
Let $\tau = t_1 \cdot t_2 \cdot t_3$. We have $\Pr(\tau=1) = \Pr(\tau=0) = \frac{1}{2}$.
The payoff are given \dio{in Figure~\ref{tab:GHZ}}.
\begin{figure}[htb]
\centering
\begin{minipage}{0.5\textwidth}
\begin{game}{2}{2}[0]
    & 0    & 1 \\
0   &$\eps$,$\eps$,$\eps$   &0,0,0\\
1   &0,0,0   &$\eps$,1,1
\end{game}
\qquad
\begin{game}{2}{2}[1]
    & 0    & 1 \\
0   &0,0,0    &1,$\eps$,1\\
1   &1,1,$\eps$    &0,0,0
\end{game}
\subcaption{$\tau = 0$}
\end{minipage}
\begin{minipage}{0.5\textwidth}
\begin{game}{2}{2}[0]
    & 0    & 1 \\
0   &0,0,0   &1,1,$\eps$\\
1   &1,$\eps$,1   &0,0,0
\end{game}
\qquad
\begin{game}{2}{2}[1]
    & 0    & 1 \\
0   &$\eps$,1,1    &0,0,0\\
1   &0,0,0    &$\eps$,$\eps$,$\eps$
\end{game}
\subcaption{$\tau = 1$}
\end{minipage}
\caption{A modified {\tt GHZ} game. The subgame (a) is played when $\tau = 0$ and subgame (b) when $\tau = 1$.
In both cases the strategy player 1 identify the table (0 for right table, 1 for the left table),
the strategy of player 2 chooses the row (0 for the top row, 1 for the bottom row),
and the strategy of player 3 chooses the column (0 for left column, 1 for right column).
Within a cell, the first value is the payoff of player 1, the second value is the payoff of player 2 and the last payoff is for player 3.}
\label{tab:GHZ}
\end{figure}

Thus, the players jointly lose the game (have all payoff 0) whenever $\tau \neq a_1 + a_2 + a_3  \mod 2$.  In the non-losing cases, the players whose action is equal to $\tau$ receive payoff $\eps$ (a positive number very close to 0), while the others receive payoff 1.

Therefore this game, in the spirit of \cite{Pappa15}, features both coordination and conflicting interests. The players are jointly interested in minimizing the probability of having payoff 0, while each player individually dislikes to be the one implementing the action $\tau$ in the winning cases.

%

In the best correlated equilibrium (in terms of expected \dio{social welfare}),
the mediator suggests to each player $i$ the function $$\sigma_i = 1-t_i \mod 2.$$
This always wins in the case $\tau = 0$ and loses in the case $\tau = 1$.
It gives \dio{expected social welfare $\frac{2+\eps}{2}$}.

The best communication equilibrium is as follows.
The mediator learns the types, and if $\tau = 0$ he suggests actions $(0,1,1), (1,0,1), (1,1,0)$ uniformly at random,
while if $\tau = 1$ he suggests actions $(0,0,1), (0,1,0), (1,0,0)$ uniformly at random.
This gives \dio{expected social welfare $2+\eps$}.


This communication equilibrium is not belief-invariant,
because the marginals for the players' actions are not the same
in case $\tau = 0$ and $\tau = 1$. For example, if a player has type 1
and receives advice for action 0, then his belief will change,
assigning more probability to the case $\tau = 1$.

Thus, in any belief-invariant equilibria,
the mediator must make sure that, for all the possible triples of types,
the marginal distributions of all players are the same.
Since the payoff of player $i$ is maximized when the action $a_i \neq \tau$,
the expected social welfare for a belief-invariant equilibrium is maximized by considering a distribution $Q$
such that the $i$-th marginal gives $\Pr_{Q, \tv}(a_i = 1 \mid t_i) = \Pr_{Q, \tv}(a_i = 0 \mid t_i) = \frac{1}{2}$, whatever the type $t_i$ is.
This is implemented by the following distribution $Q$:
\begin{equation}
\label{eq:ghz_ns_strategy}
\begin{split}
& \text{ If $\tv \in \{ (0,0,1), (0,1,0), (1,0,0) \}$  then }  \\ & \qquad \qquad  Q(0,0,0\mid \tv) = Q(0,1,1\mid \tv) =Q(1,0,1\mid \tv) =Q(1,1,0\mid \tv) = \frac{1}{4}, \\
& \text{ if $\tv = (1,1,1) $  then } \\ & \qquad \qquad  Q(0,0,1\mid \tv) =Q(0,1,0\mid \tv) =Q(1,0,0\mid \tv) =Q(1,1,1\mid \tv) = \frac{1}{4}.
\end{split}
\end{equation}
It is easy to see that this is a belief-invariant equilibrium,
since any deviating player would decrease his own expected payoff by deviating
(it makes everyone lose in at least a value of $\tau$).

One can check that in this equilibrium \dio{the expected social welfare is $\frac{3}{4} (2 + \varepsilon)$}
that is better than the expected \dio{social welfare} of the best correlated equilibrium.
There are biased communication equilibria produced through an unrestricted, non-private, correlation. Such an equilibrium can have \dio{expected social welfare} as large as \dio{3}.

\subsection{Belief-invariant equilibria can outperform non-belief-invariant ones}

Consider the following two-player game of incomplete information:
the two players with types $t_1, t_2 \in \{0,1\}$.
Each player has also two available actions, also named $0$ and $1$.
The payoffs are as follows:
\begin{figure}[htb]
\centering
\begin{game}{2}{2}[$t_1\cdot t_2=0$]
    & 0    & 1 \\
0   &1-$\eps$,1-$\eps$   &2,0\\
1   &0,2   &2-$\eps$,2-$\eps$
\end{game}
\qquad \qquad
\begin{game}{2}{2}[$t_1 \cdot t_2=1$]
    & 0    & 1 \\
0   &0,0    &1,1\\
1   &1,1    &0,0
\end{game}
\caption{A game in which belief-invariant equilibria outperform non-belief-invariant ones}
\label{fig:PDLG}
\end{figure}

Thus, if $t_1\cdot t_2=0$, then payoffs resemble the ones of the Prisoners' Dilemma,\footnote{The Prisoners' Dilemma is another classic example in game theory. It is found in many textbooks, for example \cite{fudenberg1991game,NisaRougTardVazi07}.}
so that it is a dominant strategy for each player to take action $0$.
If instead $t_1\cdot t_2=1$ then players are playing a full coordination game in which they
prefer to take different actions.
We assume that each type profile $(t_1, t_2)$ has the same probability $\frac{1}{4}$ of being generated.

Let us consider a distribution $Q$ of the form $Q(\av \mid \tv)$.
We will show that for any such distribution, if $(\id, \id, Q)$ is an equilibrium
and maximizes the social welfare among all the equilibria,
then it is a correlated equilibrium, and hence it is belief-invariant.
Note that there is no loss of generality in considering only canonical equilibria.
Indeed, as stated above, if a non-canonical communication equilibrium $(\fv,\gv,Q')$ exists with a better social welfare,
then its canonical representative $(\id, \id, \widehat{Q'})$ is still an equilibrium
and has \dio{the same outcome and thus} the same social welfare as $(\fv,\gv,Q')$.

We start by stating conditions for $(\id,\id,Q)$ being an equilibrium that maximizes the social welfare.
A first simple observation is the following:
If $Q(1,a_2 \mid 0,t_2) > 0$, then $(\id,\id,Q)$ is not in equilibrium.
Indeed, when $t_1 = 0$, then player 1 knows with probability $1$ that $t_1\cdot t_2=0$,
and thus it is a dominant strategy to take action $0$.
By symmetry, the same observation holds by inverting the roles of players.
Hence, in order to have a canonical equilibrium we must have that $Q$ is as follows:
\begin{equation}
\label{eq:Q_def}
\begin{aligned}
 Q(0,0 \mid 0,0) = 1;\; Q(0,0 \mid 0,1) &= 1-p;\; Q(0,1 \mid 0,1) = p;\\
 Q(0,0 \mid 1,0) = 1-q;\;& Q(1,0 \mid 1,0) = q;\\
 Q(0,0 \mid 1,1) = p_{00}; \; Q(0,1 \mid 1,1) = p_{01}; &\; Q(1,0 \mid 1,1) = p_{10}; \; Q(1,1 \mid 1,1) = p_{11}.
 \end{aligned}
\end{equation}

Next lemma states conditions on these values in order for $Q$ being an equilibrium.
\begin{lemma}
\label{lem:ceq_cond}
$(\id, \id, Q)$ is an equilibrium if and only if the following conditions are satisfied:
\begin{align}
 p_{10} - p_{11} & \geq (1-\eps)q \label{ceq1}\\
 p_{01} - p_{11} & \geq (1-\eps)p \label{ceq2}\\
 p_{00} - p_{01} & \leq (1-\eps)(1-q) \label{ceq3}\\
 p_{00} - p_{10} & \leq (1-\eps)(1-p) \label{ceq4}\\
 (p_{01}+p_{10}) - (p_{00}+p_{11}) & \geq (1-\eps)(2q-1) \label{ceq5}\\
 (p_{01}+p_{10}) - (p_{00}+p_{11}) & \geq (1-\eps)(2p-1). \label{ceq6}
\end{align}
\end{lemma}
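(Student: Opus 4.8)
The plan is to read off the six conditions directly from the characterisation of canonical communication equilibria in Definition~\ref{def:communication_canonical}. Since there are only two players, two types and two actions, for each player $i$ and each type $t_i$ there are at most $2\cdot 4=8$ deviations to examine: a reported type $r_i\in\{0,1\}$ together with a relabelling $\alpha_i\colon\{0,1\}\to\{0,1\}$ of the suggested action. For each of these I would write down player $i$'s expected payoff, conditioned on $t_i$, from the two payoff tables in Figure~\ref{fig:PDLG} and the entries of $Q$ in \eqref{eq:Q_def}, and compare it with the payoff of the honest strategy ($r_i=t_i$, $\alpha_i=\id$). The lemma is then the conjunction of the resulting finitely many linear inequalities, after the redundant ones are removed.

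First I would dispose of the type-$0$ players. If $t_1=0$ then $t_1\cdot t_2=0$ with certainty, and in that payoff table action $0$ strictly dominates action $1$ for player~$1$; hence whatever type he reports and whatever action he is suggested, playing $0$ is optimal, so the honest strategy --- which by \eqref{eq:Q_def} already outputs $0$ --- is a best response, and no deviation helps. The same holds for player~$2$ with $t_2=0$. Thus the equilibrium constraints come only from the two type-$1$ players, and by the symmetry of the game under exchanging the players (which swaps $p\leftrightarrow q$ and $p_{ij}\leftrightarrow p_{ji}$) it suffices to analyse player~$1$ with $t_1=1$ and transcribe the outcome for player~$2$.

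Conditioned on $t_1=1$, the type $t_2$ equals $0$ or $1$ with probability $\tfrac12$ each; in the first case the players use the $t_1\cdot t_2=0$ table and player~$1$ is advised through $Q(\cdot\mid r_1,0)$, in the second the $t_1\cdot t_2=1$ table and $Q(\cdot\mid r_1,1)$, while the honest player~$2$ always plays what he is told. A short computation gives the honest payoff $\tfrac12\bigl[(1-q)(1-\eps)+p_{01}+p_{10}\bigr]$, and comparing it, for the truthful report $r_1=1$, against the three non-identity relabellings $\alpha_1$ --- the constant map to $0$, the constant map to $1$, and the swap $0\leftrightarrow1$ --- yields exactly the inequalities \eqref{ceq1}, \eqref{ceq3} and \eqref{ceq5}, respectively; each of these is a single linear inequality, so this settles both directions for these deviations. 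By the player symmetry, the truthful-report deviations of player~$2$ with $t_2=1$ give \eqref{ceq2}, \eqref{ceq4} and \eqref{ceq6}.

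What remains is to handle the deviations in which a type-$1$ player also misreports his type. If player~$1$ (with $t_1=1$) reports $r_1=0$, then by the form of $Q$ in \eqref{eq:Q_def} he always receives the suggestion $a_1=0$, so only $\alpha_1(0)$ matters and there are just two such deviations; each of their expected payoffs is again an elementary expression in $p,q,p_{ij},\eps$, and I would check that the inequality saying it does not beat the honest payoff is implied by \eqref{ceq1}--\eqref{ceq6} together with the simplex relations $\sum_{ij}p_{ij}=1$, $p_{ij}\ge 0$ and $0\le p,q\le 1$, combined with \eqref{eq:Q_def}; the symmetric statement handles player~$2$ reporting $r_2=0$. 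I expect this redundancy check to be the main obstacle, as it is the only step that is not a one-line comparison of two deviations. Assembling all the comparisons then shows that $(\id,\id,Q)$ is an equilibrium precisely when \eqref{ceq1}--\eqref{ceq6} hold.
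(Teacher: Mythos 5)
Your overall strategy is the same as the paper's: reduce to the canonical characterization (Proposition~\ref{prop:communication-simpler}), eliminate the type-$0$ deviations by strict dominance of action $0$, and read \eqref{ceq1}, \eqref{ceq3}, \eqref{ceq5} (resp.\ \eqref{ceq2}, \eqref{ceq4}, \eqref{ceq6}) off the three truthful-report relabellings of the type-$1$ player, using the player symmetry. The genuine gap is exactly the step you defer: you never verify that the misreporting deviations are implied by \eqref{ceq1}--\eqref{ceq6}, and in fact they are not. After player $1$ of type $1$ reports $r_1=0$, the advice pair is drawn from $Q(\cdot\mid 0,t_2)$, so (with the same $\tfrac12$-conditioning convention) playing $0$ yields $\tfrac12\bigl[(1-\eps)+p\bigr]$ and playing $1$ yields $\tfrac12(1-p)$; the resulting constraint $p_{01}+p_{10}\geq p+(1-\eps)q$ (and its mirror image for player $2$) is genuinely new. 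Concretely, take $\eps=\tfrac{1}{10}$, $p=q=\tfrac12$, $p_{01}=p_{10}=\tfrac{9}{20}$, $p_{00}=\tfrac{1}{10}$, $p_{11}=0$: all of \eqref{ceq1}--\eqref{ceq6} hold (the first two with equality), yet the honest conditional payoff of player $1$ of type $1$ is $\tfrac12\bigl[(1-q)(1-\eps)+p_{01}+p_{10}\bigr]=0.675$, while reporting $0$ and playing $0$ gives $\tfrac12\bigl[(1-\eps)+p\bigr]=0.7$. So the redundancy check you propose cannot succeed: the ``if'' direction needs the misreport inequalities added (the ``only if'' direction, which is all the subsequent theorem actually uses, is unaffected).

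For context, the paper's own proof stumbles at precisely this point: it lists the misreport deviation payoffs as $\frac{1-\eps+p_{01}+p_{11}}{2}$ and $\frac{p_{10}+p_{00}}{2}$, i.e.\ the same values as for the truthful report, but those are computed from $Q(\cdot\mid 1,t_2)$; with the advice drawn from $Q(\cdot\mid 0,t_2)$, as the deviation demands, the correct values are $\frac{1-\eps+p}{2}$ and $\frac{1-p}{2}$, and the conclusion that only \eqref{ceq1}, \eqref{ceq3}, \eqref{ceq5} arise does not follow. So your instinct that the misreport case is ``the main obstacle'' is exactly right, but as written your proof is incomplete there, and the implication you hope to check is false, so the route cannot be completed without either adding the misreport conditions to the statement or weakening the lemma to necessity only.
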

\begin{proof}
  By Definition~\ref{def:communication-simpler}, $(\id, \id, Q)$ is an equilibrium
 if and only if for every $i \in \{1,2\}$, $t_i \in \{0,1\}$, $b_i \in \{0,1\}$
 and any function $\chi_i \in A_i^{T_i \times A_i}$
 \begin{equation}
 \label{eq:ceq}
 \begin{aligned}
  \avg{v_{i\dio{,t_i}}(\id, \id, Q)} & := \sum_{\tv_{-i},\av} P(\dio{\tv_{-i} \mid t_i}) Q(\av\mid \tv_{-i}t_i) v_i(\tv,\av_{-i}a_i)\\
  & \geq \sum_{\tv_{-i},\av} P(\dio{\tv_{-i} \mid t_i}) Q(\av\mid \tv_{-i}b_i) v_i(\tv,\av_{-i}\chi_i(t_i,a_i)).
 \end{aligned}
 \end{equation}
 As observed above, when $t_i = 0$, \eqref{eq:ceq} easily holds since, whichever the advice is,
 it is a dominant strategy for player $i$ to take action $0$.
 Hence, we only need to verify that \eqref{eq:ceq} holds when $t_i = 1$.
 To this aim, observe that there are only four possible functions $\chi_i(1,\cdot)$:
 the identity function that sets $\chi_i(1,a_i) = a_i$,
 the two constant functions that set $\chi_i(1,a_i) = 0$ and $\chi_i(1,a_i) = 1$ for every $a_i$, respectively,
 and the negation function that sets $\chi_i(1,a_i) = 1 - a_i$.

 Thus, if $i=1$ and $b_i = t_i = 1$, then we require that
 \begin{align*}
  \avg{v_{1\dio{,1}}(\id, \id, Q)} & \geq \sum_{t_2,\av} P(\dio{t_2 \mid 1}) Q(\av\mid 1,t_2) v_i(\tv,0a_2) = \frac{1-\eps + p_{01} + p_{11}}{2}\\
  \avg{v_{1\dio{,1}}(\id, \id, Q)} & \geq \sum_{t_2,\av} P(\dio{t_2 \mid 1}) Q(\av\mid 1,t_2) v_i(\tv,1a_2) = \frac{p_{10} + p_{00}}{2}\\
  \avg{v_{1\dio{,1}}(\id, \id, Q)} & \geq \sum_{t_2,\av} P(\dio{t_2 \mid 1}) Q(\av\mid 1,t_2) v_i(\tv,(1-a_1)a_2) = \frac{(1-\eps)q + p_{00} + p_{11}}{2}.
 \end{align*}
 Similarly, if $i=1$ and $0 = b_i \neq t_i = 1$, then we require that
 \begin{align*}
  \avg{v_{1\dio{,1}}(\id, \id, Q)} & \geq \sum_{t_2,\av} P(\dio{t_2 \mid 1}) Q(\av\mid 0,t_2) v_i(\tv,a_1a_2) = \frac{1-\eps + p_{01} + p_{11}}{2}\\
  \avg{v_{1\dio{,1}}(\id, \id, Q)} & \geq \sum_{t_2,\av} P(\dio{t_2 \mid 1}) Q(\av\mid 0,t_2) v_i(\tv,0a_2) = \frac{1-\eps + p_{01} + p_{11}}{2}\\
  \avg{v_{1\dio{,1}}(\id, \id, Q)} & \geq \sum_{t_2,\av} P(\dio{t_2 \mid 1}) Q(\av\mid 0,t_2) v_i(\tv,1a_2) = \frac{p_{10} + p_{00}}{2}\\
  \avg{v_{1\dio{,1}}(\id, \id, Q)} & \geq \sum_{t_2,\av} P(\dio{t_2 \mid 1}) Q(\av\mid 0,t_2) v_i(\tv,(1-a_1)a_2) = \frac{p_{10} + p_{00}}{2}.
 \end{align*}
 Since $\avg{v_{1\dio{,1}}(\id, \id, Q)} = \frac{(1-\eps)(1-q) + p_{01} + p_{10}}{2}$,
 it follows that all these inequalities hold if and only if conditions \eqref{ceq1},\eqref{ceq3},\eqref{ceq5} are satisfied.

 By repeating the same argument, we can observe that all the inequalities regarding to the second player hold
 if and only if conditions \eqref{ceq2},\eqref{ceq4},\eqref{ceq6} are satisfied.
\end{proof}

Let us now consider the correlation $Q^\star$ that sets $p = q = p_{01} = p_{10} = 1/2$ and $p_{00} = p_{11} = 0$.
It is immediate to check that conditions (\ref{ceq1}-\ref{ceq6}) are satisfied,
and, hence, $(\id, \id, Q^\star)$ is an equilibrium.

Note also that, if $t_i = 0$, then the player $i$ is suggested to take action $0$, regardless of the other player's type,
whereas, if $t_i = 1$, the player $i$ uses a shared random variable to decide which action they have to take.
Thus, the solution $(\id, \id, Q^\star)$ is equivalent to $(\gv, \widetilde{Q^\star})$,
where $\widetilde{Q^\star}(\av \mid \tv) = \widetilde{Q^\star}(\av)$ for each $\av$, and, in particular,
sets $\widetilde{Q^\star}(00) = \widetilde{Q^\star}(11) = 0$ and $\widetilde{Q^\star}(01) = \widetilde{Q^\star}(10) = 1/2$,
and $\g = (g_1, g_2)$, with $g_i \in A_i^{T_i \times A_i}$ that sets $g_i(0,a_i) = 0$ and $g_i(1,a_i) = a_i$ for each $a_i \in A_i$.
In other words, $(\id, \id, Q^\star)$ is a correlated equilibrium.

Finally, it is easy to check that the expected social welfare of the correlated equilibrium $(\id, \id, Q^\star)$ is $2-\eps$.

We next state the main result of this section,
namely that for this game any communication equilibrium achieving an expected social welfare that is at least $2-\eps$
must be correlated.
\begin{theorem}
 Any canonical communication equilibrium $(\id, \id, Q)$ with expected social welfare at least $2-\eps$ is correlated.
\end{theorem}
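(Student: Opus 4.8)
The plan is to turn the statement into a finite linear-feasibility problem in the parameters of $Q$ and then read off that $Q$ is realized by a shared-randomness strategy. First I would use the reduction already made: by the dominant-action observation any canonical communication equilibrium $(\id,\id,Q)$ of this game has the form \eqref{eq:Q_def}, so it is specified by $p,q\in[0,1]$ and a probability vector $(p_{00},p_{01},p_{10},p_{11})$, which by Lemma~\ref{lem:ceq_cond} satisfy precisely \eqref{ceq1}--\eqref{ceq6}.

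Next I would compute the social welfare. Expanding $\SW(Q)=\sum_{\tv}P(\tv)\sum_{\av}Q(\av\mid\tv)\,(v_1+v_2)(\tv,\av)$ against the payoff tables of Figure~\ref{fig:PDLG}, with $P(\tv)=\tfrac14$ and \eqref{eq:Q_def}, one gets
\[
  \SW(Q)=\frac{3(1-\eps)+(p+q)\,\eps+(p_{01}+p_{10})}{2}.
\]
So $\SW(Q)\ge 2-\eps$ is equivalent to $(p+q)\eps+(p_{01}+p_{10})\ge 1+\eps$, i.e., since $p_{01}+p_{10}=1-p_{00}-p_{11}$, to the scalar bound $p_{00}+p_{11}\le(p+q-1)\eps$, which in particular forces $p+q\ge 1$.

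The core is to combine this welfare bound with \eqref{ceq1}--\eqref{ceq6}. Adding \eqref{ceq1} and \eqref{ceq2} gives $p_{01}+p_{10}\ge(1-\eps)(p+q)+2p_{11}$; substituting $p_{01}+p_{10}=1-p_{00}-p_{11}$ and playing this against the welfare bound, together with \eqref{ceq3}--\eqref{ceq6} (whose last two change sign according to whether $p,q$ exceed $\tfrac12$, requiring a short case split), the aim is to squeeze onto the ``correlated face'' $p_{00}=p_{11}=0$, $p+q=1$, $p_{01}=p$, $p_{10}=q$. These equalities say exactly that each player's type-$1$ marginal action distribution is independent of the other player's type --- the obstruction to representing $Q$ by type-independent shared randomness. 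Once they hold, I would exhibit the witnessing correlated equilibrium as for $Q^\star$: the standard-form solution $(\gv,\widetilde{Q})$ with $\widetilde{Q}$ the distribution $(p_{00},p_{01},p_{10},p_{11})$ over pairs of ``type-$1$ actions'' and $g_i(t_i,\sigma_i)$ equal to $0$ if $t_i=0$ and to $\sigma_i$ if $t_i=1$, so that $\widehat{\widetilde{Q}}=Q$ and \eqref{ceq1}--\eqref{ceq6} specialized to $p+q=1$, $p_{00}=p_{11}=0$ are precisely its equilibrium conditions; hence $(\id,\id,Q)\in\CORREQ(G)$, and a fortiori belief-invariant by \eqref{eq:inclusions}.

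The step I expect to be the genuine obstacle is that squeeze: extracting from one scalar welfare inequality plus the six equilibrium inequalities the rigid conclusion $p+q=1$, $p_{00}=p_{11}=0$, $p_{01}=p$, $p_{10}=q$. It is a finite feasibility check, but delicate --- many constraints are slack at the maximum, the binding ones depend on the signs of $2p-1$ and $2q-1$, and one must carefully rule out ``almost-correlated'' configurations with $p+q$ slightly above $1$.
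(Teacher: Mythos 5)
Your setup coincides with the paper's: the reduction to the form \eqref{eq:Q_def}, the use of Lemma~\ref{lem:ceq_cond}, the welfare formula and its reformulation as $p_{00}+p_{11}\le(p+q-1)\eps$ (hence $p+q\ge 1$), and the final conversion of the rigid configuration $p+q=1$, $p_{00}=p_{11}=0$, $p_{01}=p$, $p_{10}=q$ into a correlated equilibrium in standard form are exactly the paper's steps. But the one step that carries the entire theorem --- deducing that rigid configuration from \eqref{ceq1}--\eqref{ceq6} together with the welfare bound --- is only announced (``the aim is to squeeze onto the correlated face''), and you yourself flag it as the obstacle. As written, the proposal therefore has a genuine gap rather than a deferrable finite check.

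Worse, the squeeze cannot be completed from the premises you allow yourself. For any $\eps\in(0,1)$ and any $0<\delta\le\frac{\eps}{2(1-\eps)}$, take $p=q=\tfrac12+\delta$, $p_{01}=p_{10}=\tfrac12$, $p_{00}=p_{11}=0$: all six inequalities \eqref{ceq1}--\eqref{ceq6} hold, and the social welfare equals $2-\eps+\eps\delta\ge 2-\eps$, yet $\Pr\{a_1=1\mid \tv=(1,0)\}=\tfrac12+\delta\neq\tfrac12=\Pr\{a_1=1\mid\tv=(1,1)\}$, so this $Q$ is not belief-invariant and a fortiori not correlated. Hence no linear manipulation of \eqref{ceq1}--\eqref{ceq6} plus the single welfare inequality can force $p+q=1$; and the paper's own chain at this point, $p_{10}\ge(1-\eps)q+p_{11}\ge q$, requires $p_{11}\ge\eps q$ while only $p_{11}\le\eps q$ has been derived, so mimicking it does not rescue your plan. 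Any correct argument must invoke incentive constraints that Lemma~\ref{lem:ceq_cond} does not record: applying Proposition~\ref{prop:communication-simpler} to misreports, where the advice to the \emph{other} player is drawn from the misreported column of $Q$, yields for instance $p\ge p_{01}+p_{11}$ and $q\ge p_{10}+p_{11}$ (a type-$0$ player reporting $1$ and playing the dominant action $0$) and $p_{01}+p_{10}\ge(1-\eps)q+p$, $p_{01}+p_{10}\ge(1-\eps)p+q$ (a type-$1$ player reporting $0$ and playing $0$). Even with these added, note that the displayed $Q$ remains a communication equilibrium whenever $\delta\le\frac{\eps}{2(2-\eps)}$ (all misreport deviations can be checked directly), with welfare strictly above $2-\eps$ and without belief-invariance; so the difficulty you anticipated is real, the squeeze is not merely ``delicate'' but infeasible from \eqref{ceq1}--\eqref{ceq6} plus the welfare bound, and closing the argument requires reworking the equilibrium conditions themselves rather than a sharper LP feasibility analysis.
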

\begin{proof}
 Let $Q$ be as in \eqref{eq:Q_def}. As observed above, any canonical communication equilibrium $(\id, \id, Q)$
 must be distributed as $Q$ and must satisfy conditions (\ref{ceq1}-\ref{ceq6}).

 The expected social welfare of a solution $(\id, \id, Q)$ is
 \begin{align*}
  \SW(\id,\id, Q) & = \frac{1}{4} \left[2(1-\eps) + (2(1-\eps)(1-p) + 2p) + (2(1-\eps)(1-q)+2q) + 2(p_{01} + p_{10})\right]\\
  & = \frac{1}{2} \left[3 - \eps[3 - p - q] + (p_{01} + p_{10})\right].
 \end{align*}
 Thus, the expected social welfare of $(\id, \id, Q)$ is at least $2-\eps$ only if
 $$
 p_{00} + p_{11} = 1 - p_{01} - p_{10} \leq (p + q - 1) \eps.
 $$
 However, since $p_{00} + p_{11} \geq 0$, we have that
 \begin{equation}
 \label{eq:p+q_ub}
  p+q \geq 1.
 \end{equation}
 Moreover, since $p \leq 1$, we also have that
 $$
  p_{11} \leq p_{00} + p_{11} \leq (p + q - 1) \eps \leq q\eps,
 $$
 and, similarly, $p_{11} \leq p\eps$.

 Then, for a canonical communication equilibrium $(\id, \id, Q)$ with social welfare at least $2-\eps$,
 we have, from \eqref{ceq1}, that $p_{10} \geq (1-\eps)q + p_{11} \geq q$,
 and, from \eqref{ceq2}, that $p_{01} \geq p$.
 Thus, $p+q \leq p_{01} + p_{10} \leq 1$.
 Hence and from \eqref{eq:p+q_ub} we can then conclude that $q = 1-p$, $p_{10} = q$ and $p_{01} = p$.

 Therefore in any communication equilibrium $(\id, \id, Q)$ that achieves welfare at least $2-\eps$,
 the distribution $Q$ is as follows:
 \begin{align*}
  Q((0,0) \mid (0,0)) & = 1;\\
  Q((0,0) \mid (0,1)) = Q((1,0) \mid (1,0)) = Q((1,0) \mid (1,1)) & = 1-p;\\
  Q((0,1) \mid (0,1)) = Q((0,0) \mid (1,0)) = Q((0,1) \mid (1,1)) & = p;
 \end{align*}
 whereas all the remaining probabilities are $0$.

That is, if $t_i = 0$, then the player $i$ is suggested to take action $0$,
and, if $t_i = 1$, then the player $i$ uses a shared random binary variable with distribution $(p, 1-p)$ to decide which action they have to take.
Thus, as above, we can rewrite the equilibrium $(\id, \id, Q)$ as $(\gv, \widetilde{Q})$,
where $\widetilde{Q}(\av \mid \tv) = \widetilde{Q}(\av)$ for each $\av$, and, in particular,
sets $\widetilde{Q}(00) = \widetilde{Q}(11) = 0$, $\widetilde{Q^\star}(01) = p$ and $\widetilde{Q^\star}(10) = 1-p$,
and $\g = (g_1, g_2)$, with $g_i \in A_i^{T_i \times A_i}$ that sets $g_i(0,a_i) = 0$ and $g_i(1,a_i) = a_i$ for each $a_i \in A_i$.
In conclusion, $(\id, \id, Q^\star)$ is a correlated equilibrium.
\end{proof}

We remark that a communication equilibrium that is not correlated exists in this game.
Consider, indeed, the correlation $Q'$ that sets $p = q = 0$, $p_{01} = p_{10} = 1/2$ and $p_{00} = p_{11} = 0$.
We have that $(\id, \id, Q')$ is an equilibrium since it satisfies (\ref{ceq1}-\ref{ceq6}).
However, if the type of player 1 is $1$, then she is suggested to take action $0$
with a larger probability when $t_2 = 0$ than when $t_2 = 1$.
Thus, $(\id, \id, Q')$ is not belief-invariant, and, as a consequence it is not correlated.
However, as proved above, $(\id, \id, Q')$ has an expected social welfare of $2 - \frac{3}{2} \eps$
that is lower than the social welfare of $(\id, \id, Q^\star)$.

\section{Quantum equilibria}
\label{sec:quantum}
In this section we define the class of quantum correlated solutions and
equilibria of games, which is a physically motivated subclass of belief-invariant
solutions and equilibria.
This class is interesting because it allows us to go beyond the local correlations
without the need of an informed trusted mediator. Rather, like the correlated
equilibria, it only requires the (still trusted) distribution of a prior shared
resource: a quantum state. To make use of it, the players need the capability
to store this quantum information, and to make measurements on it at will,
according to their type. The measurement result then informs their action.

\subsection{Quantum formalism}
In the interest of being self-contained, we briefly present the mathematical
formalism necessary to introduce and discuss quantum correlated equilibria.
(For more details, see \cite{nielsen2000quantum, wilde2013quantum}.)
This section is essential for the correct understanding of the definitions that follow, and it contains standard concepts of quantum mechanics presented in a concise way for the reader not acquainted with the topic. We use standard notation, therefore the reader who is already at comfort with quantum formalism can safely skip to Section \ref{sec:qeq}.

Mathematically, a \emph{quantum state} is given by a \emph{density operator}
$\rho$ acting on a complex Hilbert space $\mathcal H$,\footnote{In the finite case with $d$ dimensions one can take $H=\mathbb C^d$.}
which means that $\rho$ is positive semidefinite, $\rho \geq 0$,
and has unit trace, $\Tr(\rho) = 1$. Every density operator can
occur as the state of a system.

A \emph{measurement} on the quantum system with Hilbert space $\mathcal{H}$
is given by a \emph{resolution of the identity},
or \emph{positive operator valued measure (POVM)}, which is a collection $(M_s : s\in S)$
of positive semidefinite matrices $M_s\geq 0$, one for each possible
outcome $s\in S$ of the measurement, acting on $\mathcal{H}$ and such
that $\sum_s M_s = \1$. Every resolution of the identity can be
realized in a quantum mechanical experiment.

States and measurements are the way quantum theory encodes the observable
features of physical systems. The fundamental formula is \emph{Born's rule},
which determines the probability of observing an outcome:
\begin{equation}
  \Pr\{ s \mid \rho \} = \tr \rho M_s.
\end{equation}
The above-mentioned rules for states and measurements make sure that
these numbers are always nonnegative, and add up to $1$ for each state
and each measurement. This formalism includes classical probability
theory, by restricting to diagonal states $\rho$ and measurement
operators $M_s$ (in some fixed, ``computational'' orthonormal basis $\{\ket{x}\}$):
\begin{align*}
  \rho &= \sum_x r_x \proj{x},    \\
  M_s  &= \sum_x \mu_s(x) \proj{x},
\end{align*}
where $\proj{x}$ denotes the projector onto the line $\CC\ket{x}$,
according to the famous Dirac notation of row vectors and column vectors \cite{dirac_1939}, aka ``bra-ket notation''. In this case,
the conditions for a state are equivalent to $r_x \geq 0$ and $\sum_x r_x = 1$,
i.e., $(r_x)_x$ is a probability vector; the conditions for a measurement
reduce to $\sum_s \mu_s(x) = 1$ for all $x$, i.e., $[\mu_s(x)]_{s,x}$
is a stochastic matrix.

Note that for simplicity, we assume the discrete setting here: the Hilbert spaces
are all finite dimensional and the measurements have discrete sets of outcomes.

In our multi-player games, we associate a Hilbert space $\mathcal{H}_i$ to
each player's quantum system, while their joint quantum system is
described by the tensor product Hilbert space
$\mathcal{H} = \mathcal{H}_1 \ox \cdots \ox \mathcal{H}_n$.
If each player has a measurement $M^i = (M^i_{s_i}:s_i \in S_i)$ acting on
$\mathcal{H}_i$, we associate to them a joint measurement
$M^1 \ox \cdots \ox M^n = \bigl(M_{\sv}:\sv = (s_1,\ldots,s_n) \in S\bigr)$
acting on $\mathcal{H}$:
\[
  M_{(s_1,\ldots,s_n)} = M^1_{s_1} \ox \cdots \ox M^n_{s_n}.
\]
Then, for a state $\rho$ acting on the composite Hilbert space $\mathcal{H}$,
\[
  \Pr\{ \sv = (s_1,\ldots,s_n) \mid \rho \} = \tr \rho(M^1_{s_1} \ox \cdots \ox M^n_{s_n}).
\]

To make the link with the correlations discussed in the previous sections,
consider the situation that every player $i$ has access to several measurements
$M^{r_i}$, $r_i \in R_i$, for simplicity all with outcomes in a set $S_i$,
which however could be specific to the player.
Now, given a state $\rho$ and the measurements $M^{r_i}$, the probability of
outputs $s_1,\ldots,s_n$ given the players' inputs $r_1, \ldots, r_n$ is
\begin{equation}
  \label{eq:qcorr}
  Q(s_1,\ldots,s_n \mid r_1, \ldots, r_n) = \Tr \rho (M^{r_1}_{s_1} \otimes \cdots \otimes M^{r_n}_{s_n}).
\end{equation}
This is evidently a correlation, and the correlations that can
be written in the above form, with a suitable state and suitable measurements,
are called \emph{quantum correlations}, their set denoted $\text{Q}(S\mid R)$.
(See, for example, the definition in the survey \cite{palazuelos2015random}.)

\begin{fact}[Belief-invariance of quantum correlations]
  The correlation $Q$ obtained as in eq.~\eqref{eq:qcorr}
  is always belief-invariant.
\end{fact}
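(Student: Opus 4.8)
The plan is to verify the non-signalling condition \eqref{eq:NS} directly from the definition \eqref{eq:qcorr}, exploiting the fact that every POVM is a resolution of the identity, together with linearity of the trace and the tensor-product structure of the joint measurement $M^{r_1}_{s_1}\ox\cdots\ox M^{r_n}_{s_n}$.

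First I would fix an arbitrary subset $I\subset N$ and its complement $J=N\setminus I$, and compute the marginal $\sum_{\sv_J\in S_J} Q(\sv_I,\sv_J\mid\rv_I,\rv_J)$. Substituting \eqref{eq:qcorr} and using linearity of the trace in its second argument, the sum over $\sv_J$ passes inside the trace and factors through the tensor product, so that the marginal equals
\[
  \Tr\Bigl[\rho\,\Bigl(\textstyle\bigotimes_{i\in I} M^{r_i}_{s_i}\ \ox\ \bigotimes_{j\in J}\sum_{s_j\in S_j} M^{r_j}_{s_j}\Bigr)\Bigr].
\]

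Next I would invoke the defining property of a POVM: for every $j\in J$ and every input $r_j\in R_j$ one has $\sum_{s_j\in S_j} M^{r_j}_{s_j}=\1_{\cH_j}$. Plugging this in collapses the $J$-block of the joint measurement to the identity $\1_{\cH_J}=\bigotimes_{j\in J}\1_{\cH_j}$, so the marginal becomes $\Tr\bigl[\rho\,(\bigotimes_{i\in I} M^{r_i}_{s_i}\ox\1_{\cH_J})\bigr]$. This quantity depends only on $\rho$, on $\sv_I$ and on $\rv_I$, and is manifestly independent of $\rv_J$; since $I$ was arbitrary, this is exactly the belief-invariance condition \eqref{eq:NS}.

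I do not expect any genuine obstacle; the only thing needing care is the bookkeeping with the tensor factors, namely checking that partial-summing over $\sv_J$ really does turn the $J$-block into the identity \emph{irrespective of which measurements} $M^{r_j}$ were chosen for those players — which is precisely why the other players' inputs cannot influence the marginal. (As a byproduct, the same computation with $I=\emptyset$ gives $\sum_{\sv} Q(\sv\mid\rv)=\Tr\rho=1$, confirming that $Q$ is indeed a correlation.)
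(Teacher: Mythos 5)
Your argument is correct and is essentially the same as the paper's proof: both marginalize over $\sv_J$, use linearity of the trace together with the POVM completeness relation $\sum_{s_j} M^{r_j}_{s_j}=\1$ to collapse the $J$-block to the identity, and conclude that the marginal is independent of $\rv_J$. No gaps; your remark that this works irrespective of which measurements were chosen for the players in $J$ is exactly the point the paper's computation makes by comparing $\rv_J$ with $\rv'_J$.
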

\begin{proof}
Let $I$ and $J = N\setminus I$ be a partition of $N$.
Recall that for all $j$ and $r_j \in R_j$ we have  $\sum_{s_j} M^{r_j}_{s_j} = \1$.
Thus, for all $s_I \in S_I, \ r_I \in R_I, \ r_J,r'_J \in R_J$,
\begin{align*}
 \sum_{s_J \in S_J} q(s_I,s_J\mid r_I,r_J)
 & =  \sum_{s_J \in S_J}  \Tr \rho \left( \bigotimes_{i\in I} M^{r_i}_{s_i}
                                          \ox \bigotimes_{j\in J} M^{r_j}_{s_j} \right) \\
 & =  \Tr \rho \left(\bigotimes_{i\in I} M^{r_i}_{s_i} \ox \bigotimes_{j\in J} \1 \right) \\
 & =  \sum_{s_J \in S_J}  \Tr \rho \left( \bigotimes_{i\in I} M^{r_i}_{s_i}
                                          \ox \bigotimes_{j\in J} M^{r'_j}_{s_j} \right) \\
 & =  \sum_{s_J \in S_J} q(s_I,s_J\mid r_I,r_J'),
\end{align*}
and we are done.
\end{proof}

\begin{remark}
  \label{rem:classical-quantum}
  \normalfont
  Any local correlation $Q$,
  \[
    Q(\sv\mid\rv) = \sum_{\gamv} V(\gamv) L_1(s_1\mid r_1\gamma_1) \cdots L_n(s_n\mid r_n\gamma_n),
  \]
  can be obtained in the form (\ref{eq:qcorr}), with a suitable state
  and measurement. Namely,
  \[
    \rho = \sum_{\gamv} V(\gamv) \proj{\gamma_1} \otimes \cdots \otimes \proj{\gamma_n},
  \]
  and
  \[
    M^{r_i}_{s_i} = \sum_{\gamma_i} L_i(s_i\mid r_i\gamma_i) \proj{\gamma_i}.
  \]

  Thus we have,
  \[
    \mathtt{LOC}(S\mid R) \subset \text{Q}(S\mid R) \subset \mathtt{BINV}(S\mid R).
  \]
 Bell \cite{bell1964einstein} and Tsilerson \cite{Tsilerson} prove that the above inclusions are strict.
\end{remark}

\subsection{Quantum solutions and quantum correlated equilibria} \label{sec:qeq}

We are now ready to give the definition of a quantum correlated equilibrium.
To start, a \emph{quantum solution} for a game consists of local
measurements $M^{t_i} = (M^{t_i}_{a_i}:a_i\in A_i)$ for player $i$, $t_i\in T_i$,
on a suitable local Hilbert space $\mathcal{H}_i$,
and a state $\rho$ on $\mathcal{H} = \mathcal{H}_1 \ox \cdots \ox \mathcal{H}_n$.
This defines a correlation in canonical form,
\begin{equation}
  \label{eq:quantum-canonical}
  Q(\av\mid\tv) = \tr \rho (M^{t_1}_{a_1} \ox \cdots \ox M^{t_n}_{a_n}),
\end{equation}
and hence expected payoff for player \dio{$i$ that observed type $i$ is}
\[
  \avg{v_{i\dio{,t_i}}(M^{\tv}, \rho)} = \sum_{\dio{\tv_{-i}},\av} P(\dio{\tv_{-i} \mid t_i}) \tr \rho (M^{t_1}_{a_1} \ox \cdots \ox M^{t_n}_{a_n}) v_i(\tv,\av).
\]

In strategic form, the ``quantum correlated'' canonical game goes as follows.
A mediator, who does not know the players' types, has a correlation device
that produces the state $\rho$. The players trust the mediator in using such
device correctly. He sends to each player $i$ the corresponding part of the
state in the space $\mathcal H_i$. He also suggests the measurements
$\{M^{t_i}: t_i \in T_i\}$ to use as a strategy. Note that there is no need
for the players to reveal their types to the mediator, just as in the case
of correlated solutions.

The definition of equilibrium is basically the same as before; we want to capture
the idea that no player has an incentive to deviate from the advice
unilaterally.

\begin{definition}[Quantum correlated equilibrium]
  \label{eq:quantum-corr-eq}
  A quantum solution $(M^{\tv},\rho)$ is a \emph{quantum correlated equilibrium},
  if and only if for all players $i$, \dio{all types $t_i$} and any measurements $N^{t_i} = (N^{t_i}_{a_i} : a_i\in A_i)$,
  \[\begin{split}
    \avg{v_{i\dio{,t_i}}(M^{\tv}, \rho)} &= \sum_{\dio{\tv_{-i}},\av} P(\dio{\tv_{-i} \mid t_i}) \tr \rho (M^{t_1}_{a_1} \ox \cdots \ox M^{t_n}_{a_n}) v_i(\tv,\av) \\
           &\geq \sum_{\dio{\tv_{-i}},\av} P(\dio{\tv_{-i} \mid t_i}) \tr \rho (M^{t_1}_{a_1} \ox \cdots \ox M^{t_{i-1}}_{a_{i-1}}
                                                 \ox N^{t_i}_{a_i}
                                                 \ox M^{t_{i+1}}_{a_{i+1}} \ox \cdots \ox M^{t_n}_{a_n}) v_i(\tv,\av).
  \end{split}\]
  Under the same philosophy as in the sections on belief-invariant and correlated
  equilibria, we then call the canonical solution $Q$ as in eq.~(\ref{eq:quantum-canonical})
  a \emph{canonical quantum correlated equilibrium}, the set of which
  is denoted $\QUEQ(G)$.
\end{definition}

%

\subsubsection{Fitting the quantum model in our framework} \label{sec:qframework} 

It may seem as if we have left the formalism of communication and belief-invariant
developed in the previous section, and of course that is necessarily the
case since we want to talk about quantum correlations. 
Indeed, there is a fundamental difference between quantum equilibria and the belief-invariant equilibria of  Definition 6. Consider a quantum solution as in Definition \ref{eq:quantum-corr-eq}, where a shared state and local measurements implement a canonical solution $Q$. Even if the correlation $Q$ allows only finitely many inputs from player $i$, in the quantum setting this player must be still permitted to perform any of the infinitely many measurements that are physically allowed.
Roughly speaking, this means that whereas we can simulate belief invariant equilibria with a mediator that takes one of the finitely many inputs from each player, implements the correlation $Q$ and returns the action to each player,
for quantum equilibria we need the mediator to receive one among infinitely many measurements (even if there are only finite inputs) from each player, implement the correlation embedded in the quantum state, and returns the action to each player.

It is however possible,
although at a price, to present quantum solutions and quantum correlated
equilibria in our general framework.
To this end, note that in Definition~\ref{eq:quantum-corr-eq} above we have to
consider any one player varying their measurement. Thus, define $R_i := M(\mathcal{H}_i,A_i)$ to be
the set of all possible measurements on $\mathcal{H}_i$ with outcomes
in $A_i$; this is of course an infinite set, in fact it has the structure of
a manifold, but let us not worry about that.
In this way, each $r_i \in R_i$ specifies precisely a measurement and each
possible measurement is represented. Denote this (very big) correlation
$\mathfrak{Q}(\av\mid\rv)$. By definition, we get the following:

\begin{proposition}
  \label{prop:quantum-corr-eq:alt}
  A quantum solution $(M^{\rv},\rho)$ is a quantum correlated equilibrium
  if and only if
  $(\fv,\id,\mathfrak{Q})$ is a (belief-invariant) communication equilibrium,
  where $f_i(t_i) = r_i := M^{t_i} \in R_i$ is defined uniquely by the requirement
  that $r_i$ labels the measurement used in the quantum solution.

  Furthermore, the canonical forms coincide:
  $\widehat{\mathfrak{Q}}  = Q$.
\end{proposition}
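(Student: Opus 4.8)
The plan is to unwind the definitions and observe that the two equilibrium conditions are literally the same inequalities, just written in different notation. The quantum correlated equilibrium condition (Definition~\ref{eq:quantum-corr-eq}, or its per-type version Proposition~\ref{prop:quantum-corr-simplified}) says: fixing the state $\rho$ and the other players' measurements $M^{t_j}$, $j\neq i$, no alternative measurement $N^{t_i}$ on $\mathcal H_i$ improves player $i$'s expected payoff, for each type $t_i$. On the other side, $(\fv,\id,\mathfrak Q)$ being a communication equilibrium means (by Proposition~\ref{prop:communication-simpler}, since here the output functions $g_i$ are the identity) that for each player $i$, each type $t_i$, and each alternative input $r_i'\in R_i$ — together with each output-relabelling $\chi_i$ — player $i$ cannot gain. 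First I would argue that the output-relabelling part $\chi_i$ is irrelevant here: since $R_i$ already contains \emph{all} measurements with outcomes in $A_i$, composing a measurement with a post-processing map $\chi_i\colon A_i\to A_i$ yields another measurement in $R_i$ (explicitly, $(N')^{t_i}_{a_i} = \sum_{a_i' : \chi_i(a_i')=a_i} N^{t_i}_{a_i'}$ is again a POVM), so ranging over $(\,r_i',\chi_i\,)$ is the same as ranging over $r_i'$ alone. Hence the communication-equilibrium condition collapses to: for each $i$, $t_i$, and each $r_i'\in R_i = M(\mathcal H_i, A_i)$, deviating to measurement $r_i'$ does not help — which, substituting $N^{t_i} := r_i'$, is exactly the inequality in Proposition~\ref{prop:quantum-corr-simplified}.

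The second step is to check that the payoff expressions on both sides of these inequalities agree. For $(\fv,\id,\mathfrak Q)$ the expected payoff of player $i$ is $\sum_{\tv,\av} P(\tv)\,\mathfrak Q(\av \mid f_1(t_1),\ldots,f_n(t_n))\, v_i(\tv,\av)$, and by construction $f_j(t_j)$ is the label of the measurement $M^{t_j}$, so $\mathfrak Q(\av\mid f_1(t_1),\ldots,f_n(t_n)) = \tr\rho\,(M^{t_1}_{a_1}\ox\cdots\ox M^{t_n}_{a_n})$, matching eq.~\eqref{eq:quantum-canonical} and the payoff formula following Definition~\ref{eq:quantum-corr-eq}. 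The deviation side matches identically once one replaces the $i$-th tensor factor by the measurement labelled $r_i'$. (One should also note there is no private randomness to worry about: the quantum solution uses no $\lambda_i$, so the $\Lambda(\lamv)$ sums are trivial; alternatively, invoke the remark after Definition~\ref{def:communication} that $\varphi_i,\chi_i$ may be taken deterministic.) This establishes the equivalence stated in the first sentence of the proposition.

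For the ``furthermore'' part, I would simply compute the canonical form $\widehat{\mathfrak Q}$ from eq.~\eqref{eq:canonical}. With $g_i = \id$ (so $g_i(t_i,s_i,\lambda_i) = s_i$) and no private randomness, eq.~\eqref{eq:canonical} gives $\widehat{\mathfrak Q}(\av\mid\tv) = \sum_{\sv : \sv = \av} \mathfrak Q(\sv\mid \fv(\tv)) = \mathfrak Q(\av\mid f_1(t_1),\ldots,f_n(t_n)) = \tr\rho\,(M^{t_1}_{a_1}\ox\cdots\ox M^{t_n}_{a_n})$, which is precisely $Q(\av\mid\tv)$ as defined in eq.~\eqref{eq:quantum-canonical}. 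Finally, belief-invariance of $\mathfrak Q$ follows from the Fact on belief-invariance of quantum correlations applied with the measurement sets $R_i$, so the equilibrium really is a belief-invariant communication equilibrium as claimed, justifying the parenthetical.

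The only genuinely delicate point — and the step I expect to require the most care in the write-up — is the reduction of the pair $(r_i',\chi_i)$ to $r_i'$ alone: one must be sure that (a) classical post-processing of a POVM is again a POVM with outcomes in $A_i$, hence lies in $R_i$, and (b) conversely every $r_i'\in R_i$ arises this way (trivially, with $\chi_i=\id$), so nothing is lost or gained by the simplification. Everything else is bookkeeping: matching indices, noting the absence of $\lambda_i$, and reading off eq.~\eqref{eq:canonical}. I would also remark in passing that since $R_i$ is an infinite (manifold) set the sums over $r_i$ in Proposition~\ref{prop:communication-simpler} should be read as quantifiers ``for all $r_i$'', which is exactly how the quantum condition quantifies over $N^{t_i}$, so no convergence issue arises.
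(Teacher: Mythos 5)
Your proposal is correct and follows essentially the same route as the paper, whose entire proof is the single phrase ``By definition'': you are simply unwinding that definition, and your key observation---that a deviation pair $(r_i',\chi_i)$ collapses to a single measurement deviation because classical post-processing of a POVM is again a POVM in $R_i$, while $\chi_i=\id$ recovers every measurement deviation---is exactly the bookkeeping the paper leaves implicit. The computation of $\widehat{\mathfrak{Q}}=Q$ and the appeal to the belief-invariance Fact are likewise the intended (implicit) content, so nothing is missing.
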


The class of quantum correlated equilibria contains the correlated equilibria of
Definition \ref{def:correlated} as a special case:
\begin{proposition}
  \label{prop:classical-quantum-equilib}
  Every correlated equilibrium is a quantum correlated equilibrium.
  Indeed, if $(\gv,Q)$ is a correlated equilibrium, and we define
  the state $\rho$ and measurements $M^{t_i} = (M^{t_i}_{a_i}:a_i\in A_i)$
  as in Remark~\ref{rem:classical-quantum}:
  \begin{align*}
    \rho          &= \sum_{\sv} Q(\sv) \proj{s_1} \ox \cdots \ox \proj{s_n}, \\
     M^{t_i}_{a_i} &= \sum_{s_i} \d_{g_i(t_i, s_i),a_i} \proj{s_i},
  \end{align*}
  where $\d$ is the Kronecker delta function, then $(M^{\rv},\rho)$ is a quantum correlated equilibrium
  which has the same canonical representative $\widehat{Q}$ as $(\gv,Q)$;
  in particular they have the same outcome.
\end{proposition}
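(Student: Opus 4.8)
The plan is to verify directly that the quantum solution $(M^{\tv},\rho)$ given in the statement is a quantum correlated equilibrium in the sense of Definition~\ref{eq:quantum-corr-eq}, and that its canonical correlation~\eqref{eq:quantum-canonical} coincides with $\widehat Q$; equality of payoffs then follows automatically, since the expected payoffs of a canonical solution depend only on the conditional distribution $Q(\av\mid\tv)$. The structural fact that drives everything is that $\rho$ is \emph{diagonal} in the product basis $\{\ket{s_1}\ox\cdots\ox\ket{s_n}\}$ and each $M^{t_i}_{a_i}$ is diagonal in $\{\ket{s_i}\}$, so that all traces reduce to classical sums: by orthonormality, $\tr\bigl(\proj{s_i}M^{t_i}_{a_i}\bigr)=\d_{g_i(t_i,s_i),a_i}$, and more generally $\tr\bigl(\proj{s_i}N\bigr)=\bra{s_i}N\ket{s_i}$ for any operator $N$ on $\mathcal H_i$ (we take $\gv$ deterministic, as in the statement).

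First I would compute the canonical form. Substituting the definitions of $\rho$ and of the $M^{t_i}_{a_i}$ into~\eqref{eq:quantum-canonical} and using that the trace of a tensor product factorizes,
\[
  \tr\rho\,(M^{t_1}_{a_1}\ox\cdots\ox M^{t_n}_{a_n})
   = \sum_{\sv} Q(\sv)\prod_{i=1}^n \d_{g_i(t_i,s_i),a_i}
   = \!\!\sum_{\sv\,:\,g_i(t_i,s_i)=a_i\ \forall i}\!\! Q(\sv),
\]
which is precisely $\widehat Q(\av\mid\tv)$: this is just~\eqref{eq:canonical} specialized to the correlated solution $(\gv,Q)$, and it is also exactly the instance of Remark~\ref{rem:classical-quantum} for the (local) correlation $\widehat Q$. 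This settles the claims about the canonical representative and the payoffs.

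Next I would check the equilibrium inequality through Proposition~\ref{prop:quantum-corr-simplified}. Fix a player $i$, a type $t_i$, and an arbitrary alternative POVM $N^{t_i}=(N^{t_i}_{a_i}:a_i\in A_i)$ on $\mathcal H_i$, and set $\mu(a_i\mid s_i):=\bra{s_i}N^{t_i}_{a_i}\ket{s_i}$. Positivity of the $N^{t_i}_{a_i}$ and $\sum_{a_i}N^{t_i}_{a_i}=\1$ make $\mu(\cdot\mid s_i)$ a probability distribution for every $s_i$, so the deviation acts on the advice exactly as the classical randomised relabelling $s_i\mapsto a_i$ with transition probabilities $\mu(a_i\mid s_i)$. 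Expanding the right-hand side of Proposition~\ref{prop:quantum-corr-simplified} in the same way as above (the $j\neq i$ tensor factors still force $a_j=g_j(t_j,s_j)$, while the $i$-th factor contributes $\mu(a_i\mid s_i)$), summing out $\av_{-i}$, and writing $W(s_i,a_i):=\sum_{\tv_{-i},\sv_{-i}}P(\tv)\,Q(s_i,\sv_{-i})\,v_i\bigl(\tv,a_i\,\gv_{-i}(\tv_{-i},\sv_{-i})\bigr)$, the right-hand side becomes $\sum_{s_i}\sum_{a_i}\mu(a_i\mid s_i)\,W(s_i,a_i)$. Proposition~\ref{prop:equilibrium-corr-simplified} applied to $(\gv,Q)$ says precisely that $W(s_i,a_i)\le W\bigl(s_i,g_i(t_i,s_i)\bigr)$ for every $s_i$ and $a_i$; averaging over $a_i$ with the weights $\mu(a_i\mid s_i)$, which sum to $1$, gives
\[
  \sum_{s_i}\sum_{a_i}\mu(a_i\mid s_i)\,W(s_i,a_i)
   \ \le\ \sum_{s_i} W\bigl(s_i,g_i(t_i,s_i)\bigr)
   \ =\ \avg{v_i},
\]
the last equality being the left-hand side of Proposition~\ref{prop:quantum-corr-simplified} written out (summing out $\av$ forces $a_j=g_j(t_j,s_j)$ for all $j$). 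This is the required inequality, so $(M^{\tv},\rho)$ is a quantum correlated equilibrium.

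I expect the only conceptual step — and the one worth stating carefully rather than treating as bookkeeping — to be the reduction in the last paragraph: since the shared resource is a classical (diagonal) state, an arbitrary quantum re-measurement available to a deviating player acts on the advice $s_i$ exactly as a classical stochastic post-processing, so the deviator has no ``quantum'' leverage and the plain classical equilibrium inequality of Proposition~\ref{prop:equilibrium-corr-simplified} suffices unchanged. Everything else is routine manipulation of tensor-product traces.
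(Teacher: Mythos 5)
Your proof is correct and follows essentially the same route as the paper's: the paper's (much terser) argument likewise observes that $\rho$ is a mixture of classical advice, that any deviating measurement can be simulated by classical local randomness acting on that advice, and that the correlated-equilibrium property of $(\gv,Q)$ then rules out any benefit, with the canonical form read off from eq.~\eqref{eq:quantum-canonical}. Your identification $\mu(a_i\mid s_i)=\bra{s_i}N^{t_i}_{a_i}\ket{s_i}$ and the averaging of the pointwise inequalities from Proposition~\ref{prop:equilibrium-corr-simplified} simply spell out in full detail what the paper states in words.
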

\begin{proof}
The state $\rho$ is a mixture of classical advice. For all $i, t_i$ every measurement can also be simulated classically, and locally, with the use of private randomness. Since $(\gv,Q)$ is a correlated equilibrium, no deviation from the suggested measurement can be beneficial to any player.
This shows that $(M^{\rv},\rho)$ is an equilibrium, and by \eqref{eq:quantum-canonical} one can verify that its canonical is $\hat Q$.
\end{proof}

\subsubsection{Discussion and historical background}
From the definition and the observations made above, it follows that
when comparing equilibrium classes at the level of their canonical representatives, quantum correlated equilibria are sandwiched between correlated and belief-invariant ones:
\[
  \CORREQ(G) \subset \QUEQ(G) \subset \BIEQ(G).
\]
There are games where the inclusions are strict. A famous one is the {\tt CHSH} game, which we have used extensively as an example throughout this paper. This is a non-local game, therefore a full coordination Bayesian game.
It follows from \cite{CHSH} that there is a quantum equilibrium which is not in $\CORREQ(${\tt CHSH}$)$, and \cite{Tsilerson} proved that the belief-invariant equilibrium given by \eqref{eq:chsh_ns_strategy} is not in $\QUEQ(${\tt CHSH}$)$.
Both results, though elementary mathematically, constituted breakthroughs in the foundations of quantum mechanics.

The study of quantum correlations appears in many works in physics and computer science (see, for example, the surveys \cite{avis2008bell,palazuelos2015random} and the references therein). On the other hand, there have been several approaches for the use of quantum correlations in game theory (as illustrated extensively in the survey \cite{Guo08}). The general connection with Bayesian games has been made explicit in works like \cite{LaMura03,Brunner13,Pappa15,Brandenburger20150096,lehrer2013garbling}. In particular, \cite{LaMura03,Pappa15} figured out that there could be a quantum advantage even in the case of conflicting interest, \emph{i.e.,} non full-coordination games. This represented a radical shift from the traditional approach in physics: initially, full coordination games were used as a tool to exhibit the difference between quantum and classical behaviours. Physicist used to prove that Nature is not classical by performing a seemingly impossible collaborative task between two or more space-like separated experimenters. Now, quantum correlations can be used to mediate situations of conflict between selfish players, \emph{i.e.,} quantum effects that naturally occur in the microscopic world are used to influence decisions in the macroscopic world. This is the phenomenon we underline here. 
Below we show that the result in \cite{Pappa15} can be extended to $n$ players.

\subsection{Implementing the best belief-invariant equilibrium for {\tt GHZ} with quantum correlations}
\label{sec:quantumGHZ}

Consider the game of Section \ref{sec:ghz}. The correlation $Q$ described in \eqref{eq:ghz_ns_strategy} can be implemented as a quantum
correlated equilibrium, therefore it does not need an informed mediator.
To see this, we will now exhibit one of the constructions of the quantum state
and the measurements that produce the correlation.

Fix the computational basis to $\ket{0} = \begin{bmatrix}1\\0\end{bmatrix} $ and $ \ket{1} = \begin{bmatrix}0\\1\end{bmatrix}$ and consider the following abbreviation for the tensor product: $\ket{xyz} = \ket{x} \otimes \ket{y} \otimes \ket{z}$.

The state is then constructed as follows.
Each player $i$ holds a 2-dimensional quantum system with Hilbert space $\mathcal{H}_i$. Start from the following vector living in $\mathcal{H} = \mathcal{H}_1 \ox  \mathcal{H}_2 \ox \mathcal{H}_3$
$$\ket{\psi} = \frac12 (\ket{111} - \ket{001} - \ket{010} - \ket{100}), $$
and obtain the corresponding state on $\mathcal{H}$:
$$ \rho = \ket{\psi}\bra{\psi}.$$
The measurement operators are the following. For each player $i$, we have
\begin{align*}
\left\{ M^{0}_{0} = \frac{1}{2}  \left( \begin{array}{cc}
1 & 1  \\
1 & 1  \end{array} \right),  \quad
M^{0}_{1} = \frac{1}{2} \left( \begin{array}{cc}
1 & -1  \\
-1 & 1  \end{array} \right)\right \}~~~~ \mbox{(on ~input~0~),}\\
\left\{ M^{1}_{0} =  \left( \begin{array}{cc}
1 & 0  \\
0 & 0  \end{array} \right),  \quad
M^{1}_{1} = \left( \begin{array}{cc}
0 & 0  \\
0 & 1  \end{array} \right)\right\}~~~~~~~~~~~~~\mbox{(on ~input~1~)} .
\end{align*}

It can be checked via eq.\eqref{eq:qcorr} that the above state and measurements
produce the claimed correlation $Q$.
\begin{proposition} \label{prop:quantum_equilibrium_GHZ}
\textit{The quantum strategy $(M^\tv,\rho)$ is a quantum correlated equilibrium.}
\end{proposition}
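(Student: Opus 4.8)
The plan is to invoke Proposition~\ref{prop:quantum-corr-simplified}, which says it is enough to check, for each player~$i$ and each of their types $t_i$, that when the other two players use the prescribed measurements and the state is fixed to $\rho=\proj\psi$, no two-outcome POVM $N^{t_i}=(N^{t_i}_0,N^{t_i}_1)$ on $\mathcal H_i\cong\CC^2$ gives player~$i$ a larger conditional expected payoff than the prescribed $M^{t_i}$. Since the state $\ket\psi$, the measurement operators, the prior $P$, and the payoff rule of the modified {\tt GHZ} game are all invariant under permutations of the three players, I would carry this out only for player~$1$, and there only for $t_1=0$ and $t_1=1$.

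Fixing $t_1$, the next step is to write the conditional payoff as an affine function of player~$1$'s POVM. Using the partial-trace identity, tracing out players $2$ and $3$ after they apply the prescribed measurements and weighting by the conditional prior $P(\tv_{-1}\mid t_1)$ and the payoffs $v_1(\tv,a_1a_2a_3)$, one obtains reward operators $W^{t_1}_0,W^{t_1}_1\ge0$ on $\CC^2$ with
\[
  \avg{v_1\mid t_1}
  = \tr\!\bigl(N^{t_1}_0 W^{t_1}_0\bigr) + \tr\!\bigl((\1-N^{t_1}_0) W^{t_1}_1\bigr)
  = \tr W^{t_1}_1 + \tr\!\bigl(N^{t_1}_0\,(W^{t_1}_0-W^{t_1}_1)\bigr).
\]
Over $0\le N^{t_1}_0\le\1$ this is maximised by the projector onto the strictly positive eigenspace of $\Delta^{t_1}:=W^{t_1}_0-W^{t_1}_1$. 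So the claim reduces to two eigenvalue statements: for $t_1=0$, that $\Delta^0$ has eigenvector $\ket{+}$ with a positive eigenvalue and eigenvector $\ket{-}$ with a non-positive one, so that the prescribed Hadamard-basis POVM $\{\proj{+},\proj{-}\}$ is a best response; and for $t_1=1$, that $\Delta^1$ has eigenvector $\ket0$ with a positive eigenvalue and $\ket1$ with a non-positive one, so that the prescribed computational-basis POVM $\{\proj0,\proj1\}$ is a best response. The on-path value then comes out to $\frac{1+\eps}{2}$ for each player, since, as noted just before the statement, the prescribed state and measurements realise the belief-invariant correlation $Q$ of~\eqref{eq:ghz_ns_strategy}, in which a player's action is a uniform bit while the joint outcome is always non-losing.

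The main obstacle is the explicit evaluation of the partial traces defining $W^{t_1}_{a_1}$ --- and this is exactly where the statement fails to be automatic from the fact that $Q$ is a belief-invariant equilibrium, because a quantum player may deviate to an \emph{arbitrary} measurement rather than merely relabel the mediator's advice. Two features demand care. First, the action player~$1$ individually prefers is the one with $a_1\ne\tau$, where $\tau=t_1t_2t_3$ depends on the types $t_2,t_3$ that player~$1$ cannot see; hence for $t_1=1$ I would split $\avg{v_1\mid 1}$ into the $\tau=0$ branch (type $(1,0,0)$, conditional probability $\tfrac14$) and the $\tau=1$ branch (type $(1,1,1)$, conditional probability $\tfrac34$), each contributing its own post-measurement ensemble on $\mathcal H_1$. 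Second, the losing condition $a_1+a_2+a_3\equiv\tau\pmod2$ ties player~$1$'s action to outcomes beyond their control, so that biasing towards the selfish action destroys the parity and triggers joint losses; the computation confirms precisely that this trade-off is never profitable. Once $\Delta^0$ and $\Delta^1$ are written out --- each a short real combination of Pauli matrices in the computational basis --- the two eigenvalue checks are immediate, and combining them with the symmetry reduction and Proposition~\ref{prop:quantum-corr-simplified} completes the proof.
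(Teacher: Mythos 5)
Your proposal is correct and takes essentially the same route as the paper's proof: using the permutation symmetry of $\ket{\psi}$, the prior and the payoffs, one fixes players $2$ and $3$ to the prescribed measurements and checks, one type of player $1$ at a time (Proposition~\ref{prop:quantum-corr-simplified}), that the expected payoff is affine in the deviating POVM element and is maximized by the prescribed projector, with on-path value $\frac{1+\eps}{2}$. The only difference is presentational: where you would diagonalize the effective payoff operator $\Delta^{t_1}$ obtained by partial trace, the paper parametrizes the deviating POVM elements in Bloch form $(\alpha,\vec{a})$, $(\beta,\vec{b})$ and bounds the resulting affine expressions $f_0,f_1$ directly --- an equivalent $2\times 2$ computation whose unique maximizers are exactly your $\proj{+}$ (for $t_1=0$, positive eigenvalue proportional to $\eps$) and $\proj{0}$ (for $t_1=1$).
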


\begin{proof}
To prove the claim, we fix the measurements of any two parties and then show that the third party by changing his measurements cannot increase his average payoff. Since the shared state is symmetric under the permutation of parties, it is sufficient to consider the case in which measurements of party-1 are variable and measurements of parties 2 and 3 are fixed to $M^{i}_{j}$, where $i,j \in\{0,1\}$. Measurements of party-1 can be expressed as $X^0_0+X^0_1=I$ (on input 0), and $X^1_0+X^1_1=I$ (on input 1). The general operators defining these measurements are:
\begin{align*}
&X^0_0=\frac{1}{2}\left( \begin{matrix} \alpha+a_3&a_1-i a_2\\a_1+i a_2&\alpha-a_3\end{matrix} \right),\\
&\mbox{where}~ (\alpha,\vec{a}) \in \mathbb{R}^{4},~\vec{a}=(a_1,a_2,a_3),~\mbox{and}~\|\vec{a}\|\leq \alpha \leq 2-\|\vec{a}\|;
\end{align*}
\begin{align*}
&X^1_0= \frac{1}{2}\left( \begin{matrix} \beta+b_3&b_1-i b_2\\b_1+i b_2&\beta-b_3\end{matrix} \right),\\
&\mbox{where}~(\beta,\vec{b}) \in \mathbb{R}^{4},~\vec{b}=(b_1,b_2,b_3),~\mbox{and}~\|\vec{b}\|\leq \beta \leq 2-\|\vec{b}\|.
\end{align*}
Note that the constraints on the parameters in the above equations implies that $\|\vec{a}\|\leq 1$ and $\|\vec{b}\| \leq 1$.

The expected payoff of the $1$-st party, \dio{when his type is $t_1$} is given by,
\begin{equation}
\langle v_{1\dio{,t_1}}\dio{(X^{t_1},M^{\tv_{-1}},\rho)}\rangle= \sum_{\dio{\tv_{-1},\av}}\dio{P(\tv_{-1} \mid t_1)}\cdot \mbox{Tr}\{\rho(X_{a_1}^{t_1}\otimes M_{a_2}^{t_2}\otimes M_{a_3}^{t_3})\}\cdot v_1(\tv,\av). \label{3_party_payoff}
\end{equation}
On substituting the values for probability distribution of inputs, quantum probabilities, and utilities,  in eq.(\ref{3_party_payoff}), and simplifying, we obtain that:
\begin{align*}
\langle v_{1,0}\dio{(X^{t_1},M^{\tv_{-1}},\rho)}\rangle&= \frac{1}{12}\{2-(1-\eps)\alpha+(1+\eps)a_1\},\\
\langle v_{1,1}\dio{(X^{t_1},M^{\tv_{-1}},\rho)}\rangle&= \frac{1}{12}\{1+3\eps+(1-\eps)\beta+2(1+\eps)b_3\}.
\end{align*}

Now, using $0\leq \eps \leq 1$, and constraints on parameters $(\alpha, \vec{a})$ and $(\beta, \vec{b})$, it follows that
\begin{align*}
\langle v_{1,0}\dio{(X^{t_1},M^{\tv_{-1}},\rho)}\rangle&=\frac{1}{12}\left\{2-(1-\eps)\alpha+(1+\eps)a_1\right\}\\
&\leq \frac{1}{12}\left\{2-(1-\eps)\|\vec{a}\|+(1+\eps)\|\vec{a}\|\right\}\\
&= \frac{1}{6}\left\{1+\eps\|\vec{a}\|\right\}\leq \frac{1}{6}\left\{1+\eps\right\}.
 \end{align*}
Therefore, $\max_{\alpha,\vec{a}}\{\langle v_{1,0}\dio{(X^{t_1},M^{\tv_{-1}},\rho)}\rangle\}=\frac{1}{6}\left\{1+\eps\right\}$ and this upper bound is achieved for $\alpha=1, \vec{a}=(1,0,0)$, i.e., when $X^{0}_0 = M^0_0$. Similarly,
\begin{align*}
\langle v_{1,1}\dio{(X^{t_1},M^{\tv_{-1}},\rho)}\rangle&=\frac{1}{12}\left\{1+3\eps+(1-\eps)\beta+2(1+\eps)b_3\right\}\\
&\leq \frac{1}{12}\left\{1+3\eps+(1-\eps)\left(2-\|\vec{b}\|\right)+2(1+\eps)\|\vec{b}\|\right\}\\
&= \frac{1}{12}\left\{3+\eps +(1+3\eps)\|\vec{b}\|\right\}\leq \frac{1}{3}\left\{1+\eps\right\}.
\end{align*}
Therefore, $\max_{\beta,\vec{b}}\{\langle v_{1,1}\dio{(X^{t_1},M^{\tv_{-1}},\rho)}\rangle\}=\frac{1}{3}\left\{1+\eps\right\}$, and this upper bound is achieved for $\beta=1, \vec{b}=(0,0,1)$, i.e., when $X^{1}_0 = M^1_0$.

Finally, due to symmetry in the shared state, we get same results when varying the measurements of some other party by keeping fixed the measurements of the remaining two parties. This shows that every player achieves the maximum expected payoff for each type with the suggested measurement, hence the considered strategy is a quantum equilibrium.
\end{proof}
We remark that a similar approach (modifying the game in \cite{Greenberger90}) has been used independently in \cite{situ2015quantum} to obtain a 3-player quantum game with conflict of interest.

\subsection{A game with conflict of interest where quantum correlations achieve optimal social welfare }

We now introduce a variant of the three player game of Section \ref{sec:ghz}, with a distribution of types: $P(0,0,1)=P(0,1,0)=P(1,0,0)=P(1,1,1)=1/4$, and with the following payoff table (with $0\le \eps\leq 1$ and $\d = \frac{2 + \eps}{3}$):
%

\begin{figure}[htb]
\centering
\begin{minipage}{0.5\textwidth}
\begin{game}{2}{2}[0]
    & 0    & 1 \\
0   &$\d$,$\d$,$\d$   &0,0,0\\
1   &0,0,0   &$\eps$,1,1
\end{game}
\qquad
\begin{game}{2}{2}[1]
    & 0    & 1 \\
0   &0,0,0    &1,$\eps$,1\\
1   &1,1,$\eps$    &0,0,0
\end{game}
\subcaption{$\tau = 0$}
\end{minipage}
\begin{minipage}{0.5\textwidth}
\begin{game}{2}{2}[0]
    & 0    & 1 \\
0   &0,0,0   &1,1,$\eps$\\
1   &1,$\eps$,1   &0,0,0
\end{game}
\qquad
\begin{game}{2}{2}[1]
    & 0    & 1 \\
0   &$\eps$,1,1    &0,0,0\\
1   &0,0,0    &$\d$,$\d$,$\d$
\end{game}
\subcaption{$\tau = 1$}
\end{minipage}
\caption{Another modified {\tt GHZ} game. The figure is structured like Figure \ref{tab:GHZ}.}
\label{tab:GHZ_2}
\end{figure}

Clearly, players have conflict of interest in this modified GHZ game. This game has a very interesting connection with the 3-party Mermin inequality \cite{mermin1990}:
\begin{align}
|\langle A_0B_0C_1 \rangle & +\langle A_0B_1C_0 \rangle+\langle A_1B_0C_0 \rangle-\langle A_1B_1C_1 \rangle| \leq 2 \label{mermin}\\
&\mbox{where},~\mbox{for}~i,j,k \in \{0,1\}, \mbox{the random variables } A_{i},B_{j},C_{k}~\mbox{take~value~}\pm 1.\nonumber
\end{align}
In game theoretical terminology, it can be thought that $i,j,k$ are types of the three players and $A_i,B_j,C_k \in\{\pm 1\}$ correspond to the respective actions.
For sake of readability, we do the following relabelling of the actions: $-1\mapsto 0$ and  $+1\mapsto 1$. Moreover, given a generic correlation $Q$ we will set $Q^{ijk}_{abc} = Q(a,b,c \mid i,j,k)$, for types $i,j,k$ and $a, b, c \in \{0,1\}$. Then the expected values of the product of the outcomes are:
%
\begin{equation}
\label{exp}
 \begin{aligned}
  \langle A_iB_jC_k\rangle &= \{Q^{ijk}_{111}+Q^{ijk}_{100}+Q^{ijk}_{010}+Q^{ijk}_{001}\}\\
&-\{Q^{ijk}_{000}+Q^{ijk}_{011}+Q^{ijk}_{101}+Q^{ijk}_{110}\}
 \end{aligned}
\end{equation}
On substituting (\ref{exp}) in (\ref{mermin}), using normalization condition for probabilities, and rearranging, we get,
\begin{align*}
~&|\left\{1-2(Q^{001}_{000}+Q^{001}_{011}+Q^{001}_{101}+Q^{001}_{110})\right\} \nonumber \\
+&\left\{1-2(Q^{010}_{000}+Q^{010}_{011}+Q^{010}_{101}+Q^{010}_{110})\right\} \nonumber \\
+&\left\{1-2(Q^{100}_{000}+Q^{100}_{011}+Q^{100}_{101}+Q^{100}_{110})\right\} \nonumber \\
-&\left\{2(Q^{111}_{001}+Q^{111}_{010}+Q^{111}_{100}+Q^{111}_{111})-1\right\} |\leq 2
\end{align*}
On simplifying we get:
$$
1\leq \mathbb{Q} \leq 3
$$
where,
\begin{align*}
\mathbb{Q}&= (Q^{001}_{000}+Q^{001}_{011}+Q^{001}_{101}+Q^{001}_{110})\nonumber \\
&+(Q^{010}_{000}+Q^{010}_{011}+Q^{010}_{101}+Q^{010}_{110}) \nonumber \\
&+ (Q^{100}_{000}+Q^{100}_{011}+Q^{100}_{101}+Q^{100}_{110}) \nonumber \\
&+(Q^{111}_{001}+Q^{111}_{010}+Q^{111}_{100}+Q^{111}_{111}).
\end{align*}
Let $\mathbb{M}=\langle A_0B_0C_1 \rangle+\langle A_0B_1C_0 \rangle+\langle A_1B_0C_0 \rangle-\langle A_1B_1C_1\rangle$, then $\mathbb{M}=4-2\mathbb{Q}$. The bound we see on quantity $\mathbb{M}$ is derived by considering all possible local correlations, therefore, nonlocal (quantum and belief-invariant) correlations can violate these bounds. However, $\mathbb{M}$ also has algebraic bounds, $-4\leq \mathbb{M}\leq 4$, which is respected by any type of correlation (or probability distribution). The algebraic restriction on $\mathbb{M}$ implies that $0\leq \mathbb{Q}\leq 4$ for any type of correlation. One important feature of Mermin inequality is that its algebraic bounds can be achieved by quantum correlations. We will use this feature of the Mermin inequality to discover interesting properties in the game that we are considering here.

\bigskip For the game of Figure \ref{tab:GHZ_2}, the sum for each type of expected payoffs of the three players in an equilibrium $(\fv, \gv, Q)$ turns out to be:
\begin{equation*}
\sum_{i,t_i}\langle v_{i,t_i}(\fv, \gv, Q)\rangle = \left(\frac{2+\eps}{4}\right)\mathbb{Q}.
\end{equation*}
For any local equilibrium (i.e., correlated or Nash), we have
\begin{equation*}
 \mathbb{Q}\leq 3 ~~\Leftrightarrow~~ \sum_{i,t_i}\langle v_{i,t_i}(\fv, \gv, Q)\rangle \leq \frac{3}{4}(2+\eps).
 \end{equation*}
 For any possible equilibrium (hence, also quantum, belief-invariant or communication) we have:
\begin{equation*}
\mathbb{Q}\leq 4~~\Leftrightarrow~~ \sum_ {i,t_i}\langle v_{i,t_i}(\fv, \gv, Q)\rangle \leq 2+\eps.
\end{equation*}

If the game is played with the same correlation Q defined in \eqref{eq:ghz_ns_strategy} (which we implemented as a quantum strategy $(M^{\tv},\rho)$ in Section \ref{sec:quantumGHZ}), the sum of  the expected payoffs $\langle v_{i,0}(M^{\tv},\rho)\rangle + \langle v_{i,1}(M^{\tv},\rho)\rangle =\frac{2+\eps}{3} ~\forall i\in\{1,2,3\}.$ Moreover, the strategy is a quantum equilibrium (the proof is very similar to the one of Proposition \ref{prop:quantum_equilibrium_GHZ}; only the objective function changes which is again easy to maximize).

Therefore, an interesting feature of the game of Figure \ref{tab:GHZ_2} is that with the considered quantum equilibrium we obtain the optimal fair equilibrium, \emph{i.e.,} no other quantum equilibrium, belief-invariant equilibrium, or even communication equilibrium can do better than our equilibrium in this second GHZ game. This is a new feature which was not revealed in the two party modified CHSH-game considered in \cite{Pappa15} where optimal fair quantum correlated equilibrium was found, however, for the modified CHSH-game belief-invariant equilibrium, and communication equilibrium can do better than the optimal quantum fair equilibrium.

%

\section{Conclusions and open problems}
\label{sec:conclusion}
We have formally introduced the class of belief-invariant communication equilibria and its quantum mechanical version. Even if such classes appeared implicitly in previous work, a systematic study and an hunt for useful applications was not performed before. 
The interested reader will find in the Appendix numerous potential directions for further research.
With this work we would like to open the way for collaboration between the quantum information, the theoretical computer science, and the game theory community, to address the numerous open problems.
We conclude the paper with a list of the ones we could think of.

\begin{enumerate}
\item \textit{Complete the complexity scenario.}
In the Appendix we discuss some computational complexity facts. For example, verifying that a solution is an equilibrium is easy if the number of actions is bounded, while finding the optimal (quantum) correlated equilibrium is a hard task, given the connection with multi-prover interactive proofs. However, how difficult is to sample a quantum or a belief-invariant equilibrium? Are there classes of games where this is easy, like the succinct games of \cite{Papadimitriou08}?

\item \textit{Get large separations and upper bound the largest possible separation.}
Full coordination games are used to design Bell tests, experiments that quantify how different quantum mechanics is from classical physics. For this fundamental task, the quantity of interest is the separation between the largest expected payoff at a quantum and at a correlated equilibrium. The race for large separations was settled in \cite{Buhrman12}, where the authors exhibited a game that almost matches the upper bound proven in \cite{Junge10}.
In our context here, large separations would translate to economical or social convenience of implementing communication equilibria while respecting the privacy of the player.
Are there conflict-of-interest games where the quantum correlated equilibrium leads to a much better social optimum than the correlated one?
Is there an upper bound like the one of \cite{Junge10}?
\item \textit{Can any non-local game be converted in a conflict-of-interest game?} This is a question from \cite{Pappa15}. The non-local games are the above-mentioned coordination games used in physics. It would be interesting if all these games also lead to cases in which a conflict-of-interest situation can be improved with quantum or belief-invariant correlations.
\item \textit{Application to other relevant games.} This is a very natural question. Can belief-invariance be beneficial for scheduling problems, market dynamics, and any other topic of practical interest?
\item \textit{Development of automatic belief-invariant advice on large network games.} We show in Appendix \ref{sec:app} that belief-invariance can be useful in network games. Can we design an automatic system that calculates and distributes belief-invariant advice to large-scale network, in order to reduce the congestion? This does not need to be optimal, and an approximation would already have great practical applications.
\end{enumerate}

\section*{Acknowledgements}
This work is partially supported
by the European Commission (STREP ``RAQUEL''),
by the Spanish MINECO (grants FIS2013-40627-P, FIS2016-80681-P and MTM2014-54240-P) with the support of FEDER funds,
by the Generalitat de Catalunya CIRIT (project 2014-SGR-966),
by the ERC (AdG ``IRQUAT''),
by the Comunidad de Madrid (project QUITEMAD+-CM, S2013/ICE-2801).
This work was made possible through the support of grant \#48322 from the John Templeton Foundation. The opinions expressed in this publication are those of the authors and do not necessarily reflect the views of the John Templeton Foundation.
VA and DF are supported by GNCS - INdAM.

The authors thank Andris Ambainis, Oihane Gallo Fern\'andez, Boris Ginzburg, Dmitry Kravchenko, Giuseppe Persiano, Mikl\'os Pint\'er, Laura Santucci, Johannes Schneider and Ronald de Wolf
for useful discussions that improved the accessibility and the content of the paper.
Furthermore, we acknowledge interesting conversations with Ignacio Villanueva, Carlos Palazuelos
and David P\'{e}rez Garc\'{i}a on the open problem regarding large separations between quantum
and classical advice in relation to the degree of competitiveness of the game.
We also thank the anonymous referees for many useful suggestions that improved the readability and the scientific content of the paper.

\hyphenation{sta-ble}
\bibliographystyle{alpha}
\bibliography{gametheory}

\newcommand{\etalchar}[1]{$^{#1}$}
\begin{thebibliography}{BRSdW12}

\bibitem[AMO08]{avis2008bell}
David Avis, Sonoko Moriyama, and Masaki Owari.
\newblock From {B}ell inequalities to {T}sirelson's theorem: A survey.
\newblock {\em arXiv preprint arXiv:0812.4887}, 2008.

\bibitem[AT01]{ArcherT01}
Aaron Archer and {\'{E}}va Tardos.
\newblock Truthful mechanisms for one-parameter agents.
\newblock In {\em 42nd Annual Symposium on Foundations of Computer Science,
  {FOCS} 2001, 14-17 October 2001, Las Vegas, Nevada, {USA}}, pages 482--491,
  2001.

\bibitem[Aum74]{aumann1974subjectivity}
Robert~J. Aumann.
\newblock Subjectivity and correlation in randomized strategies.
\newblock {\em Journal of mathematical Economics}, 1(1):67--96, 1974.

\bibitem[Bel64]{bell1964einstein}
John~S. Bell.
\newblock On the {E}instein-{P}odolsky-{R}osen paradox.
\newblock {\em Physics}, 1(3):195--200, 1964.

\bibitem[BFS14]{buhrman2014parallel}
Harry Buhrman, Serge Fehr, and Christian Schaffner.
\newblock On the parallel repetition of multi-player games: The no-signaling
  case.
\newblock In {\em 9th Conference on the Theory of Quantum Computation,
  Communication and Cryptography}, page~24, 2014.

\bibitem[BL13]{Brunner13}
Nicolas Brunner and Noah Linden.
\newblock Connection between {B}ell nonlocality and {B}ayesian game theory.
\newblock {\em Nature Communications}, 4, July 2013.

\bibitem[BLM{\etalchar{+}}05]{Barrett}
Jonathan Barrett, Noah Linden, Serge Massar, Stefano Pironio, Sandu Popescu,
  and David Roberts.
\newblock Nonlocal correlations as an information-theoretic resource.
\newblock {\em Physical Review A}, 71(2):022101, 2005.

\bibitem[BLM16]{Brandenburger20150096}
Adam Brandenburger and Pierfrancesco La~Mura.
\newblock Team decision problems with classical and quantum signals.
\newblock {\em Philosophical Transactions of the Royal Society of London A:
  Mathematical, Physical and Engineering Sciences}, 374(2058), 2016.

\bibitem[BM16]{bergemann2016bayes}
Dirk Bergemann and Stephen Morris.
\newblock Bayes correlated equilibrium and the comparison of information
  structures in games.
\newblock {\em Theoretical Economics}, 11(2):487--522, 2016.

\bibitem[BRSdW12]{Buhrman12}
Harry Buhrman, Oded Regev, Giannicola Scarpa, and Ronald de~Wolf.
\newblock Near-optimal and explicit {B}ell inequality violations.
\newblock {\em Theory of Computing}, 8(27):623--645, 2012.

\bibitem[CDT09]{Chen:2009}
Xi~Chen, Xiaotie Deng, and Shang-Hua Teng.
\newblock Settling the complexity of computing two-player {N}ash equilibria.
\newblock {\em J. ACM}, 56(3):14:1--14:57, May 2009.

\bibitem[CHSH69]{CHSH}
John~F. Clauser, Michael~A. Horne, Abner Shimony, and Richard~A. Holt.
\newblock Proposed experiment to test local hidden-variable theories.
\newblock {\em Phys. Rev. Lett.}, 23:880--884, Oct 1969.

\bibitem[CLPR15]{CummingsLPR15}
Rachel Cummings, Katrina Ligett, Mallesh~M. Pai, and Aaron Roth.
\newblock The strange case of privacy in equilibrium models.
\newblock {\em CoRR}, abs/1508.03080, 2015.

\bibitem[CMN{\etalchar{+}}07]{cameron2007quantum}
Peter~J. Cameron, Ashley Montanaro, Michael~W. Newman, Simone Severini, and
  Andreas Winter.
\newblock On the quantum chromatic number of a graph.
\newblock {\em Electron. J. Combin}, 14(1), 2007.

\bibitem[DGP09]{daskalakis2009complexity}
Constantinos Daskalakis, Paul~W Goldberg, and Christos~H. Papadimitriou.
\newblock The complexity of computing a {N}ash equilibrium.
\newblock {\em SIAM Journal on Computing}, 39(1):195--259, 2009.

\bibitem[Dir39]{dirac_1939}
Paul A.~M. Dirac.
\newblock A new notation for quantum mechanics.
\newblock {\em Mathematical Proceedings of the Cambridge Philosophical
  Society}, 35(3):416--418, 1939.

\bibitem[For82]{Forges82}
Fran\c{c}oise Forges.
\newblock A first study of correlated equilibria in repeated games with
  incomplete information.
\newblock Core discussion paper 8218, Universit\'{e} Catholique de Louvain,
  1982.

\bibitem[For88]{ForgesSunspots}
Fran{\c{c}}oise Forges.
\newblock Can sunspots replace a mediator?
\newblock {\em Journal of Mathematical Economics}, 17(4):347--368, 1988.

\bibitem[For93]{Forges93}
Fran\c{c}oise Forges.
\newblock Five legitimate definitions of correlated equilibrium in games with
  incomplete information.
\newblock {\em Theory and Decision}, 35(3):277--310, 1993.

\bibitem[For06]{Forges06}
Fran\c{c}oise Forges.
\newblock {Correlated equilibrium in games with incomplete information
  revisited}.
\newblock Economics Papers from University Paris Dauphine 123456789/157, Paris
  Dauphine University, December 2006.

\bibitem[FRV16]{FriedmanRV16}
Rotem~Arnon Friedman, Renato Renner, and Thomas Vidick.
\newblock Non-signaling parallel repetition using de {F}inetti reductions.
\newblock {\em {IEEE} Trans. Information Theory}, 62(3):1440--1457, 2016.

\bibitem[FT91]{fudenberg1991game}
Drew Fudenberg and Jean Tirole.
\newblock {\em Game Theory}.
\newblock MIT Press, 1991.

\bibitem[Ger04]{gerardi2004unmediated}
Dino Gerardi.
\newblock Unmediated communication in games with complete and incomplete
  information.
\newblock {\em Journal of Economic Theory}, 114(1):104--131, 2004.

\bibitem[GFInG16]{Gallo16}
Ohiane Gallo~Fern\'andez and Elena I\~{n}arra Garc\'ia.
\newblock Rationing rules and stable coalition structures.
\newblock {\em IKERLANAK working paper series IL100/16.
  \url{http://hdl.handle.net/10810/19435}}, 2016.

\bibitem[GHSZ90]{Greenberger90}
Daniel~M. Greenberger, Michael~A. Horne, Abner Shimony, and Anton Zeilinger.
\newblock {B}ell's theorem without inequalities.
\newblock {\em American Journal of Physics}, 58(12):1131--1143, 1990.

\bibitem[GZK08]{Guo08}
Hong Guo, Juheng Zhang, and Gary~J. Koehler.
\newblock A survey of quantum games.
\newblock {\em Decision Support Systems}, 46(1):318 -- 332, 2008.

\bibitem[HS89]{hart1989existence}
Sergiu Hart and David Schmeidler.
\newblock Existence of correlated equilibria.
\newblock {\em Mathematics of Operations Research}, 14(1):18--25, 1989.

\bibitem[JPPG{\etalchar{+}}10]{Junge10}
Marius Junge, Carlos Palazuelos, David P\'{e}rez-Garc\'{i}a, Ignacio
  Villanueva, and Michael~M. Wolf.
\newblock Unbounded violations of bipartite {B}ell inequalities via operator
  space theory.
\newblock {\em Communications in Mathematical Physics}, 300(3):715--739, 2010.

\bibitem[KRR14]{KRR}
Yael~Tauman Kalai, Ran Raz, and Ron~D. Rothblum.
\newblock How to delegate computations: The power of no-signaling proofs.
\newblock In {\em Proceedings of the 46th Annual ACM Symposium on Theory of
  Computing}, STOC '14, pages 485--494, 2014.

\bibitem[LH64]{lemke1964equilibrium}
Carlton~E. Lemke and Joseph~T. Howson.
\newblock Equilibrium points of bimatrix games.
\newblock {\em Journal of the Society for Industrial \& Applied Mathematics},
  12(2):413--423, 1964.

\bibitem[Liu15]{Liu}
Qingmin Liu.
\newblock Correlation and common priors in games with incomplete information.
\newblock {\em Journal of Economic Theory}, 157:49 -- 75, 2015.

\bibitem[LM05]{LaMura03}
Pierfrancesco La~Mura.
\newblock Correlated equilibria of classical strategic games with quantum
  signals.
\newblock {\em International Journal of Quantum Information}, 03(01):183--188,
  2005.

\bibitem[LRS10]{Lehrer09}
Ehud Lehrer, Dinah Rosenberg, and Eran Shmaya.
\newblock Signaling and mediation in games with common interests.
\newblock {\em Games and Economic Behavior}, 68(2):670--682, 2010.

\bibitem[LRS13]{lehrer2013garbling}
Ehud Lehrer, Dinah Rosenberg, and Eran Shmaya.
\newblock Garbling of signals and outcome equivalence.
\newblock {\em Games and Economic Behavior}, 81:179--191, 2013.

\bibitem[LW15]{lancien2015parallel}
C{\'e}cilia Lancien and Andreas Winter.
\newblock Parallel repetition and concentration for (sub-) no-signalling games
  via a flexible constrained de {F}inetti reduction.
\newblock {\em arXiv preprint arXiv:1506.07002}, 2015.

\bibitem[MAG06]{masanes}
Lluis Masanes, Antonio Acin, and Nicolas Gisin.
\newblock General properties of nonsignaling theories.
\newblock {\em Physical Review A}, 73(1):012112, 2006.

\bibitem[Mer90]{mermin1990}
N.~David Mermin.
\newblock Extreme quantum entanglement in a superposition of macroscopically
  distinct states.
\newblock {\em Phys. Rev. Lett.}, 65:1838--1840, Oct 1990.

\bibitem[Mye91]{opac-b1123911}
Roger~B. Myerson.
\newblock {\em Game theory : analysis of conflict}.
\newblock Harvard university press, 1997, Cambridge (Mass.), London, 1991.

\bibitem[Nas50]{Nash}
John~F. Nash.
\newblock Equilibrium points in n-person games.
\newblock {\em Proceedings of the National Academy of Sciences}, 36(1):48--49,
  1950.

\bibitem[NC00]{nielsen2000quantum}
Michael~A. Nielsen and Isaac~L Chuang.
\newblock {\em Quantum computation and quantum information}.
\newblock Cambridge university press, 2000.

\bibitem[NRTV07]{NisaRougTardVazi07}
Noam Nisan, Tim Roughgarden, \'Eva Tardos, and Vijay~V. Vazirani.
\newblock {\em Algorithmic Game Theory}.
\newblock Cambridge University Press, New York, NY, USA, 2007.

\bibitem[Pal15]{palazuelos2015random}
Carlos Palazuelos.
\newblock Random constructions in {B}ell inequalities: A survey.
\newblock {\em arXiv preprint arXiv:1502.02175}, 2015.

\bibitem[PKL{\etalchar{+}}15]{Pappa15}
Anna Pappa, Niraj Kumar, Thomas Lawson, Miklos Santha, Shengyu Zhang, Eleni
  Diamanti, and Iordanis Kerenidis.
\newblock Nonlocality and conflicting interest games.
\newblock {\em Phys. Rev. Lett.}, 114:020401, Jan 2015.

\bibitem[PR94]{PR}
Sandu Popescu and Daniel Rohrlich.
\newblock Quantum nonlocality as an axiom.
\newblock {\em Foundations of Physics}, 24(3):379--385, 1994.

\bibitem[PR08]{Papadimitriou08}
Christos~H. Papadimitriou and Tim Roughgarden.
\newblock Computing correlated equilibria in multi-player games.
\newblock {\em J. ACM}, 55(3):14:1--14:29, August 2008.

\bibitem[PR13]{pai2013privacy}
Mallesh~M. Pai and Aaron Roth.
\newblock Privacy and mechanism design.
\newblock {\em ACM SIGecom Exchanges}, 12(1):8--29, 2013.

\bibitem[RC66]{rapoport1966game}
Anatol Rapoport and Albert~M. Chammah.
\newblock The game of chicken.
\newblock {\em American Behavioral Scientist}, 10(3):10--28, 1966.

\bibitem[RMG{\etalchar{+}}16]{roy2016quantum}
Arup Roy, Amit Mukherjee, Tamal Guha, Sibasish Ghosh, Some~Sankar Bhattacharya,
  and Manik Banik.
\newblock Nonlocal correlations: Fair and unfair strategies in bayesian games.
\newblock {\em Phys. Rev. A}, 94:032120, Sep 2016.

\bibitem[RP17]{PhysRevA.96.042340}
Ashutosh Rai and Goutam Paul.
\newblock Strong quantum solutions in conflicting-interest bayesian games.
\newblock {\em Phys. Rev. A}, 96:042340, 2017.

\bibitem[SZY15]{situ2015quantum}
Haozhen Situ, Cai Zhang, and Fang Yu.
\newblock Quantum advice enhances social optimality in three-party conflicting
  interest games.
\newblock {\em arXiv preprint arXiv:1510.06918}, 2015.

\bibitem[Tsi80]{Tsilerson}
Boris~S. Tsirelson.
\newblock Quantum generalizations of {B}ell's inequality.
\newblock {\em Lett. Math. Phys.}, 4(2):93--100, 1980.

\bibitem[vNM44]{vNeumann-Morgenstern}
John von Neumann and Oskar Morgenstern.
\newblock {\em Theory of Games and Economic Behavior}.
\newblock Princeton University Press, 1944.

\bibitem[Wil13]{wilde2013quantum}
Mark~M. Wilde.
\newblock {\em Quantum information theory}.
\newblock Cambridge University Press, 2013.
\newblock arXiv:1106.1445.

\bibitem[Zha12]{Zhang12}
Shengyu Zhang.
\newblock Quantum strategic game theory.
\newblock In {\em Proceedings of the 3rd Innovations in Theoretical Computer
  Science Conference}, ITCS '12, pages 39--59, New York, NY, USA, 2012. ACM.

\end{thebibliography}

\appendix
\section{Further discussion}
\label{apx:discuss}

\medskip
We discuss some other properties of the classes of equilibria defined in the main text.

\paragraph{Necessity of an informed mediator.}
The correlated equilibrium class does not need that the mediator is informed about the types of players, since its advices are based on local correlations and therefore the corresponding shared random variables are independent of the types. For some cases the players
can also get rid of the mediator completely, and base their strategy on
the observation of a single independent shared random variable, such as
meteorological data.
This however suggests that all players receive the
same information as advice. There are examples of correlated
equilibria where this is not given and indeed not possible, such as the game of Chicken \cite{rapoport1966game}.
Different schemes to get rid of a mediator based on communication between players have been studied in \cite{ForgesSunspots}.

In contrast, in the case of belief-invariant and communication equilibria,
the players seem to need a trusted mediator to implement the correlation\footnote{In special cases, it is known that mediator can be replaced by distributed devices, such as \emph{cheap talk} \cite{gerardi2004unmediated}.}.
We say ``seem'' because strong experimental evidence from physics suggests
that it is possible to go beyond the local correlations without a mediator
by using quantum effects. However quantum mechanics cannot cover the whole
class of belief-invariant correlations. The correlation \eqref{eq:chsh_ns_strategy}
given below is not achievable in quantum mechanics without a mediator (as proven in \cite{Tsilerson}). Therefore, unless quantum mechanics
is falsified in the future and replaced by another theory, a mediator
is needed to implement the complete belief-invariant class. The quantum mechanical
class will be discussed later in Section \ref{sec:quantum}.

\paragraph{Privacy of the players.}
Clearly, in order to implement a correlated equilibrium, nobody else
other than player $i$ needs to learn the type $t_i$.

The class of quantum correlations can be completely realized only with the use
of quantum information processing. Such technology is developing rapidly and
it is already available to experimental physicists, as reported in \cite{Pappa15}.
Physicists often imagine ideal devices called ``non-signalling boxes'' that implement
all the class of belief-invariant correlations without revealing the types to
a mediator. However, quantum mechanics is the best-known theory to describe our
reality, and there is no known super-quantum theory that allows the existence of
the non-signalling boxes. Therefore, to the best of our knowledge of nature,
the quantum class is the best feasible way of obtaining correlations without
revealing players' types to a mediator.

The belief-invariant
class allows for more correlations at the expense that a trusted mediator
might learn something about the types. The use of a belief-invariant
correlation guarantees however that the mediator will be the only one
learning the types and no player $j$ will learn $t_i$.
It is not always possible to respect this requirement in the more general
class of communication equilibria.

Note that our concept of privacy in correlated and belief-invariant equilibria
is much stronger than the well-known concept of differential privacy~\cite{pai2013privacy}.
Indeed, we say that an equilibrium is private in an information-theoretic sense
and assume that each player cannot obtain any new information about the other players while playing the game;
differential privacy, instead, assumes only that each player cannot obtain more than epsilon information, for small and positive epsilon.
Moreover, differential privacy usually guarantees privacy only when the number of players is large,
while our privacy concept applies to any number of players.

\paragraph{Computational complexity.}
The equilibrium concept and its variations discussed so far are
useful to understand the behaviour of the players.
A fundamental question (see for example \cite{Papadimitriou08}) is how
one could calculate such an equilibrium or even just verify that a given
set of strategies is an equilibrium, or on the other hand find
an equilibrium that optimizes some other parameter, such as a
social payoff.
Not that much is known for the class of equilibria we just discussed. Below we
mention the results we are aware of, and we leave as an interesting open
problem to complete the picture.
Note that, except where otherwise specified, the computational complexity
will be with respect to the size of the $n$-player incomplete information \emph{game specification},
that consists of a list of probabilities $P(\tv)$, for $\tv \in T$,
and a list of the payoff function values $v_i(\tv,\av)$,
one for each player $i=1,\ldots,n$, each $\tv\in T$, and each $\av\in A$.%
\footnote{Note that the size of the game specification is exponential in the number of players.
For this reason, many computational complexity results have been given only for games
that are succinctly representable, that is games that can be fully specified by a number of parameters that
is polynomial in the number of players, types and actions.}

\medskip
{\bf Nash equilibria} of complete information games are hard to find:
in fact, it is known that the problem is PPAD-hard even for two-player games \cite{daskalakis2009complexity,Chen:2009}.
Since games of incomplete information contain complete-information games as a special case, they are at least as hard.
On the other hand, from Definition~\ref{def:communication-simpler}, it turns out that
it is possible to check in polynomial time whether a solution $(\id, \id, Q)$ in canonical form is a Nash equilibrium
whenever the number of actions available to players is bounded.

\medskip
{\bf Correlated equilibria} of complete information games can be found in time that is polynomial
in the size of the game specification\footnote{The game specification of complete information game is exactly the specification of a game of incomplete information with $T$ being a singleton.} through linear programming~\cite{hart1989existence}.
Recall that a game of incomplete information can be modelled as a game of complete information where the players'
strategies are $A_i^{T_i} = \{g_i\colon T_i\rightarrow A_i\}$,
i.e., all possible functions from $T_i$ to $A_i$.
However, this not only disregards the special nature of the payoff functions that depend
only on input-output pairs of the strategy, but exponentially increases the size of the game specification.
Thus, the above result does not extend, and thus we do not know
whether it is possible to find correlated equilibria in polynomial time.

Still, optimal correlated equilibria of incomplete information games are hard to find
(they belong to the complexity class of NP-hard problems) even for full coordination games,
with respect to the size of the game specification.
This can be proven by embedding notoriously hard problems (like the chromatic number of a graph)
into cooperative games (see, for example, the game used in \cite{cameron2007quantum}.

\medskip 
{\bf Quantum equilibria} of complete ad incomplete information games are easy hard to find, 
because of the a connection with multi-prover interactive proofs. 
However, verifying that  $(M^{\tv},\rho)$ is a quantum correlated equilibrium can be
done via semidefinite programming. As argued in \cite{Pappa15} one can fix the
other players' strategies and check that for each type $t_i$ the optimal
strategy of player $i$ is $M^{t_i}$.
Since this must be done for each type profile, the running time is polynomial
in the size of the description of the supposed equilibrium, 
i.e., the dimension of the matrices describing the measurement and the quantum state.

\medskip
{\bf Belief-invariant equilibria} of full coordination games of incomplete
information can be instead found in time that is polynomial in the size of game description
via linear programming. This is because the set of non-signalling correlations is
defined by polynomially many non-negative variables subject to polynomially many
linear inequalities.
(See, for example, the LP in \cite[page 8]{buhrman2014parallel}.)
We do not know yet if this extends to conflict of interest games, and this
is one of the major open problems in the present theory. Our intuition is
that the belief-invariant equilibria of games of incomplete information
are the ``right'' analogue of correlated equilibria of games of complete
information; from this perspective one might speculate that belief-invariant
equilibria can be found efficiently, such as~\cite{hart1989existence}.

\section{Potential applications}
\label{sec:app}
In the previous sections we introduced the concepts of correlated and belief-invariant equilibria,
and briefly highlighted some of their properties, mainly with respect to privacy, computational complexity and social welfare maximization.
We saw that very little is known about these equilibria, and thus these equilibrium concepts can stimulate theoretical research along these and other directions.

However, our interest in these equilibrium concepts is not only of theoretical nature.
Indeed, it turns out that privacy and high social welfare are desirable properties in many real world settings.
For these settings, implementing a good correlated or a good belief-invariant equilibrium can help the players
to reach better equilibria guaranteeing information-theoretic privacy.
We list below examples of real-world settings in which the concepts of correlated and belief-invariant equilibria can be relevant.

\paragraph{Trade secrets in markets.}
Suppose two or more companies are in competition for a share of the market (e.g., Coke and Pepsi, or Microsoft and Apple).
Each company is trying to introduce a new product, and the features of the products are trade secrets.
They have an incentive to cooperate in order to minimize the production costs but, at the same time, they have an incentive \emph{not} to cooperate, as the negotiation might expose their secrets.
In such a setting, if there were a trusted third party able to implement a correlated or a belief-invariant equilibrium,
this would not longer be an issue.

Note that this simple example can be generalized to every setting
in which players' payoff are affected by the ability of ``guessing the other players' types''.

\paragraph{Advertising.}
A typical setting in which the ability of guessing the other players' type is of huge importance is advertising (see, for example, \cite{CummingsLPR15}).
In such a setting an advertiser has a product to advertise, whose absolute quality is unknown to the potential users.
The advertiser can adopt different advertising strategies (e.g., viral advertising, commercials, newspaper ads, web ads),
whose success depends on the features of the subject to which the advertisement is aimed.
On the other side, users would like to receive ads only for high-quality products, therefore they do not like to reveal their interests.

Hence, as above, it would be useful both for advertiser and for users to correlate their actions,
so that the advertisers would be able to make more successful advertising campaign
and the users could receive more ads for high-quality products than for low-quality ones.
However, because of the privacy issues discussed above,
this correlation may be effectively implemented only if there is a trusted mediator
that is able to find a good correlated or belief-invariant equilibrium.

\paragraph{Coalition formation}
In coalition formation problems, $n$ players want to arrange coalitions, \textit{i.e.}, a partition of the set of players. (As a simple example, one can imagine children organizing the two teams before a friendly football match.) Informally, a coalition formation setting relevant to us is as follows: each player has a list of desirable allies, and communicates to a mediator a ordered list of preferred subsets of the players in which they would like to be included. The mediator, then, announces the partition trying to create an equilibrium that maximizes player's happiness with the choice.
In such a setting, a belief-invariant correlated equilibrium could help designing solutions where the players' real preferences are kept secret. For recent developments on coalition formation, see
\cite{Gallo16} and the references therein.

\paragraph{Network congestion.}
Another example in which correlated and belief-invariant equilibria may be useful is for network congestion.
Suppose that the routes taken by people driving during rush hours were correlated in some way. (An example could be a GPS device or smartphone application, on which people select their starting and destination points and receive a suggested route.)
Can this correlation reduce the congestion of the network?
As we show next, not only this is possible,
but this correlation can also be implemented privately,
so that the suggestion does not reveal the sources or the destinations
of other players.
Note that this may be required in order to avoid privacy leakages.

Specifically, we show an application of belief-invariant equilibria in \emph{network congestion games (aka selfish routing)}.
Here, a network is modelled as a labelled graph $G$, defined by a set $V$ of vertices,
a set of edges $E \subset V\times V$ and for each $(u,v) \in E$
a cost function $c_{uv} \colon \mathbb{N} \rightarrow \mathbb{R}$.
A network congestion game goes as follows. Each player is associated with a source and a target node
and they have to decide the route to take.
The strategy set for a player contains all possible paths from source to target,
and the utility for a player is \textit{minus} the sum of the cost functions in the edges of the chosen path.
The edges get ``congested'' as a function of the number of players using them, i.e.,
if $x$ players choose edge $(u,v)$, then each of them faces cost $c_{uv}(x)$.

We consider an ``incomplete information'' version of network congestion games,
where the source-targets are decided by the players' types.
Then, by tweaking the {\tt CHSH} game we can exhibit an instance of selfish routing
that demonstrate that belief-invariant equilibria can help reducing the social cost.Our example is illustrated in Figure \ref{fig:CHSH_network}.

\begin{figure}[htb]
\begin{center}
\begin{tikzpicture}
\tikzstyle{new}=[circle,fill=White,draw=Black,minimum width=22pt]
		\node [style=new] (0) at (0, 0) {$s_1$};
		\node [style=new] (1) at (0, 2) {$s_0$};
		\node [style=new] (2) at (3, 0) { $d$ };
		\node [style=new] (3) at (3, 2) { $u$};
		\node [style=new] (4) at (5, 1) { $t$ };
		\draw (0) to node[above, near start ]{$x$} (2);
		\draw (0) to node[above, near start ]{$x$} (3);
		\draw (1) to node[above, near start ]{$1$} (3);
		\draw (1) to node[above, near start ]{$1$} (2);
		\draw (2) to node[above, near start ]{$1/x$} (4);
		\draw (3) to node[above]{$1/x$} (4);
\end{tikzpicture}
\caption{A network congestion game based on {\tt CHSH}. The edge labels are the cost functions of the edges, where $x$ is the number of players using that edge.}
\label{fig:CHSH_network}
\end{center}
\end{figure}

The game goes as follows: there are two players, with binary types selected uniformly at random.
The source-target pairs are chosen as follows:
\begin{itemize}
\item if $t_i = 0$ then Player $i$ starts at vertex $s_0$ and wants to reach $t$,
\item if $t_i = 1$ then Player $i$ starts at vertex $s_1$ and wants to reach $t$.
\end{itemize}

\newcommand{\up}{\mathtt{UP}}
\newcommand{\down}{\mathtt{DOWN}}

The strategies can be summarized in two meaningful choices: from his source point, a player can decide to go \texttt{UP} towards the vertex $u$ or \texttt{DOWN} towards the vertex $d$ and from there take final step towards $t$.

\renewcommand{\gamestretch}{1.5}
\begin{figure}[htb]
\centering
\begin{game}{2}{2}[$t_1\cdot t_2=0$]
    & \texttt{UP}    & \texttt{DOWN} \\
\texttt{UP}   &$\frac{3}{2}$,$\frac{3}{2}$   &2,2\\
\texttt{DOWN}   &2,2   &$\frac{3}{2}$,$\frac{3}{2}$
\end{game}
\qquad \qquad
\begin{game}{2}{2}[$t_1 \cdot t_2=1$]
    & \texttt{UP}    & \texttt{DOWN} \\
\texttt{UP}   &$\frac{5}{2}$,$\frac{5}{2}$    &$\frac{3}{2}$,$\frac{3}{2}$\\
\texttt{DOWN}   &$\frac{3}{2}$,$\frac{3}{2}$    &$\frac{5}{2}$,$\frac{5}{2}$
\end{game}
\caption{Cost matrix for the network congestion {\tt CHSH} game}
\label{fig:CHSH_network_payoffs}
\end{figure}
\renewcommand{\gamestretch}{1}

The cost table is in Figure \ref{fig:CHSH_network_payoffs}.
It is clear from the table that the situation is similar to Figure~\ref{fig:CHSH}.
Players with pure strategies can reach an equilibrium by choosing $(\up,\up)$ or $(\down,\down)$ regardless of their types,
and have an expected cost of $\frac{3}{4}\cdot \frac{3}{2} + \frac{1}{4} \cdot 2 = \frac{13}{8}$.
A belief-invariant correlation as in \eqref{eq:chsh_ns_strategy} gives an equilibrium with expected cost of $\frac{3}{2}$..

Note that one can introduce conflict of interest into the game above by modifying the network in Figure \ref{fig:CHSH_network} as follows:
\begin{figure}[htb]
\centering
\begin{tikzpicture}
\tikzstyle{new}=[circle,fill=White,draw=Black,minimum width=22pt]
		\node [style=new] (0) at (0, 0) {$s_1$};
		\node [style=new] (1) at (0, 2) {$s_0$};
		\node [style=new] (2) at (3, 0) { $d$ };
		\node [style=new] (3) at (3, 2) { $u$};
		\node [style=new] (4) at (5, 0) { $t'$ };
		\node [style=new] (5) at (5, 2) { $t''$ };
		\draw (0) to node[above, near start ]{$x$} (2);
		\draw (0) to node[above, near start ]{$x$} (3);
		\draw (1) to node[above, near start ]{$1$} (3);
		\draw (1) to node[above, near start ]{$1$} (2);
		\draw (3) to node[above, near start ]{$1/x$} (4);
		\draw (2) to node[below=0.15cm, near start ]{$1/x$} (5);
		\draw (4) to node[right]{$\eps$} (5);
\end{tikzpicture}
\caption{A network congestion game based on {\tt CHSH}. The edge labels are the cost functions of the edges, where $x$ is the number of players using that edge.}
\label{fig:CHSH_network2}
\end{figure}

In this graph, whatever the types are, always assign Player 1 to target $t'$ and Player 2 to target $t''$.
Now the fist player will prefer the strategy $\up$ and the second player will prefer the strategy $\down$,
in a situation similar to the game in Figure~\ref{fig:CHSH2}.

\end{document}